\documentclass[12pt]{article}

\usepackage[margin=1in]{geometry}
\usepackage{lmodern}

\usepackage{amsmath,amssymb,amsfonts,dsfont}

\usepackage{graphicx}
\usepackage{float}
\usepackage{booktabs}
\usepackage{array}
\usepackage{multirow}
\usepackage{bigstrut}
\usepackage{longtable}
\usepackage{threeparttable}
\usepackage{caption}
\usepackage{subcaption} 
\usepackage{pdflscape}
\usepackage{tikz}
\usetikzlibrary{positioning, arrows.meta}

\usepackage[dvipsnames]{xcolor}
\usepackage{hyperref}
\hypersetup{
  colorlinks=true,
  citecolor=blue,
  urlcolor=blue,
  linkcolor=blue
}

\usepackage[authoryear,longnamesfirst,round]{natbib}
\usepackage{amsthm}
\newtheorem{theorem}{Theorem}
\newtheorem{corollary}{Corollary}

\newtheorem{assumption}{Assumption}
\newtheorem{definition}{Definition}
\newtheorem{proposition}{Proposition}
\newtheorem{remark}{Remark}
\newtheorem{example}{Example}

\usepackage{algorithm}
\usepackage{algpseudocode}

\usepackage{tikz-cd}
\usepackage{afterpage}
\usepackage{comment}
\usepackage{footmisc}
\usepackage{chngcntr}

\title{Better Measurement or Larger Samples?\\ Data Collection for Policy Learning with Unobserved Heterogeneity}
\author{Giacomo Opocher\thanks{\scriptsize University of Bologna, giacomo.opocher2@unibo.it. This paper previously circulated with the title "Policy Learning with Unobserved Heterogeneity". It greatly benefited from the guidance of Silvia Sarpietro and Davide Viviano, and meaningful discussions with Isaiah Andrews, Pietro Biroli, Marc Clos, Toru Kitagawa, Nicola Mastrorocco, Andrea Mattozzi, Kirill Ponomarev, Christoph Rothe, Rahul Singh, and Giulio Zanella. I also thank all seminar participants at the University of Bologna, ETH Zurich, the CEPR Job Market Bootcamp, Brown University, Harvard University, University of Mannheim, University of Bonn, and OCIS for their insightful comments. All mistakes are my own.}}
\date{\today}

\begin{document}
{\maketitle
\vspace{-4.2em}
\thispagestyle{empty}
\begin{abstract} \textbf{Abstract.}
    Empirical research shows that individuals' responses to treatments vary along latent characteristics, such as innate ability or motivation. 
    Therefore, a policymaker seeking to maximize welfare may consider designing policies based on observed characteristics and estimated latent traits. 
    I characterize how the estimates' precision affects the worst-case performance of policies deriving rate-sharp regret bounds for assignment rules that include or exclude them, highlighting new trade-offs with the policy space complexity. 
    I then study how a policymaker can solve such trade-offs by designing tailored data collections and derive a sufficient condition for a collection plan to be minimax optimal. 
    In an empirical application in development economics, I show that including a proxy for entrepreneurs' business skills in targeting cash transfers increases welfare by $5\%$, and halves the probability of generating welfare losses. 
    Moreover, I estimate the optimal allocation of resources between improving the precision of the proxy via repeated measurements, and increasing sample size. \\ 
\noindent\textbf{Keywords:} Policy learning, Unobserved heterogeneity, Data collection.  \\
\end{abstract}}

\section{Introduction}
Governments and institutions increasingly rely on individualized treatment rules to allocate interventions in heterogeneous populations. From targeting cash transfers to assigning job training, the goal is to identify subgroups that benefit most from a given policy, based on observable characteristics. 
Recent advances in policy learning formalize this task as the problem of estimating assignment rules that maximize expected welfare, using experimental or observational data \citep[e.g.][]{kitagawa_who_2018,athey_policy_2021}.

A large body of empirical and theoretical research highlights that individuals' responses to treatments may depend not only on covariates such as age or income, but also on latent characteristics such as motivation, prior experience, or ability.\footnote{e.g. \cite{heckman_policy-relevant_2001, heckman_structural_2005}.} 
In structural econometric settings, these unobservables are often modeled through fixed effects or individual-specific components, which can be estimated under repeated observations or panel structures.\footnote{e.g. \cite{Wooldridge_2005, shoser_2020}.} 
Alternatively, applied researchers measure proxies of the unobserved factors and consider treatment effect variation along their values. 
For example, performance indicators have been used as proxies for workers' skill level to assess the impact of new technologies on workers' productivity;\footnote{e.g. \cite{qje_AI}.} community ratings and psychometric measures of business skills have been used to target resources to high-growth microentrepreneurs in developing countries.\footnote{e.g. \cite{hussam_2022, bryan_2024}.} 

As a result, a policymaker interested in maximizing social welfare may decide to assign policies based on the estimated values of these relevant latent traits. This decision problem raises two questions. 

First, under what conditions is leveraging such a source of information to assign treatments welfare-improving? \\ 
To shed light on this question, I show that the proxy's measurement error propagates into the decision problem. 
Therefore, for its inclusion to improve worst-case performance, the variation in treatment effects explained by the underlying latent factor must outweigh (i) the additional estimation error introduced and (ii) the increase in policy space complexity. 
To study this trade-off formally, I derive rate-sharp regret bounds for rules that ignore unobserved heterogeneity (Covariate-Based rules) and rules that acknowledge its presence by including the estimate, or proxy ($\hat{a}$-Augmented rules). 
This comparison is delivered by a simple theoretical innovation. I define regret as the expected welfare loss of any estimated rule relative to an oracle that observes the true latent factor. 
This provides a common benchmark across policy classes and makes the comparison between the two classes of interest meaningful.

Because the proxy's estimation error affects the policy's worst-case performance, the policymaker may consider to invest in its precision, for instance refining measurement, designing incentive-compatible elicitation mechanisms, or collecting richer datasets to train predictive models.\footnote{Examples include (i.) acquiring satellite images at higher resolution \citep{satellite_data}, or repeating measurement \citep{hussam_2022}; (ii.) designing a Becker-DeGroot-Marschak meachanism \citep{BDM_method}; (iii.) collecting data along the long dimension of a panel dataset when estimating $A_i$ with a fixed or random effects model.}
However, under a finite budget, such an investment implies a smaller sample size to learn the optimal policy, leading to a higher welfare loss due to the increase in policy space complexity. 

This tension raises the second question. How much should the policymaker invest in the proxy's precision relative to sample size to maximize the policy's performance? \\
I study the design of data collections for policy learning when the policymaker faces a fixed budget. 
I show that when latent heterogeneity in treatment effects and returns to investment in the proxy's precision are sufficiently high, it is optimal to devote resources to the measurement (or estimation) of the latent factor. By contrast, when it is too costly to improve on the proxy's precision, or its relevance is limited, it is optimal to allocate the budget to enlarging the policy-learning sample and to rely on treatment rules based only on standard covariates. 
I leverage the regret bounds to derive the threshold conditions that separate these cases yielding a sufficient condition for minimax optimal budget allocation.

In line with the econometric literature in policy learning, I adopt the minimax approach to provide theoretical guarantees on regret \citep[see e.g.][]{manski_statistical_2004}, and derive optimal data collection plans \citep[see e.g.][]{epanomeritakis2025, breza2025}. 
To provide practical guidance for applied researchers that do not adopt a minimax perspective, I also propose two sample-splitting procedures that can be implemented in a given empirical setting to provide evidence on: (i) the ranking of treatment rules that ignore or incorporate unobserved heterogeneity; and (ii) how to scale up data collections optimally by allocating resources between measuring (or estimating) the proxy and increasing the sample used to learn the policy.

I apply these new procedures to the context studied in \cite{hussam_2022}. 
The authors conduct a cash transfer randomized controlled trial in rural India and present a new proxy for micro-entrepreneurs' business skills based on the rankings entrepreneurs give to each other. 
They call this measure community rankings. 
The main result they report is that community rankings improve targeting of cash transfers. 
First, I confirm the original result by showing that it indeed increases average welfare by $3\%$, and reduces the probability of producing welfare losses by a third compared to scaling up the intervention using only covariates. 
The proxy was based on the average assessment of five separate rankers. 
This feature allows me to report two other key findings. First, I ignore the collection cost and show that the welfare gain would have been substantially smaller, had the number of rankers been lower while keeping fixed sample size. 
Second, I pretend the data from the study were used as a pilot to guide bigger data collections and I estimate the optimal allocation of finite budgets between the number of rankers and sample size of the RCT. I show that for limited budgets it is optimal to select two rankers instead of five in favor of sample size.

The rest of the paper is organized as follows. In section \ref{sec:literature}, I review the related literature and describe the main contribution of this paper; in section \ref{sec:setting}, I introduce the formal setting, definitions, and main assumptions; in section \ref{sec:regret_bounds}, I derive the regret bounds for Covariate-Based, and $\hat{a}$-Augmented rules; in section \ref{sec:data_collection}, I study the data collection problem; in section \ref{sec:emp_application}, I present the empirical application; section \ref{sec:conclusion} concludes.

\subsection{Contribution to the Literature}\label{sec:literature}
This paper contributes to the literature on policy learning connecting new regret bounds for policy rules that include or ignore unobserved heterogeneity in treatment effects to the design of minimax optimal data collection plans. 
This connection is made possible by a new definition of regret that fixes as a benchmark for all classes an oracle that directly observes the latent factor and has complete knowledge of the causal structure underlying the data. 
This simple theoretical innovation allows one to derive non-trivial rate-sharp regret bounds for both classes and to reduce the data collection problem to a tractable budget allocation problem between competing objectives.
These theoretical results come with practical, data-driven procedures to rank policy rules that ignore or incorporate unobserved heterogeneity and estimate the optimal allocation of budget between measuring (or estimating) latent factors more accurately and increasing sample size. To the best of my knowledge, this is the first paper that combines (i) the policy learning problem when policy-relevant variables are estimated or observed with error, with (ii) the resulting trade-offs involved in designing data collections. 

The problem of learning optimal treatment assignment rules has attracted attention in economics, statistics, and machine learning. 
Foundational work by \cite{manski_statistical_2004} framed the problem of treatment choice as an empirical risk minimization problem, considering regret as a key evaluation metric. \cite{kitagawa_who_2018} formalized empirical welfare maximization as a framework for optimizing treatment rules with controlled complexity, deriving minimax regret bounds for policy classes with finite complexity. 
\cite{athey_policy_2021} extended this framework by focusing on observational studies. 
More recent contributions explore extensions beyond the standard approach: \cite{viviano_fair_2024} and \cite{kitagawa_equality-minded_2021} formalize notions of fairness and equality in policy learning; \cite{viviano_policy_2024} studies treatment assignment under network interference; \cite{kitagawa2025} studies the case in which the set of covariates that is relevant in explaining treatment effects heterogeneity is wider than the set used for targeting. 
One closely related paper is \cite{mbakop_model_2021}. It proposes the Penalized Welfare Maximization (PWM) framework, which addresses model selection in treatment choice by penalizing policy complexity. 
The main similarity relates to the formulation of the problem: both papers consider the problem of optimally selecting the set of policy-relevant variables. 
However, PWM's guarantees would not apply trivially to the context of unobserved heterogeneity, as it does not explicitly consider noise propagation in the decision problem, which is the main focus of the present work. Moreover, they do not frame the data collection problem or study the trade-offs involved in it. 

The econometric literature has long recognized that treatment effect heterogeneity often arises from unobserved factors. 
Seminal work by \citet{heckman_policy-relevant_2001,heckman_structural_2005} introduced the concept of essential heterogeneity and the marginal treatment effect (MTE), showing how unobserved traits influence both treatment selection and gains. 
This framework highlights that ignoring latent heterogeneity can bias causal inference and limit the effectiveness of policy rules. 
Building on these insights, a large body of work has focused on the identification and estimation of treatment effects under limited exogeneity.\footnote{For instance, \cite{abadie_instrumental_2002}, and \cite{chernozhukov_iv_2005} develop IV-based methods for estimating heterogeneous effects, while \cite{frolich_unconditional_2013}, and \cite{dhaultfoeuille_identification_2015} extend these approaches to continuous treatments and nonparametric settings.} 
Recent work has begun to explore policy learning under unobserved confounding. \cite{kallus_confounding-robust_2018} proposes minimax regret bounds that hedge against hidden bias, while \cite{cui_semiparametric_2021} adapts instrumental variables methods to estimate optimal treatment rules. 
Proximal causal inference approaches \citep[see][for a review]{tchetgen_introduction_2024} use proxies to adjust for unobserved confounders. 
This paper takes a different perspective. 
I show that even when standard identification issues from unobserved heterogeneity, such as differential compliance, selection into treatment assignment, or spillovers, are not present, an important theoretical trade-off emerges from the fact that relevant unobserved traits need to be estimated or measured with error. Finally, none of these papers studies the data collection problem.

This paper is also related to the long standing econometric literature on measurement error \citep[e.g.][]{BOUND20013705, decon_repeated_2008, BATTISTIN2014707, deaner_2023, rahul_corrupted_data}. 
Foundational surveys such as \citet{nonlinear_review_2011} and \citet{review_mesurement_error} review how mismeasured covariates affect identification and estimation in linear and nonlinear models, and emphasize the role of auxiliary information such as validation data, repeated measurements, and instrumental variables. 
In particular, the model considered in the present paper can be classified as \textit{weakly}-classical measurement error model \citep[see Page 343,][]{review_mesurement_error} as I allow for the proxy to be biased and for the distribution of measurement error to vary with covariates, but I do not allow for it to vary with the true value of the latent factor. 
The present paper differs from this literature in both object and question. Rather than studying the bias induced by measurement error in the estimation of treatment effects, I take identification of welfare-relevant treatment effects as given and analyze how proxy noise propagates into the policy-learning decision problem. 
This shifts attention from identification and bias correction to regret bounds, minimax dominance across policy classes, and the resulting trade-off between improving proxy precision and increasing the sample size used to learn the policy.

Finally, this paper contributes to an emerging econometric literature on data collection problems and experimental design \citep[e.g.][]{manski_2017, gechter2024selectingexperimentalsitesexternal, epanomeritakis2025,breza2025} by formalizing the problem of designing data collection plans tailored to the problem of learning optimal policies when unobserved heterogeneity is policy-relevant.

\section{Formal Setting, Definitions, and Main Assumptions}\label{sec:setting}

\paragraph{Data Generating Process.} 
Consider the random vector $(X_i,A_i)$, with $(x,a) \in \mathcal{X}\times\mathcal{A}$, $\mathcal{X} \subseteq \mathbb{R}^d$, and $\mathcal{A} \subseteq \mathbb{R}$. 

Define $\mathcal{D} = \{0,1\}$ a binary treatment and $D_{i} \in \mathcal{D}$ the treatment indicator. Consider the outcome $Y_i$ and denote with $(Y_i(0), Y_i(1)) \sim P_Y$ the potential outcomes in case $D_{i} = 0$ or $1$ respectively. Define the treatment effect $\tau_i:=Y_i(1)- Y_i(0)$.  Denote with $Y_i = (1-D_i)\cdot Y_i(0) + D_i\cdot Y_i(1)$ the observed potential outcome. We observe one realization of $(Y_i, X_i,D_i)\sim_{\text{i.i.d.}}P_{Y,X,D}\in \mathcal{P}$ for all $i \in S_n$ where $S_n$ is a random sample of $n$ units. 

We also observe a proxy, or estimate of $A_i$, $\hat{A}_i$ that takes values $\hat{a} \in \hat{\mathcal{A}}$. This can be a direct measurement with error, or a data-dependent estimate.

\paragraph{Policy Rules and Policy Classes.}
A policy rule is a function that maps a general set of characteristics $Z_i$ into the target set: $G: \mathcal{Z} \rightarrow \{0,1\}$.

I define as Covariate-Based (CB) the rules that consider only the values of observed covariates to identify targets: \( G(x): \mathcal{X}\rightarrow \{0,1\} \), as $a$-Augmented (${a}$-CB for later reference) rules, the rules that also include unobserved variables: \( G(x,a) : \mathcal{X} \times \mathcal{A} \rightarrow \{0,1\} \), and as feasible $a$-Augmented rules ($\hat{a}$-CB for later reference), rules that leverage observed covariates and estimates of unobserved variables: \( G(x,\hat{a}) : \mathcal{X} \times \hat{\mathcal{A}} \rightarrow \{0,1\} \).  Let \( \mathcal{G}_x \), \( \mathcal{G}_{x,a} \), and \( \mathcal{G}_{x,\hat{a}} \) denote the respective policy classes defined as collections of rules. I indicate with $\mathcal{G}_z = \{G(z)\}$ the class of policy rules that belong to any of the three types described above. I denote with $v_z = \text{VC}(\mathcal{G}_z)$ the VC-dimension of the class $\mathcal{G}_z$. 

We restrict our attention to the classes of parametric policies defined as:
\begin{equation}
    \mathcal{G}_{z}^\theta:=\{G_\theta(z):= \mathbf{1}\{s_\theta(z)\geq 0\}\}
\end{equation}
where $\theta \in \Theta_z$ and $s: \mathcal{Z} \rightarrow \mathbb{R}$. 

\begin{remark}
    All classes considered in Examples 2.1, 2.2, 2.3 \citep{kitagawa_who_2018}, Examples 2.2 and 2.3 \citep{mbakop_model_2021}, and the examples provided in section 2.2 \citep{athey_policy_2021} fit inside this class.
\end{remark}

Moreover, define the conditional average treatment effect function:
\begin{equation}
    \tau(z): \mathcal{Z} \rightarrow \mathcal{Y}\quad \text{such that}\quad \tau(z) = \mathbb{E}_P[\tau_i | Z_i=z]
\end{equation}
And the first best rule:
\begin{equation}
    {G}^{FB}(z) := \mathbf{1}\{\tau(z) \geq 0\} 
\end{equation}

\paragraph{Welfare.}
Population welfare is defined as:
\begin{equation}
    W(G_\theta(Z_i)) := \mathbb{E}_P\left[ Y_i(1) \cdot G_\theta(Z_i) + Y_i(0) \cdot (1-G_\theta(Z_i)) \right]
\end{equation}

The best-in-class rule is defined as the rule that directly maximizes population welfare. Formally,
\begin{equation} \label{eq:oracle_rule}
    G_{\theta}^*(Z_i) := \arg\max_{G(Z_i) \in \mathcal{G}^\theta_z} W(G(Z_i))
\end{equation}
We cannot solve this problem directly because we observe only a random sample of the population of interest and we lack knowledge of the causal law underlying $(Y_i(0),Y_i(1))$. Therefore, following \cite{kitagawa_who_2018}, we rely on its empirical analog and estimate the empirical optimal rule:
\begin{equation} \label{eq:ewm}
    \hat{G}_\theta(Z_i) := \arg\max_{G(Z_i) \in \mathcal{G}^\theta_z} \left\{ W_n(G(Z_i)) := \frac{1}{n} \sum_{i=1}^n \left[ \frac{Y_i D_i}{e(Z_i)} \cdot G(Z_i) + \frac{Y_i (1 - D_i)}{1 - e(Z_i)} \cdot (1-G(Z_i)) \right] \right\}
\end{equation}
where $e(Z_i)$ is the propensity score given $Z_i$. 

We evaluate the performance of estimated treatment rules in comparison with an oracle that observes both the values of $X_i$ and $A_i$:
\begin{equation}\label{eq:regret}
    R(\hat{G}_\theta(Z_i)) := \mathbb{E}_{P^n}[W(G_{\theta}^*(X_i,A_i)) - W(\hat{G}_\theta(Z_i))]
\end{equation}

\begin{remark}
    \cite{kitagawa_who_2018} and subsequent literature define regret within class:
    \begin{equation}
        \mathbb{E}_{P^n}[W(G_{\theta}^*(Z_i)) - W(\hat{G}_\theta(Z_i))]
    \end{equation}
    The new definition of regret in \eqref{eq:regret} is necessary to compare different classes to the same benchmark. Moreover, it is the natural benchmark when deriving optimal data collection plans: with infinite collection effort we could (i) directly observe $A_i$ by investing an infinite amount on the measurement (or estimation) of $A_i$ and (ii) directly compute $G_\theta^*(X_i,A_i)$ by investing in a sample $S_n$ of infinite size. By contrast, with finite collection effort we need to choose how to allocate budget between these two competing objectives.
\end{remark}

\subsection{Main Assumptions} 
The main assumptions can be divided into assumptions on the data generating process (Assumption \ref{ass:kt18}), on the generating process of $\hat{A}_i$ (Assumption \ref{ass:proxy}), and on the policy space (Assumption \ref{ass:score_funct}).
\begin{assumption}[Data generating process] \label{ass:kt18} $ $
\begin{enumerate}
    \item \textbf{Bounded Outcomes} - There exists $M<\infty$ such that the support of the outcome variable $\mathcal{Y} \subseteq [-M/2, M/2]$. 
    \item \textbf{Stratified Random Assignment} - Treatment assignment is such that $(Y_i(0), Y_i(1), \hat{A}_i) \perp D_i |X_i$. Propensity scores $e(X_i)$ are known.
    \item \textbf{Strict Overlap} - There exists $k \in (0,1/2)$ such that $e(x)\in[k,1-k]$ for all $x \in \mathcal{X}$. 
\end{enumerate}
\end{assumption}

Assumption \ref{ass:kt18}.i implies that both potential outcomes, and thus treatment effects, are uniformly bounded in absolute value by $M$. Boundedness is a standard condition in the statistical learning literature as it enables the use of uniform concentration inequalities \citep[see, e.g.][]{hoeffding_probability_1963,van_der_vaart_weak_2023}. Assumption \ref{ass:kt18}.ii characterizes a quasi-experimental environment in which treatment assignment is independent of potential outcomes and $\hat{A}_i$ conditional on observed covariates. Moreover, the potential outcome of each unit $i$ depends only on their own treatment status, and propensity scores are known. Finally, Assumption \ref{ass:kt18}.iii is standard in the causal inference literature and guarantees that all units have a positive probability of receiving either treatment or control. 
\begin{example}
    Assumptions \ref{ass:kt18}.2 and \ref{ass:kt18}.3 are satisfied by design in stratified randomized controlled trials \citep[see e.g.][for a definition]{gerber_green}. 
\end{example}

\begin{assumption}[Measurement error-based $\hat{A}_i$]\label{ass:proxy} $ $
\begin{enumerate}
    \item \textbf{Proxy Representation} - Let $\hat{A}_i$ be written as $\hat{A}_i = A_i + \varepsilon_i$.
    \item \textbf{Noise Distribution} - $\varepsilon_i|X_i \sim F_{\varepsilon|X_i}$. Moreover, $\varepsilon_i \perp A_i|X_i$.
\end{enumerate}
\end{assumption}

Assumption \ref{ass:proxy}.1 imposes that $\hat{A}_i$ is produced by a measurement with error. In particular, it imposes additive separability between noise and signal. Assumption \ref{ass:proxy}.2 imposes that the measurement error is random conditional on covariates. 
As a whole, Assumption \ref{ass:proxy} allows $\hat{A}_i$ to be biased and its error's distribution to vary across covariate values, while requiring the measurement error to be independent of the true values, conditional on the covariates. 
In Appendix \ref{app:external_proxy} I extend Assumption \ref{ass:proxy} for the case where $\hat{A}_i$ is estimated from external data, rather than measured with error.

\begin{example}
    Night-time light intensity from remote sensing is frequently used as a proxy for local economic activity \citep[see e.g.][]{satellite_data, satellite_appl}. One source of measurement error allowed by Assumption \ref{ass:proxy} is adversarial atmospheric conditions. 
    Assumption \ref{ass:proxy} allows this source of error to be correlated with local characteristics (e.g. geography). It is not allowed to vary with the true economic activity within local characteristics.
\end{example}

\begin{example}
    Survey questions are frequently used as proxies for economic and psychological latent traits such as business skill or cognitive ability \citep[see e.g.][]{stancheva2023, hussam_2022}. One common source of measurement error is the experimenter demand effect, i.e. the framing of the survey question may induce the subject to over- or under-state a given trait of interest. 
    Assumption \ref{ass:proxy} allows this source of error to vary along subjects' observed characteristics.
    It is not allowed to vary with the true value of the underlying trait, or to be correlated with the answers of other subjects.
\end{example}

\begin{assumption}[Policy class restrictions]\label{ass:score_funct} 
$ $
\begin{enumerate}
    \item \textbf{VC Class} - The policy class $\mathcal{G}^\theta_{z}$ has finite VC-dimension $v^\theta_z<\infty$.
    \item \textbf{Flexibility} - $\exists\ \tilde{\theta} \in \Theta_x$ such that $\operatorname{sign}(s_{\tilde{\theta}}(X_i)) = \operatorname{sign}(\tau(X_i))$ for all $x \in \mathcal{X}$.
    \item \textbf{Margin Condition} - There exists a constant $\kappa>0$ such that, for all $t\geq0$:
    \begin{equation}
        \sup_{\theta \in \Theta}\mathbb{P}(|s_\theta(X_i,A_i)|<t|X_i=x) \leq \kappa t \quad \forall\ x \in \mathcal{X}
    \end{equation} 
    \item \textbf{Lipschitz Continuity} - There exists a constant $L_s$ such that:
    \begin{equation}
        \sup_{\theta\in\Theta,\ (x,a) \in \mathcal{X}\times \mathcal{A}}|s_{\theta}(x,a)- s_{\theta}(x,a+\gamma)| \leq L_s |\gamma|
    \end{equation}
\end{enumerate}
\end{assumption}

Assumption \ref{ass:score_funct}.1 restricts the complexity of the policy class by ensuring that it cannot shatter arbitrarily large sets. The use of VC-dimension as a complexity measure in policy learning was introduced in \cite{kitagawa_who_2018}, and has been widely adopted by the subsequent literature. 
Assumption \ref{ass:score_funct}.2 requires the policy class to be \textit{flexible} enough to contain the true CATE function, or the CATE function to be \textit{simple} enough to be contained inside the policy class. 
This assumption is the most restrictive in the set considered. Note that it is only needed to simplify the regret bound for covariate based rules which otherwise would carry an additional term that cannot be bounded non-trivially. 
I defer a more detailed discussion to the results section and Appendix \ref{app:formal_proofs}. 
Assumption \ref{ass:score_funct}.3 rules out degenerate distributions that place all the probability mass close to the region where the score function $s_\theta(X_i,{A}_i)$ is equal to zero. Assumption \ref{ass:score_funct}.4 rules out score functions that are not Lipschitz continuous.

\begin{example}
    If $(X_i, A_i)$ follow a joint normal distribution and $G_\theta(x,\hat{a})$ is defined as a generalized threshold rule \citep[see e.g. Example 2.2][]{kitagawa_who_2018}, then Assumptions \ref{ass:score_funct}.1, \ref{ass:score_funct}.3 and \ref{ass:score_funct}.4 are satisfied.
\end{example}

\begin{example}
    If $(X_i, A_i)$ follow a joint uniform distribution and $G_\theta(x,\hat{a})$ is defined as a rectangular rule \citep[see e.g. Example 2.3][]{mbakop_model_2021}, then Assumptions \ref{ass:score_funct}.1, \ref{ass:score_funct}.3 and \ref{ass:score_funct}.4 are satisfied.
\end{example}

\section{Learning Policies with Unobserved Heterogeneity}\label{sec:regret_bounds}
In this section, I present the regret bounds for Covariate-Based (section \ref{sec:rb_cb}), and $\hat{a}$-Augmented (section \ref{sec:rb_aug}) rules. In section \ref{sec:minimax_comp} I illustrate the minimax comparison.

\subsection{Performance when Ignoring Unobserved Heterogeneity}\label{sec:rb_cb}
\begin{theorem}[Regret Bound for Covariate-Based Rules]\label{thm:cb_upper}
Fix $\sigma_0>0$. Let $\mathcal{P}(\sigma_0)$ denote the class of data-generating processes $P$ satisfying Assumption \ref{ass:kt18}, and such that 
\begin{equation} 
\bar{\sigma}_{\tau|x}(P) := \mathbb{E}_{X}\left[ \sqrt{\mathbb{V}_A(\tau(X_i,A_i)\mid X_i)} \right] \le \sigma_0 
\end{equation} 
The regret of any CB policy class \( \mathcal{G}^\theta_x \) that satisfies Assumption \ref{ass:score_funct}.1, satisfies:
\begin{equation}\label{eq:cb_upper}
\sup_{P\in \mathcal{P}(\sigma_0)} \mathbb{E}_{P^n}[W(G^*_{\theta}(X_i,A_i)) - W(\hat{G}_\theta(X_i))] 
\leq C_1\frac{M}{k} \sqrt{\frac{v_x^\theta}{n}} 
+ \sigma_0 + \Delta(s,\Theta_x)
\end{equation}
where $C_1>0$ is a universal constant, and
\begin{equation}
    0 \leq \Delta(s,\Theta_x)\leq M\cdot \mathbb{P}_P(\mathbf{1}\{s_{\theta^*}(X_i)\geq 0\}\neq \mathbf{1}\{\tau(X_i)\geq0\})
\end{equation}
Moreover, if $\mathcal{G}_x^\theta$ also satisfies Assumption \ref{ass:score_funct}.2, then $\Delta(s,\Theta_x)=0$.
\end{theorem}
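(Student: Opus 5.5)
I would decompose the regret into three pieces by adding and subtracting the welfare of the best-in-class covariate-based rule $G^*_\theta(X_i)$ and of the first-best covariate rule $G^{FB}(X_i)=\mathbf{1}\{\tau(X_i)\geq 0\}$:
\begin{equation*}
W(G^*_\theta(X,A)) - W(\hat G_\theta(X)) = \underbrace{W(G^*_\theta(X,A)) - W(G^{FB}(X))}_{\text{(I)}} + \underbrace{W(G^{FB}(X)) - W(G^*_\theta(X))}_{\text{(II)}} + \underbrace{W(G^*_\theta(X)) - W(\hat G_\theta(X))}_{\text{(III)}}.
\end{equation*}
Here $\tau(X_i)=\mathbb{E}[\tau(X_i,A_i)\mid X_i]$. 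Term (III) is the standard within-class regret. Under Assumption \ref{ass:kt18} and Assumption \ref{ass:score_funct}.1, the EWM bound of \cite{kitagawa_who_2018} gives $\mathbb{E}_{P^n}[\text{(III)}]\leq C_1\frac{M}{k}\sqrt{v_x^\theta/n}$ uniformly over $P$. The inverse-propensity weights are bounded by $M/(2k)$, and symmetrization plus the VC-entropy bound for the class of weighted indicators yields this rate. I would simply invoke that theorem, since its hypotheses hold here.

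For term (II), I would write $W(G)=\mathbb{E}[Y_i(0)]+\mathbb{E}[\tau(X_i)G(X_i)]$. Then (II) $=\mathbb{E}[\tau(X_i)(G^{FB}(X_i)-G^*_\theta(X_i))]$, which is nonnegative and bounded by $\sup|\tau|\cdot\mathbb{P}(G^*_\theta\neq G^{FB})\leq M\,\mathbb{P}(\mathbf{1}\{s_{\theta^*}\geq0\}\neq\mathbf{1}\{\tau\geq0\})$, using $|\tau|\leq M$. I define $\Delta(s,\Theta_x)$ to be this term. Under Assumption \ref{ass:score_funct}.2, the rule $G_{\tilde\theta}$ coincides with $G^{FB}$ except possibly where $\tau(x)=0$, where it contributes no welfare. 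Hence $W(G^{FB})$ is attained in class, so $W(G^*_\theta(X))=W(G^{FB}(X))$ and $\Delta=0$.

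Term (I) is where $\sigma_0$ enters, and I expect it to be the main step, although it is elementary. For any rule $G(X,A)$,
\begin{equation*}
W(G(X,A))-W(G^{FB}(X))=\mathbb{E}\big[\tau(X,A)G(X,A)-\tau(X)G^{FB}(X)\big]\leq \mathbb{E}\big[\tau(X,A)_+-\tau(X)_+\big].
\end{equation*}
The inequality uses $\tau G\leq\tau_+$ pointwise and $\tau(X)G^{FB}(X)=\tau(X)_+$. Conditional on $X=x$, the map $t\mapsto t_+$ is 1-Lipschitz, so
\begin{equation*}
\mathbb{E}[\tau(x,A)_+\mid x]-\tau(x)_+\leq \mathbb{E}[|\tau(x,A)-\tau(x)|\mid x]\leq\sqrt{\mathbb{V}_A(\tau(x,A)\mid x)},
\end{equation*}
by Jensen/Cauchy–Schwarz. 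Taking expectations over $X$ gives (I) $\leq\bar\sigma_{\tau|x}(P)\leq\sigma_0$.

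Summing the three bounds and taking the supremum over $\mathcal{P}(\sigma_0)$ gives \eqref{eq:cb_upper}. The only care needed is that the bounds for (I) and (II) are deterministic and uniform in $P$, while the (III) bound is uniform by the Kitagawa–Tetenov result.
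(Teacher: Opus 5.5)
Your proposal is correct and is essentially the paper's proof. The paper splits regret into $W(G^*_\theta(X_i,A_i))-W(G^*_\theta(X_i))$ plus the Kitagawa--Tetenov within-class term. It then adds and subtracts $\mathbb{E}[\tau_i\mathbf{1}\{\tau(X_i)\ge 0\}]$ to produce exactly your terms (I) and (II), and bounds (I) by the pointwise inequality $u(\mathbf{1}\{u\ge0\}-\mathbf{1}\{v\ge0\})\le|u-v|$ (equivalent to your $1$-Lipschitz positive-part step) followed by conditional Jensen.

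Your argument for $\Delta=0$ under Assumption \ref{ass:score_funct}.2, that first-best covariate welfare is attained in the class, is slightly cleaner than the paper's appeal to $\theta^*=\tilde\theta$, which need not hold when the maximizer is not unique.
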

The formal proof is reported in Appendix \ref{app:formal_proofs}. Theorem \ref{thm:cb_upper} introduces a bound on the regret for CB rules arising from (i) completely ignoring the source of unobserved heterogeneity, (ii) the lack of complete knowledge on the counterfactual outcomes. The bound in Eq. \ref{eq:cb_upper} decomposes regret into a statistical error term diminishing with sample size that equals the bound in Theorem 2.1 \citep{kitagawa_who_2018}, and an approximation error term due to (i) ignoring unobserved heterogeneity ($\bar{\sigma}_{\tau|x}$, upper bounded by $\sigma_0$) and (ii) considering an assignment rule that is less flexible compared to the CATE ($\Delta(s,\Theta_x)$). Note that, under Assumption \ref{ass:score_funct}.2, this third term equals zero. 

\begin{theorem}[Minimax lower bound for Covariate-Based rules] \label{thm:cb_lower} Let $\mathcal{P}(\sigma_0)$ be defined as in Theorem \ref{thm:cb_upper}.
Under Assumption \ref{ass:kt18}, and for any class $\mathcal{G}_x^\theta$ that satisfies Assumption \ref{ass:score_funct}, 
\begin{equation}\label{eq:cb_minimax_lower} 
\inf_{\{\hat{G}_\theta(X_i)\}} \sup_{P \in \mathcal{P}(\sigma_0)} R(\hat{G}_\theta(X_i)) \geq C_2\frac{M}{k}\sqrt{\frac{v_x^\theta}{n}} + C_3\sigma_0
\end{equation} 
where $C_2>0$ and $C_3>0$ are universal constants. 
\end{theorem}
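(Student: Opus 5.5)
Since the supremum of a sum dominates the larger of the two separate suprema, and $\max\{a,b\}\geq (a+b)/2$, I will prove two lower bounds separately and then combine them.
\begin{equation*}
\inf_{\hat G}\sup_{P\in\mathcal{P}(\sigma_0)} R(\hat G_\theta(X_i)) \geq c\frac{M}{k}\sqrt{\frac{v_x^\theta}{n}}
\quad\text{and}\quad
\inf_{\hat G}\sup_{P\in\mathcal{P}(\sigma_0)} R(\hat G_\theta(X_i)) \geq c'\sigma_0 .
\end{equation*}
Halving the constants then gives \eqref{eq:cb_minimax_lower}.

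\textbf{Statistical term.} I restrict the supremum to the subfamily of $\mathcal{P}(\sigma_0)$ in which $\tau(X_i,A_i)=\tau(X_i)$, for instance by taking $A_i$ independent of everything. For these processes $\bar\sigma_{\tau|x}(P)=0\le\sigma_0$. The oracle benchmark $G^*_\theta(X_i,A_i)$ then achieves the same welfare as the best covariate rule, so my regret reduces to the within-class regret of \cite{kitagawa_who_2018}. I will invoke their Theorem 2.2 construction directly:
\begin{enumerate}
\item take $v_x^\theta$ points shattered by $\mathcal{G}^\theta_x$;
\item put mass on these points, with propensity $k$;
\item set outcomes equal to $\pm M/2$, with CATE signs perturbed by $\pm\sqrt{v/n}$-order shifts indexed by the hypercube $\{0,1\}^{v}$;
\item apply Assouad's lemma.
\end{enumerate}
This yields the $c\,\frac{M}{k}\sqrt{v_x^\theta/n}$ bound. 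Flexibility and the margin condition of Assumption \ref{ass:score_funct} must hold on these distributions. Since the CATE signs are realized by members of the class (the points are shattered), flexibility holds. For the margin condition I will smooth the point masses into small intervals so that $s_\theta(X_i,A_i)$ has bounded density.

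\textbf{Approximation term.} Here I construct a single DGP $P_0\in\mathcal{P}(\sigma_0)$ on which every CB rule, even one that knows $P_0$ exactly, loses order $\sigma_0$ relative to the oracle. I take $X_i$ degenerate (or irrelevant), $A_i$ uniform on $[-1,1]$, and $\tau(x,a)=c_0\sigma_0\,\mathrm{sign}$-like$(a)$, smoothed to be linear: $\tau(x,a)=\sigma_0\sqrt{3}\,a$, truncated so that $|\tau|\le M$ (assuming $\sigma_0\lesssim M$; otherwise the bound is vacuous up to constants, because regret is at most $M$). Then $\sqrt{\mathbb V(\tau\mid X)}=\sigma_0$ and $\mathbb E[\tau\mid X]=0$. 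Every $G(X)$ yields welfare gain $\mathbb E[\tau]G=0$, whereas the oracle rule $\mathbf 1\{a\ge0\}$, which the class $\mathcal{G}^\theta_{x,a}$ contains as a threshold rule, attains $\mathbb E[\tau^+]=\sigma_0\sqrt3/4$. Hence $R\ge c'\sigma_0$ for every estimator, uniformly in $n$. I need the score $s_\theta(x,a)$ for the $a$-augmented benchmark to satisfy Assumptions \ref{ass:score_funct}.3–4. A linear threshold score $a-\theta$ with uniform $A$ has density bounded by $1/2$ and is $1$-Lipschitz, so both hold.

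\textbf{Main obstacle.} I expect the main difficulty to be making the two constructions compatible with the constraints simultaneously, chiefly that the Assouad family respects the margin condition, which requires $\sup_\theta \mathbb P(|s_\theta|<t\mid x)\le\kappa t$ conditional on each $x$. With $X$ supported on shattered points, I will handle this by letting the score depend on $A$ with a continuous, spread-out distribution, so the conditional-on-$x$ margin is controlled by the density of $A$. If this proves awkward, I will instead interpret the margin condition as applying only to the oracle class $\mathcal{G}^\theta_{x,a}$ and verify it for the threshold-in-$a$ scores used in both constructions.
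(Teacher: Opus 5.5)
Your route matches the paper's proof. The paper proves the approximation bound with a single DGP in which $A_i\perp X_i$ and $\mathbb{E}[\tau_i\mid X_i]=0$, so every covariate rule has zero welfare gain while the oracle gains order $\sigma_0$. It proves the statistical bound by restricting to the subfamily of \cite{kitagawa_who_2018} on which $\tau$ depends only on $X_i$, and combines the two via $\max\{u,v\}\ge(u+v)/2$. The only differences are cosmetic. The paper takes $A_i\in\{-\sigma_0,\sigma_0\}$ with $Y_i(1)=A_i$, $Y_i(0)=0$, giving regret $\sigma_0/2$. Your uniform $A_i$ with $\tau=\sqrt3\sigma_0A_i$ gives $\sqrt3\sigma_0/4$, and has the small advantage that threshold scores in $a$ satisfy the margin condition with $\kappa=1$, which the two-point design does not.

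There is, however, a genuine gap, which you share with the paper: the factor $1/k$ in the statistical term. Your step ``apply Assouad's lemma with propensity $k$; this yields $c\frac{M}{k}\sqrt{v_x^\theta/n}$'' does not go through.
\begin{itemize}
\item With mass $1/v$ on each shattered point and propensity $k$, only about $kn/v$ treated observations per point carry information about a perturbation of the treated-arm law.
\item Two hypotheses whose CATEs at a point differ by order $M\delta$ therefore have Kullback--Leibler divergence of order $kn\delta^2/v$.
\item Assouad then allows at most $\delta\asymp\sqrt{v/(kn)}$ and delivers $cM\sqrt{v/(kn)}$, i.e.\ a factor $1/\sqrt{k}$, not $1/k$.
\item Theorem 2.2 of \cite{kitagawa_who_2018} is itself stated as $\tfrac12e^{-2\sqrt2}M\sqrt{(v-1)/n}$ for $n\ge16(v-1)$, with no overlap constant.
\end{itemize}
No other construction can rescue the claim. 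Since $W(G)-W(G')=\mathbb{E}[\tau_i(G-G')]$, the regret never exceeds $\mathbb{E}|\tau_i|\le M$. But $C_2\frac{M}{k}\sqrt{v_x^\theta/n}>M$ whenever $n<C_2^2v_x^\theta/k^2$, so with a universal $C_2$ the displayed bound is false for small $k$.

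The same cap shows that your remark on large $\sigma_0$ is off. For $\sigma_0\gg M$ the term $C_3\sigma_0$ is not vacuous but violated, so the approximation bound must be stated with $\min\{\sigma_0,M\}$; the paper tacitly assumes $\sigma_0\le M/2$. What the two-part argument can actually deliver is a minimax lower bound of the form $c\,M\sqrt{v_x^\theta/(kn)}+c'\min\{\sigma_0,M\}$ for $n\gtrsim v_x^\theta/k$. This matches Theorem~\ref{thm:cb_upper} in $n$, $v_x^\theta$ and $\sigma_0$, but not in $k$. Finally, both your argument and the paper's tacitly use $\mathcal G^\theta_x\subseteq\mathcal G^\theta_{x,a}$ and $\mathbf 1\{a\ge0\}\in\mathcal G^\theta_{x,a}$ when identifying the oracle's welfare; these should be stated as assumptions.
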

The formal proof is reported in Appendix \ref{app:formal_proofs}. Theorem \ref{thm:cb_lower} establishes that the regret of Covariate-Based policy rules is bounded below by the sum of a statistical term of order $\sqrt{v^\theta_x/n}$ and an approximation term proportional to the residual variation in treatment effects unexplained by observed covariates, ${\sigma}_0$. Combined with the upper bound in Theorem \ref{thm:cb_upper}, this result implies that the regret bound for Covariate-Based rules is minimax sharp up to constants over the class $\mathcal P(\sigma_0)$. 

\subsection{Performance when Including Noisy Measures} \label{sec:rb_aug}
Define the root Mean Squared Errror (rMSE) of $\hat{A}_i$ as:
\begin{equation}
        \operatorname{rMSE}(\hat{A}_i) := \sqrt{{\mathbb{E}_P\left[ (\hat{A}_i-A_i)^2 \right]}}
\end{equation}

\begin{theorem}[Regret Bound for $\hat{a}$-Augmented Rules]\label{thm:ha_upper}
    Fix $\rho>0$. Let $\mathcal{P}(\rho)$ denote the class of data-generating processes $P$ satisfying Assumptions \ref{ass:kt18} and \ref{ass:proxy}, and such that $\operatorname{rMSE}(\hat{A}_i)\le \rho \leq 1/(2\kappa)$.
    The regret of any $\hat{a}$-CB policy class \( \mathcal{G}^\theta_{x,\hat{a}} \) that satisfies Assumption \ref{ass:score_funct}.1, \ref{ass:score_funct}.3, and \ref{ass:score_funct}.4, satisfies:
    \begin{equation}
        \sup_{P \in \mathcal{P}(\rho)} \mathbb{E}_{P^n}[W(G^*_{\theta}(X_i,A_i))- W(\hat{G}_{\theta}(X_i,\hat{A}_i))] \leq C_1 \frac{M}{k}\sqrt{\frac{v_{x,\hat{a}}^\theta}{n}} +  M \kappa L_s \rho
    \end{equation}
    where $C_1$ is a universal constant.
\end{theorem}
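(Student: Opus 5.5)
The plan is to split the regret against the oracle $G^*_\theta(X_i,A_i)$ into a statistical term and an approximation term. Let $G^*_\theta(X_i,\hat{A}_i)$ denote the best-in-class rule in $\mathcal{G}^\theta_{x,\hat{a}}$. Then
\begin{align*}
W(G^*_{\theta}(X_i,A_i))- W(\hat{G}_{\theta}(X_i,\hat{A}_i)) ={}& \bigl[W(G^*_{\theta}(X_i,A_i)) - W(G^*_{\theta}(X_i,\hat{A}_i))\bigr]\\
&+ \bigl[W(G^*_{\theta}(X_i,\hat{A}_i)) - W(\hat{G}_{\theta}(X_i,\hat{A}_i))\bigr].
\end{align*}
The second bracket is the within-class regret studied in \cite{kitagawa_who_2018}. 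By Assumption \ref{ass:kt18}.2, $D_i$ is independent of $(Y_i(0),Y_i(1),\hat{A}_i)$ given $X_i$. Hence $e(X_i,\hat{A}_i)=e(X_i)$, and the inverse propensity weighted empirical welfare in \eqref{eq:ewm} is unbiased for $W(G(X_i,\hat{A}_i))$ for every rule in the class.

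With bounded outcomes, strict overlap and finite VC-dimension (Assumption \ref{ass:score_funct}.1), the argument of Theorem 2.1 in \cite{kitagawa_who_2018} applies verbatim to the feature vector $(X_i,\hat{A}_i)$. That argument is the same one already used for the statistical term of Theorem \ref{thm:cb_upper}. It bounds the expectation of the second bracket by $C_1\frac{M}{k}\sqrt{v^\theta_{x,\hat a}/n}$ uniformly over $\mathcal{P}(\rho)$.

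For the first bracket, I evaluate the same parameter $\theta^*$, the oracle maximizer over $(x,a)$, on the noisy input. Since $G^*_\theta(X_i,\hat{A}_i)$ maximizes welfare within the class, $W(G^*_\theta(X_i,\hat A_i))\ge W(G_{\theta^*}(X_i,\hat{A}_i))$, so the first bracket is at most $W(G_{\theta^*}(X_i,A_i))-W(G_{\theta^*}(X_i,\hat{A}_i))$. Writing welfare as $\mathbb{E}[Y_i(0)]+\mathbb{E}[\tau_i G]$, this difference equals $\mathbb{E}[\tau_i(G_{\theta^*}(X_i,A_i)-G_{\theta^*}(X_i,\hat A_i))]$. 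Because $|\tau_i|\le M$, it is bounded by $M\,\mathbb{P}\bigl(G_{\theta^*}(X_i,A_i)\neq G_{\theta^*}(X_i,\hat A_i)\bigr)$.

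The rule flips only if $|s_{\theta^*}(X_i,A_i)|\le |s_{\theta^*}(X_i,A_i+\varepsilon_i)-s_{\theta^*}(X_i,A_i)|$. By Assumption \ref{ass:score_funct}.4, the right-hand side is at most $L_s|\varepsilon_i|$. So the flip probability is at most $\mathbb{P}(|s_{\theta^*}(X_i,A_i)|\le L_s|\varepsilon_i|)$.

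I then condition on $(X_i,\varepsilon_i)$. By Assumption \ref{ass:proxy}.2, $\varepsilon_i\perp A_i\mid X_i$, so the conditional law of $A_i$ given $(X_i,\varepsilon_i)$ is its law given $X_i$. The margin condition, Assumption \ref{ass:score_funct}.3, then gives a conditional bound of $\min\{1,\kappa L_s|\varepsilon_i|\}\le \kappa L_s|\varepsilon_i|$. Integrating and applying Jensen, $\mathbb{E}|\varepsilon_i|\le \operatorname{rMSE}(\hat A_i)\le\rho$, which yields the term $M\kappa L_s\rho$. Summing the two brackets gives the stated bound, and since neither step depends on the particular $P$, the supremum over $\mathcal{P}(\rho)$ follows.

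The main obstacle is the margin step. The margin condition is stated with a strict inequality $<t$, while the flip event has a non-strict $\le$. I would handle this by taking $t\downarrow$ limits, or by noting that ties have measure zero under $\mathbb{P}(|s|<t)\le\kappa t$ as $t\to0$.

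The restriction $\rho\le 1/(2\kappa)$ is not needed for the argument above. I would note that it only guarantees the bound stays non-vacuous when $L_s$ is normalized, for example $L_s\le1$ so that $\kappa L_s\rho\le 1/2$.

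I would also check that the Kitagawa–Tetenov argument needs only the VC-dimension of the induced class of sets in $(x,\hat a)$-space, so the possibly different support $\hat{\mathcal{A}}$ causes no difficulty.
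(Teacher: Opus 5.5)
Your proposal is correct and follows essentially the same route as the paper. Both split regret into a proxy-approximation term and a within-class term bounded by Theorem 2.1 of \cite{kitagawa_who_2018}, and control the former through $|\tau_i|\le M$, the sign-flip inclusion, Lipschitz continuity, the margin condition applied conditionally on $(X_i,\varepsilon_i)$ via $\varepsilon_i\perp A_i\mid X_i$, and Jensen. Your deviations are minor refinements: you fix the oracle parameter $\theta^*$ rather than using $\sup_\theta f-\sup_\theta g\le\sup_\theta(f-g)$, and you explicitly handle the strict inequality in the margin condition, which the paper glosses over. You are also right that $\rho\le 1/(2\kappa)$ is unused here; the paper needs it only for the lower-bound construction in Theorem~\ref{thm:ha_lower}, not for the normalization reason you suggest.
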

The formal proof is reported in Appendix \ref{app:formal_proofs}. 
Theorem \ref{thm:ha_upper} introduces a bound on the regret for $\hat{a}$-CB rules arising from (i) not observing the unobserved factor $A_i$, and (ii) the lack of complete knowledge on the counterfactual outcomes. 
This bound is composed of the bound proposed by \cite{kitagawa_who_2018} plus a constant that depends on the class of rules $\mathcal{G}^\theta_{x,\hat{a}}$ through the Lipschitz and margin constants (see Assumption \ref{ass:score_funct}) times the root MSE of $\hat{A}_i$. 
The proof is composed of the following steps. 
First, regret can be decomposed into the sum of the distance between an oracle that observes $A_i$ (full information) and an oracle that observes $\hat{A}_i$ (partial information), and the distance between the latter and the feasible rule. Because of Assumptions \ref{ass:kt18}, \ref{ass:proxy}, and \ref{ass:score_funct}.1, this second term can be bounded by the bound in Theorem 2.2 \citep{kitagawa_who_2018}. Because of Assumption \ref{ass:score_funct}.3, the first term can be bounded by the probability of disagreement between the two oracles scaled by $M$. Because of Assumption \ref{ass:score_funct}.4 such probability can be bounded by a multiple of the expected absolute difference between $\hat{A}_i$ and $A_i$, which in turn can be bounded by the (upper bound of the) rMSE of $\hat{A}_i$.

\begin{theorem}[Minimax lower bound for $\hat{a}$-Augmented rules]
\label{thm:ha_lower}
Let $\mathcal{P}(\rho)$ be defined as in Theorem \ref{thm:ha_upper}.
Then, for any class $\mathcal G^\theta_{x,\hat{a}}$ that satisfies Assumption \ref{ass:score_funct}, 
\begin{equation}\label{eq:ha_lower}
\inf_{\{\hat{G}_{\theta}(X_i,\hat{A}_i)\}}
\sup_{P\in \mathcal P(\rho)}
\mathbb E_{P^n}\!\left[W(G^*_{\theta}(X_i,A_i))-W(\hat{G}_{\theta}(X_i,\hat{A}_i))\right]
\ge
C_4 \frac{M}{k}\sqrt{\frac{v^\theta_{x,\hat a}}{n}}
+
C_5  M \kappa \rho
\end{equation}
where $C_4>0$ and $C_5>0$ are universal constants.
\end{theorem}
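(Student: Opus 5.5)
The proof reduces the lower bound to two separate sub-families of $\mathcal P(\rho)$ and combines them using $\max\{a,b\}\ge (a+b)/2$. Because the supremum over $\mathcal P(\rho)$ dominates the supremum over any subfamily, it is enough to show that
\[
\inf_{\hat G}\sup_{P\in\mathcal P_1} R(\hat G)\ge c\,\tfrac{M}{k}\sqrt{v^\theta_{x,\hat a}/n}
\qquad\text{and}\qquad
\inf_{\hat G}\sup_{P\in\mathcal P_2} R(\hat G)\ge c'\,M\kappa\rho
\]
for two subfamilies $\mathcal P_1,\mathcal P_2\subseteq\mathcal P(\rho)$. Summing the two bounds and halving gives the statement with $C_4=c/2$ and $C_5=c'/2$.

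\textbf{Statistical term.} For $\mathcal P_1$ I take DGPs in which the proxy is exact, $\varepsilon_i\equiv 0$. These satisfy Assumption \ref{ass:proxy} and have $\operatorname{rMSE}=0\le\rho$. In this case $\hat A_i=A_i$, so the oracle benchmark $W(G^*_\theta(X_i,A_i))$ equals the within-class optimum over $\mathcal G^\theta_{x,\hat a}$. The regret then coincides with the within-class regret of \cite{kitagawa_who_2018}. I then reuse the construction behind their Theorem 2.2 lower bound. Take $v^\theta_{x,\hat a}$ points of $\mathcal X\times\hat{\mathcal A}$ shattered by the class. Build a hypercube of $2^{v}$ distributions whose CATE at each point has sign $\pm$ and magnitude of order $M\sqrt{v/n}$. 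Propensities are set to $k$ so that the $1/k$ factor appears. Assouad's lemma then gives the bound $c\,\frac{M}{k}\sqrt{v/n}$. The same Assouad step underlies Theorem \ref{thm:cb_lower}, so I can cite it with $v^\theta_x$ replaced by $v^\theta_{x,\hat a}$.

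\textbf{Measurement-error term.} For $\mathcal P_2$ I take $n=\infty$, or equivalently let the policymaker know $P$. The remaining loss then comes purely from observing $\hat A_i$ instead of $A_i$. I use a two-point (Le Cam) construction that makes $\hat A$ uninformative about the sign of $s_{\theta^*}(X,A)$ on a region of probability of order $\kappa\rho$. Concretely, set $X$ degenerate and let $A$ be uniform near the threshold $s_{\theta^*}=0$, with density $\kappa$, consistent with the margin condition. Let $\varepsilon$ be uniform on $[-\rho,\rho]$ or symmetric $\pm\rho$, which gives $\operatorname{rMSE}=\rho\le 1/(2\kappa)$. Let $\tau(x,a)=\pm M$ according to the side of the threshold on which $a$ lies. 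Any rule based on $(x,\hat a)$ then misclassifies a set of units with $|A-\text{threshold}|\lesssim\rho$ with probability bounded away from zero. This holds because, conditional on $\hat a$ in the confusion band, $A$ lies on either side with comparable probability. Each misclassified unit costs $M$, so the welfare gap is at least $c'M\kappa\rho$. For a deterministic or even randomized $\hat G$ the bound follows pointwise in $\hat a$: the welfare loss is $M\cdot\min\{\mathbb P(A>t\mid\hat a),\mathbb P(A<t\mid\hat a)\}$, integrated over $\hat a$.

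\textbf{Main obstacle.} The hardest step is making the second construction compatible with Assumption \ref{ass:score_funct} for an arbitrary class $\mathcal G^\theta_{x,\hat a}$. The margin condition is stated as a supremum over $\theta$ and with scale $L_s$. The constant in front of $\kappa\rho$ must also be universal, and it must not secretly depend on $L_s$, since the lower bound omits $L_s$. The first thing to try is normalizing the score so that locally $s_{\theta^*}(x,a)=a-t$. Then check the margin bound using the density $\le\kappa$, and use $\rho\le1/(2\kappa)$ to guarantee that the confusion band fits inside the support where the density equals $\kappa$.
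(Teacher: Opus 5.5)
Your architecture matches the paper's proof. You use an exact-proxy subfamily ($\varepsilon_i\equiv 0$), on which the new regret reduces to within-class regret and Theorem 2.2 of \citet{kitagawa_who_2018} is invoked. You use a single DGP with $A_i\sim\mathrm{Unif}[-1/\kappa,1/\kappa]$, $\varepsilon_i=\pm\rho$ and $\tau_i=M\,\mathrm{sign}(A_i)$. There $\mathbb P(A_i\ge 0\mid \hat A_i)=1/2$ on the band $\{|\hat A_i|<\rho\}$, which has probability $\kappa\rho$, so even the population-optimal $\hat a$-rule loses at least $\frac{M}{2}\kappa\rho$. Finally you combine via $\max\{u,v\}\ge (u+v)/2$. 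The proxy half of your proposal is correct. It is a one-distribution Bayes-risk bound rather than a Le Cam argument, which is all that is needed, and your uniform-noise variant also works.

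\textbf{The gap is in the statistical term.} Setting the propensity to $k$ does not produce a factor $1/k$. Only treated units at a shattered point carry information about the sign there. Put mass $1/v$ on each point and per-point CATE $\pm\gamma M$. Then neighbouring vertices of your hypercube are at KL distance of order $nk\gamma^2/v$. This caps $\gamma$ at order $\sqrt{v/(nk)}$ and gives regret of order $M\sqrt{v/(nk)}$, i.e.\ a factor of only $1/\sqrt{k}$.

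No construction can do better with a universal constant. Welfare lies in $[-M/2,M/2]$, so the minimax regret never exceeds $M$. But $C_4\frac{M}{k}\sqrt{v/n}>M$ as soon as $k<C_4\sqrt{v/n}$, so the displayed bound cannot hold for small $k$.

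The paper's proof has the same defect. It cites KT18 ``with the dependence on $k$ as stated.'' However, KT18's Theorem 2.2 is of order $M\sqrt{(v-1)/n}$, for $v\ge 2$ and $n\ge 16(v-1)$, and contains no overlap constant. What this route actually proves is the display with $\frac{M}{k}\sqrt{v^\theta_{x,\hat a}/n}$ replaced in one of two ways. Under KT18's conditions it can be replaced by $M\sqrt{v^\theta_{x,\hat a}/n}$. Using your propensity-$k$ hypercube it can be replaced by a term of order $M\min\{1,\sqrt{v^\theta_{x,\hat a}/(nk)}\}$.

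\textbf{Your ``main obstacle'' is also real.} The paper does not resolve it for arbitrary classes. It simply takes the class generated by $s_\theta(x,a)=a$, i.e.\ $\{\mathbf 1\{a\ge 0\}\}$. Under the uniform law this class has margin constant exactly $\kappa$ and $L_s=1$, which is precisely your normalization $s_{\theta^*}(x,a)=a-t$.

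Some such restriction is unavoidable. For a class whose scores ignore $a$, the two oracles $G^*_\theta(X_i,A_i)$ and $G^*_\theta(X_i,\hat A_i)$ coincide, so the $\rho$-term disappears. You should therefore state the proxy part for classes that can threshold on $a$. Do not look for an argument covering every class satisfying Assumption \ref{ass:score_funct}.
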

The formal proof is reported in Appendix \ref{app:formal_proofs}. Theorem \ref{thm:ha_lower} shows that the regret of $\hat{a}$-Augmented policy rules is bounded below by the sum of a statistical term of order $\sqrt{v^\theta_{x,\hat{a}}/n}$ and an irreducible estimation-error term proportional to the estimation error in the proxy, $\rho$. 
Combined with the upper bound in Theorem \ref{thm:ha_upper}, this result establishes that the regret bound for $\hat{a}$-Augmented rules is minimax sharp up to constants over the class $\mathcal{P}(\rho)$. 
In particular, even with infinite data, imperfect observation of the latent factor $A_i$ induces a non-vanishing welfare loss whenever $\rho>0$, reflecting a fundamental limit to the gains from incorporating noisy estimates of unobserved heterogeneity into policy learning.

In Appendix \ref{app:external_proxy}, I extend Assumptions \ref{ass:proxy} and \ref{ass:score_funct} to allow for $\hat{A}_i$ to be produced as a data-dependent estimate. I show that, in case $\hat{A}_i = \hat{f}(X_i)$ where $\hat{f}$ is learned in an independent sample $S_m$ and then applied to $S_n$, the same results in Theorems \ref{thm:ha_upper} and \ref{thm:ha_lower} apply conditional on $S_m$.

\subsection{Minimax Comparison Between CB and Augmented Rules}\label{sec:minimax_comp}
\begin{corollary}[Minimax dominance of $\hat{a}$-Augmented rules]
\label{cor:proxy_domination}
Define $\mathcal{P}(\sigma_0, \rho)$ as a class of data-generating processes that satisfy $\bar{\sigma}_{\tau|x}(P)\leq \sigma_0$ and $\mathrm{rMSE}(\hat{A}_i)\leq \rho\leq 1/(2\kappa)$. 
Under Assumptions \ref{ass:kt18}, \ref{ass:proxy}, if
\begin{equation}
C_3\sigma_0 \geq \frac{M}{k}\frac{C_1\sqrt{v^\theta_{x,\hat{a}}} - C_2\sqrt{v^\theta_x}}{\sqrt{n}} + M \kappa L_s \rho
\end{equation}
then, for any $\mathcal{G}^\theta_x$ and $\mathcal{G}_{x,\hat{a}}^\theta$ that satisfy Assumption \ref{ass:score_funct},
\begin{equation}
\inf_{\{\hat{G}(X_i)\}}
\sup_{P \in \mathcal{P}(\sigma_0,\rho)}
R(\hat{G}_\theta (X_i))
 \ge 
\inf_{\{\hat{G}(X_i,\hat{A}_i)\}}
\sup_{P \in \mathcal{P}(\sigma_0,\rho)}
R(\hat{G}_\theta(X_i,\hat{A}_i))  
\end{equation}
\end{corollary}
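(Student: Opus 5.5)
The plan is to sandwich the two minimax risks between the bounds already established, using Theorem \ref{thm:cb_lower} from below for the Covariate-Based class and Theorem \ref{thm:ha_upper} from above for the $\hat{a}$-Augmented class. The stated inequality then follows directly from the hypothesis on $\sigma_0$.

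\emph{Lower bound for CB rules.} Theorem \ref{thm:cb_lower} gives a minimax lower bound over $\mathcal{P}(\sigma_0)$, whereas the corollary takes the supremum over $\mathcal{P}(\sigma_0,\rho)$. The first step is therefore to check that the least-favorable distributions used in the proof of Theorem \ref{thm:cb_lower} can be chosen inside $\mathcal{P}(\sigma_0,\rho)$. Since a CB rule does not use $\hat{A}_i$ at all, its regret does not depend on how $\hat{A}_i$ is generated. Hence for every $P\in\mathcal{P}(\sigma_0)$ I can attach a proxy $\hat{A}_i=A_i+\varepsilon_i$ with $\varepsilon_i\perp(A_i,D_i,Y_i(0),Y_i(1))\mid X_i$ and $\mathbb{E}[\varepsilon_i^2]\le\rho^2$; the choice $\varepsilon_i\equiv 0$ works. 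The augmented distribution satisfies Assumptions \ref{ass:kt18} and \ref{ass:proxy} and lies in $\mathcal{P}(\sigma_0,\rho)$, and it leaves the CB regret of every estimator unchanged. The supremum over $\mathcal{P}(\sigma_0,\rho)$ is thus at least the supremum over this embedded copy of $\mathcal{P}(\sigma_0)$. This yields
\[
\inf_{\{\hat{G}(X_i)\}}\sup_{P\in\mathcal{P}(\sigma_0,\rho)}R(\hat{G}_\theta(X_i))\ \ge\ C_2\frac{M}{k}\sqrt{\frac{v^\theta_x}{n}}+C_3\sigma_0 .
\]

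\emph{Upper bound for augmented rules.} Since $\mathcal{P}(\sigma_0,\rho)\subseteq\mathcal{P}(\rho)$, the supremum over the smaller class is at most the supremum over the larger one. I then bound the infimum over estimators by the particular EWM rule $\hat{G}_\theta(X_i,\hat{A}_i)$ and apply Theorem \ref{thm:ha_upper}, which gives
\[
\inf_{\{\hat{G}(X_i,\hat{A}_i)\}}\sup_{P\in\mathcal{P}(\sigma_0,\rho)}R(\hat{G}_\theta(X_i,\hat{A}_i))\ \le\ C_1\frac{M}{k}\sqrt{\frac{v^\theta_{x,\hat a}}{n}}+M\kappa L_s\rho .
\]

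\emph{Combining.} Rearranging the hypothesis gives $C_2\frac{M}{k}\sqrt{v^\theta_x/n}+C_3\sigma_0\ge C_1\frac{M}{k}\sqrt{v^\theta_{x,\hat a}/n}+M\kappa L_s\rho$. Chaining the two displays through this inequality gives the claim.

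The main obstacle is the embedding in the first step. The proof has to confirm that the hard instances of Theorem \ref{thm:cb_lower} still satisfy the margin condition and the overlap constant $k$ once the proxy is attached. With $\varepsilon_i\equiv0$ the joint law of $(Y_i,X_i,D_i,A_i)$ is unchanged, so this should go through directly. If the class $\mathcal{P}(\sigma_0,\rho)$ is read as requiring $\mathrm{rMSE}(\hat{A}_i)$ to be exactly $\rho$ rather than at most $\rho$, I would instead add independent noise of variance $\rho^2$. The CB regret remains invariant under that choice as well.
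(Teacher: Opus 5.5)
Your proposal is correct and follows the paper's own argument essentially step for step. You embed the least-favorable instances of Theorem \ref{thm:cb_lower} into $\mathcal{P}(\sigma_0,\rho)$ by attaching a proxy, which leaves Covariate-Based regret unchanged; you bound the augmented minimax risk from above by the EWM rule via Theorem \ref{thm:ha_upper}; and you chain the two bounds through the rearranged hypothesis. One small refinement: the embedding with $\varepsilon_i\equiv 0$ only needs to be carried out for those hard instances, where $A_i$ can be taken independent of $D_i$ given $X_i$ so that $\hat{A}_i=A_i$ respects Assumption \ref{ass:kt18}.2. It does not need to hold for every $P\in\mathcal{P}(\sigma_0)$, as you claim.
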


Corollary \ref{cor:proxy_domination} shows that, when latent heterogeneity in treatment effects exceeds the sum of (i) the increase in policy space complexity due to adding $\hat{A}_i$ to the decision problem, and (ii) the probability of disagreement between the oracles with full and partial information of $A_i$ rescaled by $M$, then, accounting for unobserved heterogeneity through $\hat{A}_i$ when learning the optimal policy minimax-dominates ignoring it.
The result follows by combining Theorem \ref{thm:cb_lower} and Theorem \ref{thm:ha_upper} on the common class \(\mathcal P(\sigma_0,\rho)\). Indeed, the lower-bound construction used in the proof of Theorem \ref{thm:cb_lower} can be extended to \(\mathcal P(\sigma_0,\rho)\) by augmenting it with any proxy process satisfying Assumption \ref{ass:proxy} and \(\mathrm{rMSE}(\hat A_i)\le \rho\). Since Covariate-Based rules do not depend on \(\hat A_i\), this extension leaves the regret unchanged, and hence
\begin{equation}
\inf_{\{\hat G(X_i)\}}\sup_{P\in\mathcal P(\sigma_0,\rho)}R(\hat G_\theta(X_i))
\ge
C_2\frac{M}{k}\sqrt{\frac{v_x^\theta}{n}}+C_3\sigma_0.
\end{equation}
On the other hand, by Theorem \ref{thm:ha_upper}, for every \(P\in\mathcal P(\sigma_0,\rho)\),
\begin{equation}
R(\hat G_\theta(X_i,\hat A_i))
\le
C_1\frac{M}{k}\sqrt{\frac{v_{x,\hat a}^\theta}{n}}+M\kappa L_s\rho,
\end{equation}
and therefore
\begin{equation}
\inf_{\{\hat G(X_i,\hat A_i)\}}\sup_{P\in\mathcal P(\sigma_0,\rho)}R(\hat G_\theta(X_i,\hat A_i))
\le
C_1\frac{M}{k}\sqrt{\frac{v_{x,\hat a}^\theta}{n}}+M\kappa L_s\rho.
\end{equation}
The condition above then implies the claimed minimax ordering.

\section{Targeted Data Collections for Better Policies} \label{sec:data_collection}

In this section, I leverage the regret bounds derived in section \ref{sec:regret_bounds} to study how a policymaker should design data collection before learning policies. I consider the precision of the proxy $\hat{A}_i$ and the available sample size for learning policies as the outcome of ex ante design choices. On the one hand, collecting richer information on the latent factor, for instance by administering longitudinal surveys, collecting repeated measurements, or increasing training sample size for statistical models can improve the precision of $\hat{A}_i$. On the other hand, these same resources could be used to increase the sample size available for learning policies, for instance by running a larger field experiment, or acquiring a larger observational dataset. This creates a resource allocation problem between two competing objectives: reducing the measurement error in the proxy and reducing the statistical error in the estimated policy.

To formalize this problem, I introduce an information index $t \in \mathcal{T}$ that maps into the rMSE of $\hat{A}_i$. Higher values of $t$ correspond to richer information and therefore to more precise measurements of $A_i$.

\begin{assumption}[Information Index] \label{ass:time}
There exists an information index $t \in \mathcal T$ and a non-increasing function
$h:\mathcal T \to \mathbb R_+$ such that, for every $t \in \mathcal T$ and every
$P\in \mathcal{P}$,
\begin{equation}
\operatorname{rMSE}(\hat A_i(t)) \le h(t).
\end{equation}
\end{assumption}

Assumption \ref{ass:time} requires that there exists a function that maps the information index into the rMSE of $\hat{A}_i$ and that such function is non-increasing. Define $\hat{A}_i(t)$ as the measurement or estimate of $A_i$ under information $t$.

\begin{example}\label{ex:repeated_measure}
Suppose the policymaker cannot observe $A_i$ directly, but can collect $t \in \mathbb{N}$ independent noisy measurements of it: $M_{ij} = A_i + U_{ij},\ j=1,\dots,t$, where the measurement errors satisfy Assumption \ref{ass:proxy} and are conditionally independent across $j$ given $X_i$. Define the proxy as the sample average of the $t$ repeated measurements:
\begin{equation}
    \hat{A}_i(t):=\frac{1}{t}\sum_{j=1}^t M_{ij} .
\end{equation}
Define the conditional bias and variance of each measurement as:
\begin{equation}
    b_P(X_i):=\mathbb{E}_P[U_{ij}\mid X_i],
    \qquad
    \sigma^2_{U,P}(X_i):=\mathbb{V}_P(U_{ij}\mid X_i).
\end{equation}
If there exist constants $b_0,m_0>0$ such that, for every $P\in\mathcal P$,
\begin{equation}
    \sqrt{\mathbb{E}_P[b_P(X_i)^2]}\le b_0
    \qquad \text{and} \qquad
    \sqrt{\mathbb{E}_P[\sigma^2_{U,P}(X_i)]}\le m_0,
\end{equation}
then Assumption \ref{ass:time} is satisfied with the envelope:
\begin{equation}
    h(t)=b_0+\frac{m_0}{\sqrt{t}}.
\end{equation}
The formal proof is reported in Appendix \ref{app:examples}.
\end{example}

I now define the policymaker's design problem. The policymaker jointly chooses the information level $t$ and the sample size $n$ before learning the policy. The objective is to minimize worst-case regret subject to a finite budget. The policy can either ignore the proxy and rely only on observed covariates, or incorporate the proxy estimated at information level $t$.

\begin{definition}\label{def:baseline_collection}
Let $c(t,n)$ denote the total cost of collecting information level $t$ and sample size $n$. Consider the design problem of a policymaker that needs to decide the information level $t$ and the sample size $n$ under a budget constraint before learning the optimal policy $\hat{G}_\theta(X_i,\hat{A}_i(t))$:
\begin{align}
    & \min_{t \in \mathcal{T}, n\in \mathbb{N}} \left\{ \sup_{P \in \mathcal{P}(\sigma_0,h(t))}{R}(t, n) \right\} \\
    & \text{s.to:}\ c(t,n)\leq B_0
\end{align}
where $R(t,n):= \mathbb{E}_{P^n}[W(G^*_\theta(X_i,A_i)) - W(\hat{G}_\theta(Z_i(t)))]$ for $Z_i(t) \in \{X_i, (X_i,\hat{A}_i(t))\}$, and $\mathcal{P}(\sigma_0,h(t))$ is defined as in Corollary \ref{cor:proxy_domination} with $\rho = h(t)$.
\end{definition}

Definition \ref{def:baseline_collection} makes explicit that the policymaker faces two margins of choice. The first concerns whether to use a proxy for the latent factor at all, and if so with what level of precision. The second concerns how many observations to collect for learning the optimal policy. The budget constraint captures the idea that improving one dimension necessarily crowds out investment in the other.

In general, the results in section \ref{sec:regret_bounds} do not rewrite the minimax problem in Definition \ref{def:baseline_collection} exactly. Rather, they provide regret bounds that can be used to derive sufficient conditions for minimax dominance across feasible designs. 

Let $n(B_0,t):= \max\{n \in \mathbb N : c(t,n) \leq B_0\}$ denote the feasible sample size at budget level $B_0$ and information index $t$. 
The feasible sample size under the Covariate-Based design is $n_{CB}=n(B_0, 0)$, while the feasible sample size under the augmented design with information level $t$ is $n_A(t)=n(B_0, t)$. Given the results in Theorems \ref{thm:cb_upper}--\ref{thm:ha_lower}, define:
\begin{align}
    \overline{R}_{CB}(B_0) 
    &:= C_1 \frac{M}{k}\sqrt{\frac{v_x^\theta}{n_{CB}}} + \sigma_0, \\
    \underline{R}_{CB}(B_0) 
    &:= C_2 \frac{M}{k}\sqrt{\frac{v_x^\theta}{n_{CB}}} + C_3 \sigma_0, \\
    \overline{R}_{A}(t,B_0) 
    &:= C_1 \frac{M}{k}\sqrt{\frac{v_{x,\hat{a}}^\theta}{n_A(t)}} + M\kappa L_sh(t), \\
    \underline{R}_{A}(t,B_0) 
    &:= C_4 \frac{M}{k}\sqrt{\frac{v_{x,\hat{a}}^\theta}{n_A(t)}} + C_5 M\kappa h(t).
\end{align}
By Theorems \ref{thm:cb_upper} and \ref{thm:ha_upper}, $\overline{R}_{CB}(B_0)$ and $\overline{R}_{A}(t,B_0)$ are valid upper bounds on the minimax regret of the feasible Covariate-Based and augmented designs. By Theorems \ref{thm:cb_lower} and \ref{thm:ha_lower}, together with the same extension argument used in Corollary \ref{cor:proxy_domination}, $\underline{R}_{CB}(B_0)$ and $\underline{R}_{A}(t,B_0)$ are valid lower bounds.

Let $V_{CB}(B_0)$ denote the minimax regret of the feasible Covariate-Based design, and let $V_A(t,B_0)$ denote the minimax regret of the feasible augmented design with information level $t$. 

Assume the policymaker must have some prior information on the severity of the approximation error incurred by ignoring latent heterogeneity.

\begin{assumption}[Prior on $\bar{\sigma}_{\tau|x}$]\label{ass:prior_sigma}
    Assume the policymaker has some prior knowledge on the conditional variance of the treatment effect $\bar{\sigma}_{\tau|x}$:
    \begin{equation}
        \bar{\sigma}_{\tau|x} \leq \sigma_0
    \end{equation}
\end{assumption}

Assumption \ref{ass:prior_sigma} does not require point identification of the unexplained heterogeneity in treatment effects. It only requires an upper bound that can be interpreted as prior or contextual knowledge about the empirical relevance of latent heterogeneity.

The next proposition uses these bounds to provide sufficient conditions for minimax dominance across feasible designs.

\begin{proposition}[Sufficient Conditions for Minimax Dominance] \label{prop:minimax_design}
Under the Assumptions \ref{ass:kt18} to \ref{ass:prior_sigma}, the following statements hold:
\begin{enumerate}
    \item If, $\forall t \in \mathcal{T}$,
    \begin{equation}
        \overline{R}_{CB}(B_0) < \underline{R}_{A}(t,B_0),
    \end{equation}
    then
    \begin{equation}
        V_{CB}(B_0) < V_A(t,B_0)\qquad \forall t\in\mathcal{T},
    \end{equation}
    so the Covariate-Based design minimax-dominates every augmented design. As a consequence, 
    \begin{equation}
        (t^*,n^*) = \left( 0, n(B_0,0) \right)
    \end{equation}

    \item If there exists $t^*\in\mathcal{T}$ such that
    \begin{equation}
        \overline{R}_{A}(t^*,B_0)
        <
        \min \left\{\underline{R}_{CB}(B_0), \inf_{t\in\mathcal{T}\setminus\{t^*\}}\underline{R}_{A}(t,B_0),\right\}
    \end{equation}
    then
    \begin{equation}
        V_A(t^*,B_0) < V_A(t,B_0)\ \forall t\neq t^*\qquad \wedge \qquad V_A(t^*,B_0) < V_{CB}(B_0)
    \end{equation}
    so the augmented design with information level $t^*$ is the minimax-dominant design with respect to the CB rule, and within the class of augmented rules. As a consequence,
    \begin{equation}
        (t^*,n^*) = \left(t^*, n(B_0,t^*) \right)
    \end{equation}
\end{enumerate}
\end{proposition}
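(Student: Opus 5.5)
The proof is a sandwich argument. It only compares the upper and lower envelopes already established for the minimax values $V_{CB}(B_0)$ and $V_A(t,B_0)$. The first step records these envelopes:
\begin{equation*}
\underline{R}_{CB}(B_0)\le V_{CB}(B_0)\le \overline{R}_{CB}(B_0),\qquad \underline{R}_{A}(t,B_0)\le V_A(t,B_0)\le \overline{R}_{A}(t,B_0)\quad \forall t\in\mathcal{T}.
\end{equation*}
These follow from Theorems \ref{thm:cb_upper}--\ref{thm:ha_lower} applied at the feasible sample sizes $n_{CB}=n(B_0,0)$ and $n_A(t)=n(B_0,t)$, with $\rho=h(t)$.

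Two details need checking here.

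\emph{The Covariate-Based envelope on the common class.} The CB bounds are stated over $\mathcal{P}(\sigma_0)$, but the design problem uses $\mathcal{P}(\sigma_0,h(t))$. For the lower bound I use the extension argument from Corollary \ref{cor:proxy_domination}: augment the least-favorable construction with any proxy satisfying Assumption \ref{ass:proxy}; CB rules ignore $\hat{A}_i$, so their regret is unchanged. For the upper bound, note that $\mathcal{P}(\sigma_0,h(t))\subseteq\mathcal{P}(\sigma_0)$, and that $\Delta(s,\Theta_x)=0$ under Assumption \ref{ass:score_funct}.2. Assumption \ref{ass:prior_sigma} is what supplies the bound $\sigma_0$.

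\emph{The augmented envelope.} Theorem \ref{thm:ha_upper} requires $\rho\le 1/(2\kappa)$. I will state that $h(t)\le 1/(2\kappa)$ on $\mathcal{T}$, or restrict to those $t$. Assumption \ref{ass:time} ensures every $P$ in the relevant class has $\operatorname{rMSE}(\hat A_i(t))\le h(t)$.

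\textbf{Part 1.} For each $t$, chain the inequalities:
\begin{equation*}
V_{CB}(B_0)\le\overline{R}_{CB}(B_0)<\underline{R}_A(t,B_0)\le V_A(t,B_0).
\end{equation*}
Strict dominance for every $t$ means the minimizer of the design problem uses the CB rule. The cost $c(0,n)$ spends nothing on measurement, so the best CB design takes the largest feasible sample, $n(B_0,0)$. This uses the fact that the upper envelope $\overline{R}_{CB}$ is decreasing in $n$. Strictly speaking, that monotonicity holds for the envelope rather than for $V_{CB}$ itself, so I will interpret "$n^*$" as the envelope-optimal choice, consistent with how the paper phrases its designs.

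\textbf{Part 2.} For any $t\ne t^*$, chain
\begin{equation*}
V_A(t^*,B_0)\le\overline{R}_A(t^*,B_0)<\inf_{t'\ne t^*}\underline{R}_A(t',B_0)\le\underline{R}_A(t,B_0)\le V_A(t,B_0),
\end{equation*}
and similarly
\begin{equation*}
V_A(t^*,B_0)\le\overline{R}_A(t^*,B_0)<\underline{R}_{CB}(B_0)\le V_{CB}(B_0).
\end{equation*}
Together these give the unique minimax-optimal design $(t^*,n(B_0,t^*))$, with the sample size again exhausting the budget.

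The main obstacle is the bookkeeping on the common class, not the inequalities themselves. The CB and augmented bounds were proved over different DGP classes, so the extension argument must hold uniformly in $t$. Concretely, for each $t$ I need to exhibit a proxy process with rMSE at most $h(t)$ attached to the Theorem \ref{thm:cb_lower} construction; the trivial choice $\hat A_i=A_i$ (zero noise) works. Symmetrically, the Theorem \ref{thm:ha_lower} construction must respect $\bar\sigma_{\tau|x}\le\sigma_0$. If it does not automatically, I would rescale its dependence of $\tau$ on $a$ so that it does, possibly at the cost of adjusting $C_5$.
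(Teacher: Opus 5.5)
Your sandwich argument is the paper's proof. The paper records $\underline R_{CB}(B_0)\le V_{CB}(B_0)\le\overline R_{CB}(B_0)$ and $\underline R_A(t,B_0)\le V_A(t,B_0)\le\overline R_A(t,B_0)$ by citing Theorems~\ref{thm:cb_upper}--\ref{thm:ha_lower} and the extension argument of Corollary~\ref{cor:proxy_domination}. It then chains strict inequalities in both parts exactly as you do. Your CB-side bookkeeping is correct and more careful than the paper: $\Delta(s,\Theta_x)=0$ via Assumption~\ref{ass:score_funct}.2, and attaching $\hat A_i=A_i$ to the Theorem~\ref{thm:cb_lower} construction. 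Your caveat about monotonicity in $n$ is unnecessary. A procedure may discard observations, so the minimax value itself is weakly nonincreasing in $n$.

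The obstacle you flag at the end is real, the paper's proof passes over it, and your repair does not work as stated.
\begin{itemize}
\item In the Theorem~\ref{thm:ha_lower} construction, $\tau(X_i,A_i)=M\operatorname{sign}(A_i)$ with $A_i\perp X_i$ symmetric, so $\bar\sigma_{\tau|x}=M$. It therefore lies in $\mathcal P(\sigma_0,h(t))$ only if $\sigma_0\ge M$. The Theorem~\ref{thm:cb_lower} construction requires $\sigma_0\le M/2$. So, as written, the two constructions behind the lower envelopes used in Part~2 never lie in the same class.
\item Rescaling the amplitude to $\tau=\sigma_0\operatorname{sign}(A_i)$ gives an information loss of $\sigma_0\kappa\rho/2$. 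This replaces $M$ by $\sigma_0$, a factor $\sigma_0/M$, not a universal change in $C_5$.
\item Localizing the jump, $\tau=M\operatorname{sign}(A_i)\mathbf 1\{|A_i|<2\rho\}$, keeps the loss at $M\kappa\rho/2$. But it has $\bar\sigma_{\tau|x}=M\sqrt{2\kappa\rho}$, so it only covers $\sigma_0\ge M\sqrt{2\kappa h(t)}$.
\item No construction can rescue the stated envelope in general. Suppose $\mathcal G^\theta_{x,\hat a}$ contains the rules that ignore $\hat a$. Then the bound on component $A$ in the proof of Theorem~\ref{thm:cb_upper} (with $\Delta(s,\Theta_x)=0$) gives $W(G^*_\theta(X_i,A_i))-W(G^*_\theta(X_i,\hat A_i))\le\sigma_0$. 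Hence $V_A(t,B_0)\le C_1\frac{M}{k}\sqrt{v^\theta_{x,\hat a}/n_A(t)}+\sigma_0$. This falls below $\underline R_A(t,B_0)$ for large $n_A(t)$ whenever $\sigma_0<C_5M\kappa h(t)$.
\end{itemize}
So Parts~1 and~2 are proved only once $\underline R_A$ is replaced by an envelope that actually holds on $\mathcal P(\sigma_0,h(t))$. Two options are restricting to $\sigma_0\ge M\sqrt{2\kappa h(t)}$, or using an information term that depends on $\sigma_0$. The chaining itself is fine.
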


Proposition \ref{prop:minimax_design} provides sufficient conditions under which the regret bounds from Theorems \ref{thm:cb_upper}--\ref{thm:ha_lower} are informative enough to certify minimax dominance across feasible designs. Part 1 delivers a pairwise comparison between the Covariate-Based design and any augmented design indexed by $t$. Part 2 can then be used to rank augmented designs among themselves and identify a minimax-dominant information level whenever the comparison with the Covariate-Based design does not certify dominance in its favor. 
The formal proof is reported in Appendix \ref{app:formal_proofs}.

\section{Empirical Application}\label{sec:emp_application}
In this section, I introduce two procedures to (i) rank policy rules that ignore or incorporate unobserved heterogeneity and (ii) estimate the optimal allocation of budget between the tasks of measuring (or estimating) latent factors and estimating policies. I conduct three empirical exercises and deliver new insights on the data from \cite{hussam_2022}. 

The authors study the effect of providing a cash grant to micro-entrepreneurs on their profits with a randomized controlled trial in rural India. 
They introduce a new proxy to measure entrepreneurs' business skills, an unobserved dimension identified from previous literature as policy-relevant for targeting interventions apt to stimulate economic development. 
This proxy is based on the ranking that groups of five entrepreneurs give each other across different outcomes. 
The authors name this proxy \textit{community rankings} and claim as their main result that it can help target high-growth micro-entrepreneurs. 

The study by \cite{hussam_2022} provides a good setting of application for two main reasons. 
First, the applied research question, whether targeting based on a proxy of a policy-relevant unobserved characteristic is welfare-improving, is strongly aligned with the theoretical investigation of the present paper.
Second, the way the proxy is measured, through the average of five repeated measurements, allows me to study how the performance of policy recommendations varies with the precision of community rankings, and estimate the optimal allocation of budget between larger experiments and higher number of measurements.

In the first exercise, I confirm qualitatively the main result from \cite{hussam_2022} and provide new estimates for the magnitude of the welfare gains. I show that targeting resources along the values of community rankings increases average welfare by $5\%$, and reduces by two thirds the probability of producing welfare losses (harm rate for later reference) as compared to scaling up the intervention by random assignment. 
This gain reduces to $3\%$ welfare increase and half harm rate reduction, when compared to covariate based rules. 

In the second exercise, I leverage the fact that the proxy was based on the average measurement of five separate rankers to show that, keeping sample size fixed, the gains of targeting based on community ranking increase with the number of rankers.

In the third exercise, I impose a budget constraint and estimate the optimal number of measurements and sample sizes for different budgets. 
As new insights, I show that 
(i) even for limited budgets, it is never optimal to ignore the heterogeneity induced by business skills; 
(ii) when budget is limited, it is optimal to collect fewer measurements, in favor of a larger sample size; 
(iii) for high budgets, it is optimal to collect as many measurements as possible.

\subsection{Summary of the Experimental Design}
The trial was conducted in the city of Amravati, India, between 2016 and 2018. 
It was designed to assess whether local community members possess predictive information about heterogeneity in entrepreneurial returns and can be useful to improve the targeting of cash grants. 

The sample consists of 1,345 micro-entrepreneurs operating informal businesses in retail and services. 
First, participants were assigned to peer groups of five or six based on geographic proximity. 
Within these groups, individuals were asked to rank their peers on future business outcomes, including future profits and marginal returns to capital. 
The main measure of community ranking used in the paper is the average fraction of peers who ranked a given entrepreneur in the top quartile across the different outcomes for which rankings were elicited. 
One-third of the sample was then randomly assigned to receive an unconditional cash grant of 6,000 INR (roughly \$100).

The available data include a set of characteristics collected at baseline and after the treatment.  
I consider as outcome variable the profits realized 2 months after the intervention. 

\subsection{Ranking Policy Classes}
In this section, I rank CB and $\hat{a}$-Augmented rules and quantify the welfare gains from incorporating community rankings into treatment assignment.       

\paragraph{Policy Classes.} I consider the following Covariate-Based rule:
\begin{equation}
    G(X_i) = \mathds{1}\left\{ X_{i,1} \geq t_1\ \&\ X_{i,2} \geq t_2  \right\}
\end{equation}
where $X_{i,1}$ is age and $X_{i,2}$ is education in years. Age and education are both identified as policy-relevant dimensions by \cite{hussam_2022} and previous literature. 
The $\hat{a}$-CB rule is then defined as:
\begin{equation}
    G(X_i, \hat{A}_i)= \mathds{1}\left\{ X_{i,1} \geq t_1\ \&\ X_{i,2} \geq t_2\ \&\ \hat{A}_i > t_{\hat{a}} \right\}
\end{equation}
where $\hat{A}_i$ is community ranking.

Finally, I also consider a benchmark random rule $G_\text{rand}$ that assigns the treatment at random. 
To evaluate the performance of each rule, I first randomly split the sample into an estimating and test set; then, I use the estimating set to estimate the rules $G(X_i)$, $G(X_i,\hat{A}_i)$ that solve the respective maximization problems; finally, I compute the empirical welfare generated by each estimated rule in the test sample. 
I leverage the randomness of the sample split to recover the distribution of out-of-sample empirical welfare over $B=2000$ different draws of the estimating and test set data. The sample splitting procedure is illustrated in Figure \ref{fig:sample_split}. I illustrate the evaluation algorithm in Algorithm \ref{alg:welf_evaluation}.

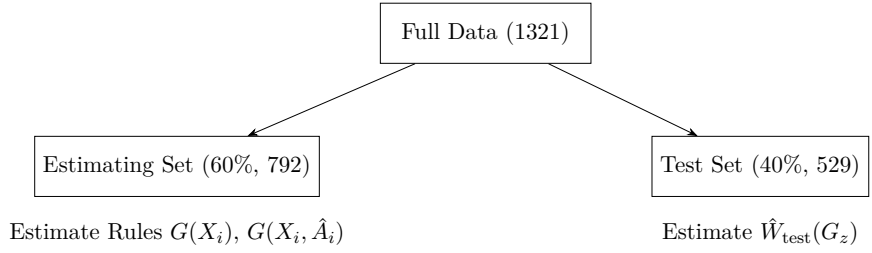
\begin{figure} 
\caption{Sample Splitting} 
\label{fig:sample_split}
\centering
\scalebox{0.8}{
\begin{tikzpicture}[node distance=1.2cm and 1cm, every node/.style={font=\small}] 
  \node[draw, rectangle, minimum width=3.5cm, minimum height=1cm] (full) {Full Data (1321)};
  \node[draw, rectangle, minimum width=3cm, minimum height=1cm, below left=of full] (sample) {Estimating Set (60\%, 792)};
  \node[draw, rectangle, minimum width=3cm, minimum height=1cm, below right=of full] (test) {Test Set (40\%, 529)};

  \draw[-{Stealth}] (full) -- (sample);
  \draw[-{Stealth}] (full) -- (test);
  
  \node[below=0.2cm of sample] {Estimate Rules $G(X_i)$, $G(X_i,\hat{A}_i)$};
  \node[below=0.2cm of test] {Estimate $\hat{W}_{\text{test}}(G_z)$};
\end{tikzpicture}}
\begin{minipage}{1\linewidth}
    \footnotesize
    \textit{\textbf{Notes:}} This figure illustrates the sample splitting procedure specifying the relative and absolute size of each split. All shares are relative to the total sample. 
\end{minipage}
\end{figure}

The test set empirical welfare of a given rule $G_z$ is computed as:

\begin{equation} \label{eq:test_welfare}
    \hat{W}_{\text{test}}(G_z) = \frac{1}{529} \sum_{i\in S_\text{test}} \left[ \frac{Y_i D_i}{1/3} \cdot \mathbf{1}\{i \in G_z\} + \frac{Y_i (1 - D_i)}{2/3} \cdot \mathbf{1}\{i \notin G_z\} \right] 
\end{equation}

where $Y_i$ denotes profits the micro-entrepreneur made in the 60 days following the intervention.

In Figure \ref{fig:welfare_ranking}, I report the cumulative distribution of welfare over the different draws of the estimating and test sets. 
In column 1 of Table \ref{tab:welfare_gains}, I report the empirical cdf of welfare evaluated at the status quo, the harm rate. It measures the probability that a given rule generates a welfare lower than the status quo. 
In columns $(2)-(4)$ of Table \ref{tab:welfare_gains}, I report the average pairwise difference in test welfare between different rules.

First, all non-random rules dominate the random rule. Therefore, if a government were to scale up this intervention, scaling it without targeting would not be optimal. 
Second, the CB rule is stochastically dominated by augmented rules. Therefore, as claimed in \cite{hussam_2022}, using community ranking as a targeting variable produces a welfare gain. In particular, $\hat{a}$-CB rules achieve an average welfare $247\$$ ($5\%$) higher than random rules and $182\$$ ($4\%$) higher than CB rules.
Finally, targeting using community rankings reduces the harm rate by a half compared to random rules and a third compared to CB rules. 
This means that, had the policymaker scaled up the cash transfer intervention learning the optimal $\hat{a}$-CB rule from a sample of the size of the training set, the probability of that being harmful over the distribution of the estimating sample is reduced by a half (third) compared to random (CB) rules.

\begin{figure}[h!]
    \centering
    \caption{Welfare Gains by Policy Rule}\label{fig:welfare_ranking}
    \scalebox{1.6}{
    \includegraphics[width=0.4\linewidth]{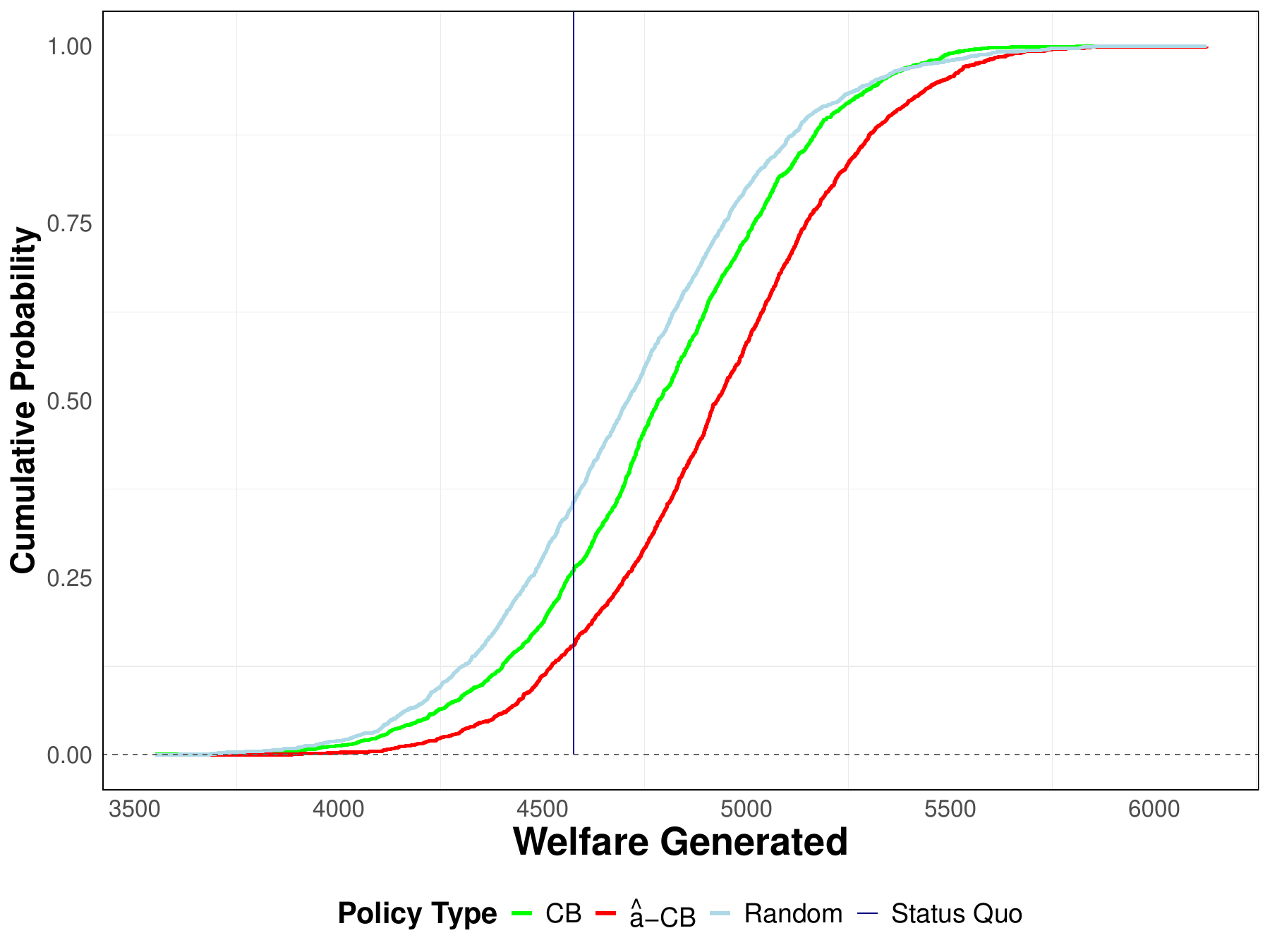}}
    \begin{minipage}{1\linewidth}
    \footnotesize 
    \textit{\textbf{Notes:}} This figure illustrates the empirical cumulative distribution of welfare generated by $\hat{G}_\text{rand}$ (light blue), $\hat{G}(X_i)$ (green), $\hat{G}(X_i,\hat{A}_i)$ (red) over 2000 draws of the estimating and test set sample split. I indicate the status quo (average outcome for untreated units) with the solid navy line. Algorithm \ref{alg:welf_evaluation} illustrates the procedure, Figure \ref{fig:sample_split}, and Equation \ref{eq:test_welfare} define the welfare measure.
    \end{minipage}
\end{figure}

\begin{table}[htpb!]
\centering
\begin{threeparttable}
\caption{Welfare Gains by Policy Rule}
\label{tab:welfare_gains}
\begin{tabular}{lrrrr}
\toprule
\textbf{Policy Rule} & Harm Rate & Rand. & CB & $\hat{a}$-CB \\
\midrule
& (1) & (2) & (3) & (4) \\
\midrule
Status Quo & - & +171\$\ (+4\%) & +245\$\ (+5\%) & +384\$\ (+8\%) \\
Rand. & 0.32 & - & +73\$\ (+2\%) & +213\$\ (+5\%) \\
CB & 0.22 & - & - & +139\$\ (+3\%) \\
$\hat{a}$-CB & 0.14 & - & - & - \\
\midrule
Status Quo & 4,540\$ & - & - & - \\
\bottomrule
\end{tabular}
\begin{tablenotes}[flushleft]
\footnotesize
\item \textit{\textbf{Notes:}} Each cell reports the mean welfare gain of the column policy over the row policy (in \$, with percentage relative to the status quo welfare level), averaged across $B=2{,}000$ sample-splitting replications. Harm Rate denotes the share of replications in which the policy produces lower average welfare than the status quo. The bottom row reports the mean status quo welfare level (in \$).
\end{tablenotes}
\end{threeparttable}
\end{table}

\begin{algorithm}[H] 
\caption{Welfare Evaluation}\label{alg:welf_evaluation}
\begin{algorithmic}[1] 
\For{$b = 1$ to $B= 2000$}
  \State Set random seed to $b$.
  \State Random split: $S_n = S^b_\text{est} \cup S^b_\text{test}$
  \State \textbf{Compute Rules:}
  \State Estimate $G(X_i)$ and $G(X_i,\hat{A}_i)$ using $S^b_{\text{est}}$.
  \State \textbf{Evaluate Rules:}
  \State Estimate $\hat{W}^b_{\text{test}}(\hat{G}_\text{r})$, $\hat{W}^b_{\text{test}}(\hat{G}(X_i))$, $\hat{W}^b_{\text{test}}(\hat{G}(X_i,\hat{A}_i))$.
\EndFor
\end{algorithmic}
\end{algorithm}

\subsection{Evidence of Decay of Performance}

In this section, I provide empirical evidence for the theoretical prediction embedded in Theorem \ref{thm:ha_upper}: welfare gains from $\hat{a}$-CB rules decrease as proxy noise increases.
Recall that community rankings are defined as the average fraction of peers who rank a given entrepreneur in the top quartile, elicited from four or five separate rankers.\footnote{Only 37 entrepreneurs have five rankers.}
This feature of the experimental design allows me to vary proxy precision by restricting the number of rankers used to construct it.

Fixing the sample size to the full dataset, I define $\hat{a}_j$ as the community ranking proxy constructed from $j \in \{1,\ldots,5\}$ randomly selected rankers. Higher values of $j$ correspond to more precise measurements of the latent business skill, with $\hat{a}_5$ coinciding with the full proxy analyzed in the previous subsection.\footnote{Refer to Example \ref{ex:repeated_measure} for a formal justification.}
In Figure \ref{fig:original_vs_tmeasures}, I compare the original measure with $\hat{a}_j$ and show that for $j=5$ the two measures coincide, while as $j$ decreases $\hat{a}_j$ gets scattered around the original proxy.

Table \ref{tab:welfare_gains_measures} reports, for each value of $j$, the average welfare gain of the $\hat{a}_j$-CB rule relative to three benchmarks: the status quo, i.e., treating no one (column 1); random assignment (column 2); and the CB rule (column 3). 
To avoid ranker-specific effects, the average is computed over the sample splits $B=2000$ and over $R=30$ random selections of $j$ rankers.

The welfare gain of $\hat{a}_j$-CB over random assignment and CB rules is positive and increases monotonically in $j$ for $j\in [1,4]$.
This pattern mimicks the upper bound in Theorem \ref{thm:ha_upper}: as $\operatorname{rMSE}(\hat{A}_j)$ shrinks, the noise-related term in the $\hat{a}$-CB regret bound falls, narrowing the gap relative to the oracle and widening the welfare advantage over rules that ignore latent heterogeneity altogether.
One puzzling result is that welfare gains slightly decrease at $j=5$. This pattern may be explained by the lack of statistical power due to the small size of the sample, expecially considering that only 37 entrepreneurs in the sample have 5 separate non-self rankers.

\begin{table}[htpb!]
\centering
\begin{threeparttable}
\caption{Welfare Gains of $\hat{a}$-CB by Number of Measurements}
\label{tab:welfare_gains_measures}
\begin{tabular}{lrrr}
\toprule
\textbf{Measure} & vs.\ Status Quo & vs.\ Random & vs.\ CB \\
\midrule
& (1) & (2) & (3) \\
\midrule
$\hat{a}_1$ & +304\$\ (+7\%) & +113\$\ (+2\%) & +52\$\ (+1\%) \\
$\hat{a}_2$ & +346\$\ (+8\%) & +158\$\ (+3\%) & +94\$\ (+2\%) \\
$\hat{a}_3$ & +362\$\ (+8\%) & +171\$\ (+4\%) & +110\$\ (+2\%) \\
$\hat{a}_4$ & +407\$\ (+9\%) & +218\$\ (+5\%) & +155\$\ (+3\%) \\
$\hat{a}_5$ & +392\$\ (+9\%) & +203\$\ (+4\%) & +140\$\ (+3\%) \\
\bottomrule
\end{tabular}
\begin{tablenotes}[flushleft]
\footnotesize
\item \textit{\textbf{Notes:}} Each row corresponds to a different information source used to construct $\hat{a}$. Each cell reports the mean welfare gain of $\hat{a}$-CB over the comparison policy (in \$, with percentage relative to the mean status quo welfare level), averaged across $B=2{,}000$ sample-splitting replications, and $R = 30$ random selection of rankers.
\end{tablenotes}
\end{threeparttable}
\end{table}

\subsection{Estimating Optimal Designs}

In this section, I consider the problem of estimating the optimal data collection plan. 
The key design margin in this setting is the number of peer rankings used to construct the proxy for business skill. 
I use this feature of the data to study how welfare changes as the policymaker trades off the amount of information used to measure the latent trait against the sample size used to learn the optimal policy.

Formally, let $t \in \{0,1,2,3,4,5\}$ denote the number of non-self rankers used to construct the proxy. The case $t=0$ corresponds to a design in which no ranking information is collected and the policymaker relies only on Covariate-Based rules. For $t>0$, I randomly draw $t$ rankers among those available for each entrepreneur, and compute the average ranking across the selected rankers. Higher values of $t$ correspond to richer information and therefore to more precise measurements of the latent trait.

To introduce the budget constraint, suppose that collecting one observation for policy learning costs $c_n$, while each additional ranking used to construct the proxy costs $c_t$ for each unit. Then, for a given budget $B_0$, the feasible sample size satisfies
\begin{equation}
    n(t,B_0) = \left\lfloor \frac{B_0}{c_n + c_t t} \right\rfloor.
\end{equation}
Therefore, increasing $t$ improves the precision of the proxy but reduces the number of observations that can be used to learn the policy. 

I evaluate this trade-off over a grid of budgets $B_0 \in \{600,800,...,2000\}$, setting $c_n=0.75$ and $c_t=0.25$. For each budget and each value of $t$, I estimate the welfare generated by the feasible design using repeated sample splitting. 
At the beginning of each repetition, I draw a common test sample and a common training pool from the main analysis data. 
When $t=0$, I draw up to $n(t,B_0)$ observations from the training pool and estimate the Covariate-Based rectangular rule defined above. 
When $t>0$, I first generate a random proxy $\hat{A}_i(t)$ by selecting $t$ rankers for each entrepreneur. 
I then draw up to $n(t,B_0)$ observations from the resulting training pool. 
On this feasible sample, I estimate both the Covariate-Based rule and the augmented rule. 
As in the ranking exercise, I also consider a benchmark random rule $G_{\text{rand}}$. 
The algorithm is described formally in Algorithm \ref{alg:design}.

I then evaluate the out-of-sample welfare generated by each estimated rule in the corresponding test sample. Repeating this procedure over $B=200$ sample splits and, for each $t>0$, over $R=30$ random realizations of the proxy allows one to recover the average welfare associated with each feasible design. Finally, within each budget level, I define the optimal design as the one that yields the highest average welfare. This procedure allows me to trace the welfare frontier over feasible designs and to estimate how the optimal allocation between the number of measurements $t$ and the policy-learning sample size $n$ changes with the available budget. 

Table \ref{tab:design_problem} and Figure \ref{fig:optimal_n_t} report the main findings. Three results stand out.                                                                                                             
      
First, the optimal design always includes proxy measurements: $t^* \geq 2$ for every budget level considered.    
Even at the tightest budget (\$600), allocating resources to community rankings, despite reducing the policy-learning sample from 794 to 480 observations, yields a welfare gain of $\$100$ $(+2\%)$ over the CB rule computed with maximum sample.
Ignoring latent heterogeneity is suboptimal even when measurement is costly.                                                                                                        
                                                                                                                 
Second, the optimal number of rankers increases with the budget. At low budgets $(\$600-\$800)$, $t^*=2$: the marginal cost of additional rankers crowds out too much sample size, so fewer measurements and a larger sample are preferred. 
From $\$1,000$ onwards, the constraint relaxes and $t^*=4$ becomes optimal, combining higher proxy precision with a feasible sample size.                                                                         
                                                                                                                 
Third, the design saturates: from $\$1,400$, $n^*$ reaches the sample cap (793 observations) and further budget increases yield no additional welfare gain. The welfare frontier flattens, with gains stabilizing at $+\$179$ $(+4\%)$ over the CB benchmark.                  

In Figures \ref{fig:optimal_n_t_c1} and \ref{fig:optimal_n_t_c2} we report the results for different cost functions. 
In Figure \ref{fig:optimal_n_t_c1} we consider the case where the cost of collecting one more measurement is higher than the cost of collecting one more experimental unit. In this case, it is optimal to colelct two measurements, for any budget.
In Figure \ref{fig:optimal_n_t_c2} we consider the case where the two marginal costs are equal. In this case, the conclusions are closer to the main specification.

\begin{figure}[h!]
    \centering
    \caption{Optimal Collection Plans by Budget Levels}
    \label{fig:optimal_n_t}
    \includegraphics[width=0.49\linewidth]{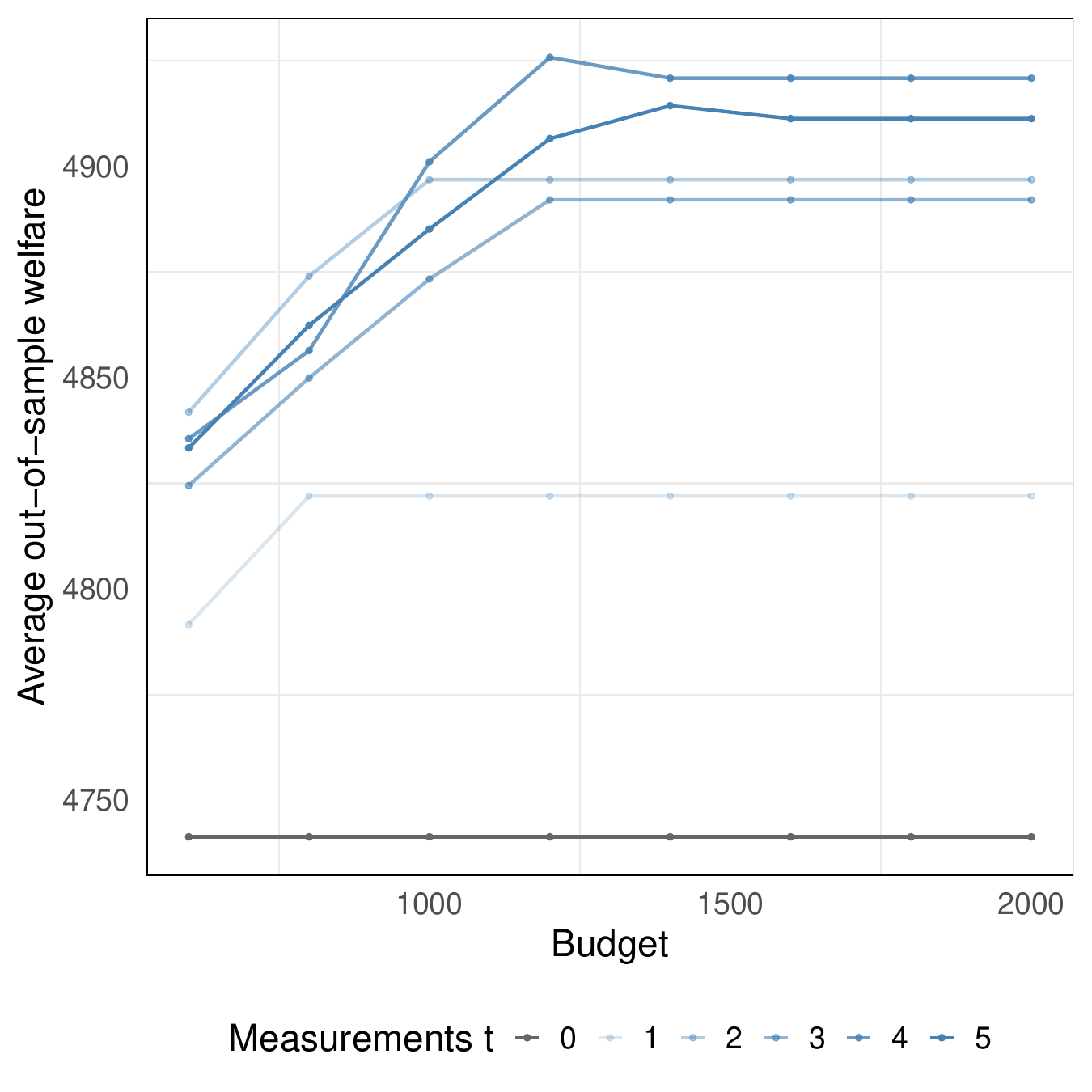}
    \includegraphics[width=0.49\linewidth]{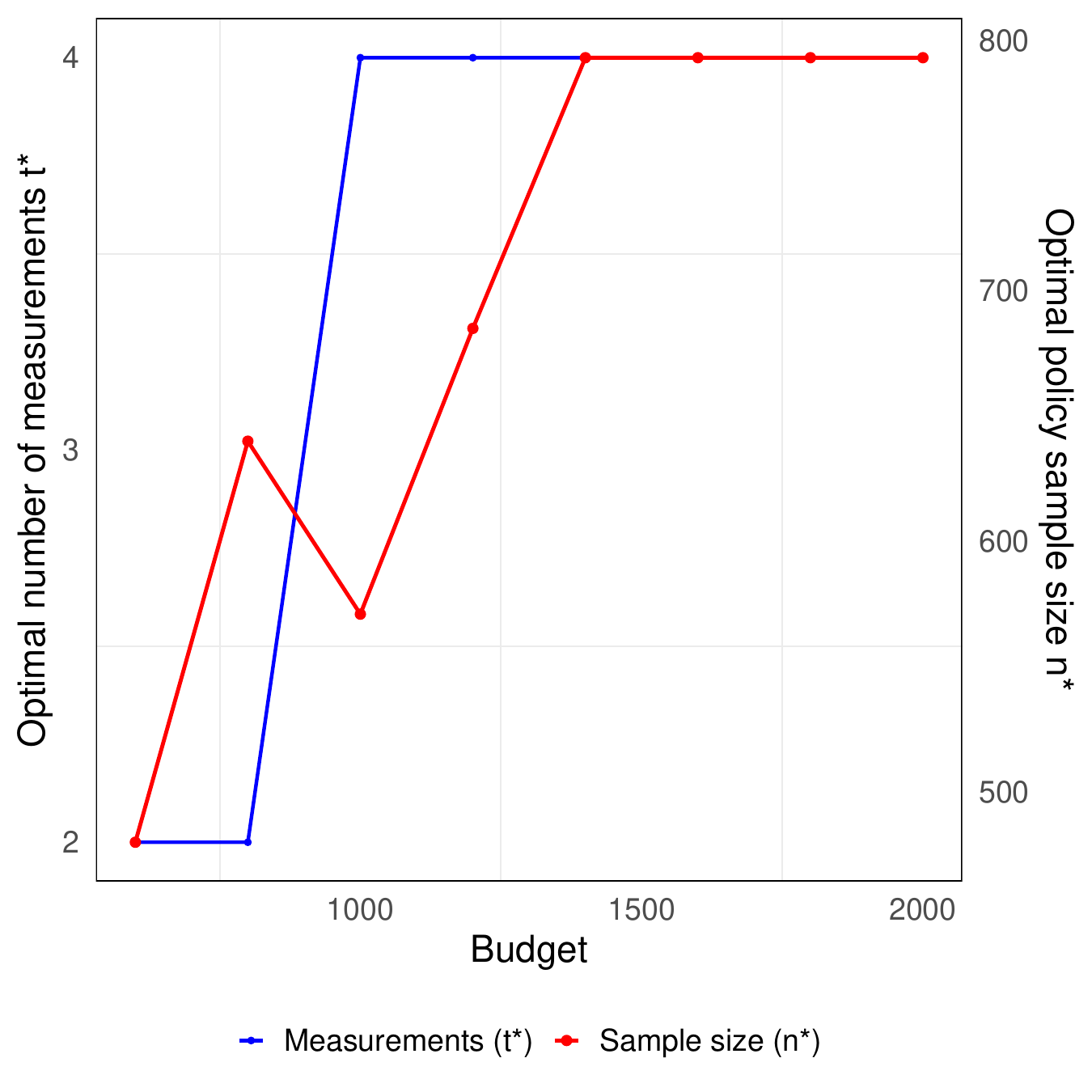}
    \begin{minipage}{1\linewidth}
    \footnotesize
    \textit{\textbf{Notes:}} This figure illustrates the performance of feasible data collection plans across different budget levels. The left panel reports the average out-of-sample welfare generated by designs with different numbers of measurements $t \in \{0,1,...,5\}$. The case $t=0$ corresponds to Covariate-Based rules, while $t>0$ corresponds to $\hat{a}$-Augmented rules constructed using $t$ randomly selected rankers. The right panel reports the optimal design as a function of the budget, showing the optimal number of measurements $t^*$ (left axis) and the corresponding optimal policy-learning sample size $n^*$ (right axis). 
    All results are obtained using the design evaluation procedure described in Algorithm \ref{alg:design}.
    \end{minipage}
\end{figure}

\begin{algorithm}[h!]
\caption{Design Evaluation Under a Budget Constraint}\label{alg:design}
\begin{algorithmic}[1]
\For{$b = 1$ to $B=200$}
    \State Set random seed to $b$.
    \State Random split: $S_n = S^b_\text{est} \cup S^b_\text{test}$
    \For{$B_0 \in \{600,800,..., 2000\}$}
        \For{$t \in \{0,1,2,3,4,5\}$}
            \State Compute feasible sample size $n(t,B_0)=\left\lfloor \frac{B_0}{c_n+c_t t} \right\rfloor$.
            \If{$t=0$}
                \State Draw $n(0,B_0)$ observations from $S_{\mathrm{est}}$.
                \State Estimate $G(X_i)$.
                \State Estimate ${W}^{b}_{\text{test}}(\hat{G}(X_i))$ and ${W}^{b}_{\text{test}}(\hat{G}_{\text{rand}})$.
            \Else
                \For{$r = 1$ to $R=30$}
                    \State Set random seed to $100000 \cdot b + 1000 \cdot t + r$.
                    \State Estimate $\hat{A}_i^{(b,r)}(t)$ for $t$ randomly selected rankers.
                    \State Draw $n(t,B_0)$ observations from $S_{\mathrm{est}}$.
                    \State Estimate $G(X_i)$ and $G(X_i,\hat{A}_i^{(b,r)}(t))$.
                    \State Estimate ${W}^{(b,r)}_{\text{test}}(\hat{G}(X_i))$, ${W}^{(b,r)}_{\text{test}}(\hat{G}(X_i,\hat{A}_i(t)))$, and ${W}^{(b,r)}_{\text{test}}(\hat{G}_{\text{rand}})$.
                \EndFor
            \EndIf
        \EndFor
    \EndFor
\EndFor
\end{algorithmic}
\end{algorithm}

\begin{table}[htpb!]
\centering
\begin{threeparttable}
\caption{Optimal Collection Plans by Budget Levels}
\label{tab:design_problem}
\begin{tabular}{lrrrrrr}
\toprule
Budget & $t^*$ & $n^*$ & Welfare$(t^*)$ & $n_0$ & Welfare$(t=0)$ & Gain \\
\midrule
& (1) & (2) & (3) & (4) & (5) & (6) \\
\midrule
600\$ & 2 & 480 & 4,842\$ & 794 & 4,741\$ & +100\$\ (+2\%) \\
800\$ & 2 & 640 & 4,874\$ & 794 & 4,741\$ & +133\$\ (+3\%) \\
1,000\$ & 4 & 571 & 4,901\$ & 794 & 4,741\$ & +160\$\ (+3\%) \\
1,200\$ & 4 & 685 & 4,926\$ & 794 & 4,741\$ & +184\$\ (+4\%) \\
1,400\$ & 4 & 793 & 4,921\$ & 794 & 4,741\$ & +179\$\ (+4\%) \\
1,600\$ & 4 & 793 & 4,921\$ & 794 & 4,741\$ & +179\$\ (+4\%) \\
1,800\$ & 4 & 793 & 4,921\$ & 794 & 4,741\$ & +179\$\ (+4\%) \\
2,000\$ & 4 & 793 & 4,921\$ & 794 & 4,741\$ & +179\$\ (+4\%) \\
\bottomrule
\end{tabular}
\begin{tablenotes}[flushleft]
\footnotesize
\item \textit{\textbf{Notes:}} For each budget level, columns (1)--(3) report the optimal number of rankers $t^*$, the resulting feasible sample size $n^*$, and the average out-of-sample welfare achieved. Columns (4)--(5) report the sample size and welfare under the CB-only benchmark ($t=0$). Column (6) reports the mean welfare gain of the optimal design over CB-only (in \$, with percentage), averaged across $B=200$ sample-splitting replications with $R=30$ proxy draws each. Costs: \$0.75 per observation, \$0.25 per ranking.
\end{tablenotes}
\end{threeparttable}
\end{table}

\section{Conclusions}\label{sec:conclusion} 
Standard policy learning studies the performance of treatment assignment rules based on observable characteristics. 
A large body of empirical work has established that latent traits, such as ability, motivation, or business skills, are of first-order importance in understanding treatment effect heterogeneity. 
Incorporating these traits into assignment rules comes with two costs: (i) measurement error propagates into the welfare criterion and (ii) the complexity of the policy class increases.
                                                            
I study this trade-off formally deriving rate-sharp regret bounds for Covariate-Based and $\hat{a}$-Augmented rules, showing that the proxy's inclusion improves worst-case performance only when the treatment effect variation explained by the latent factor outweighs the combined costs of noise propagation and policy space complexity. 
A new definition of regret, relative to an oracle that directly observes $A_i$, provides a common benchmark that makes this derivation tractable and the comparison meaningful.

Moreover, I frame the allocation problem between improving measurement precision and enlarging the policy-learning sample. I derive the conditions that separate the two regimes, yielding a sufficient condition for the minimax optimal allocation of resources, and propose sample-splitting procedures to implement these findings empirically.

In an application to \cite{hussam_2022}, I show that incorporating community rankings improves average welfare by $4\%$ and halves the probability of generating welfare losses relative to Covariate-Based rules. 
Moreover, I show that ignoring latent heterogeneity is not optimal, even under tight budget constraints, and that the optimal number of rankers increases with the available budget.

\bibliography{bib}

\clearpage

\appendix

\renewcommand{\thetable}{A\arabic{table}}
\renewcommand{\thefigure}{A\arabic{figure}}
\renewcommand{\thelemma}{A\arabic{lemma}}
\renewcommand{\thecorollary}{A\arabic{corollary}}
\renewcommand{\thedefinition}{A\arabic{definition}}
\setcounter{table}{0}
\setcounter{figure}{0}
\setcounter{lemma}{0}
\setcounter{corollary}{0}
\setcounter{definition}{0}

\section{Additional Figures}

\begin{figure}[h!]
    \centering
    \caption{Original Measure vs t-Measurements}
    \label{fig:original_vs_tmeasures}
    \includegraphics[width=0.45\linewidth]{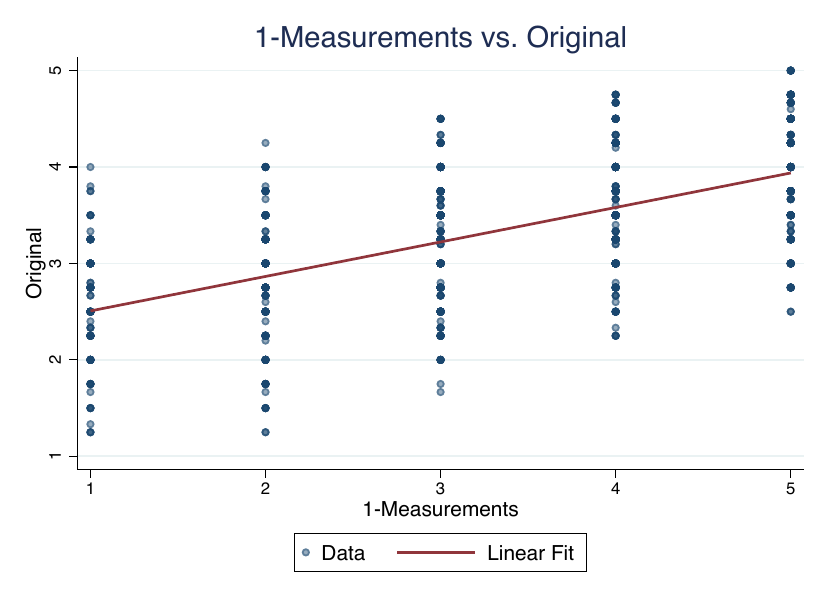}
    \includegraphics[width=0.45\linewidth]{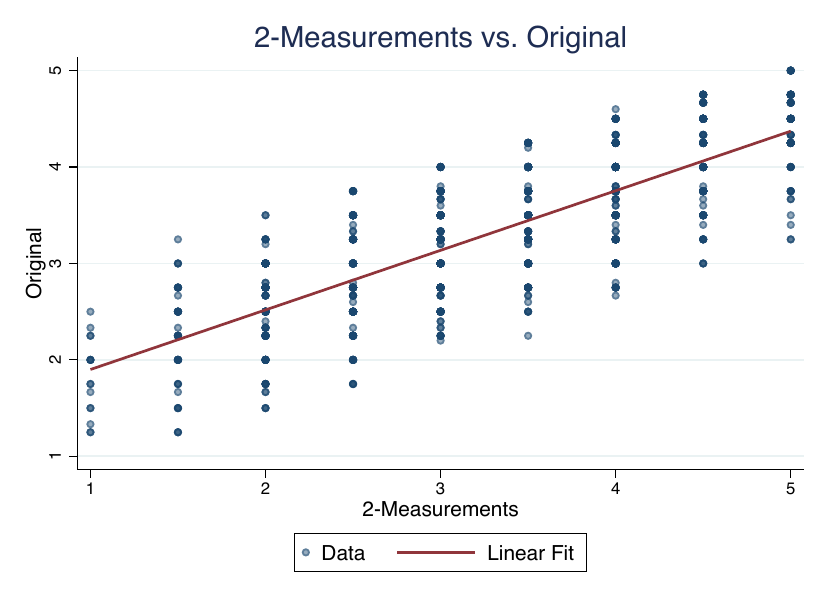}
    \includegraphics[width=0.45\linewidth]{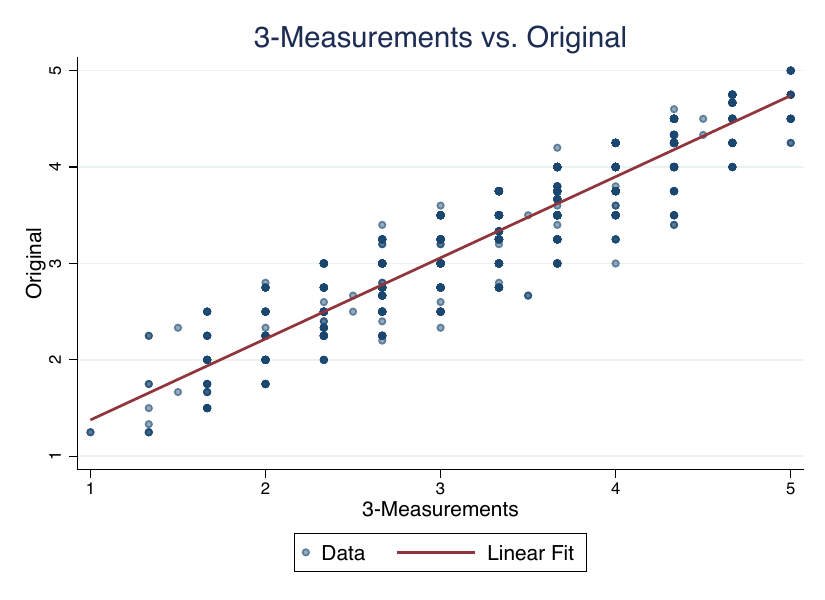}
    \includegraphics[width=0.45\linewidth]{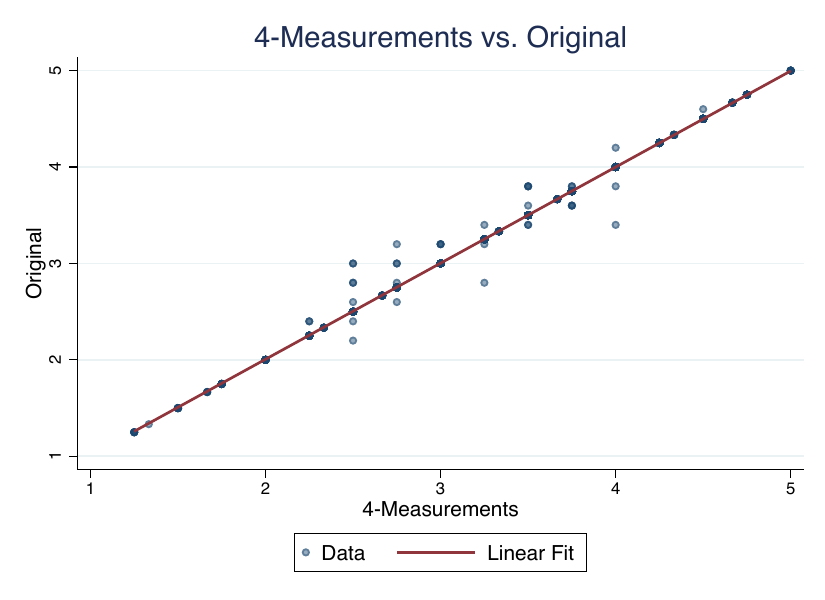}
    \includegraphics[width=0.45\linewidth]{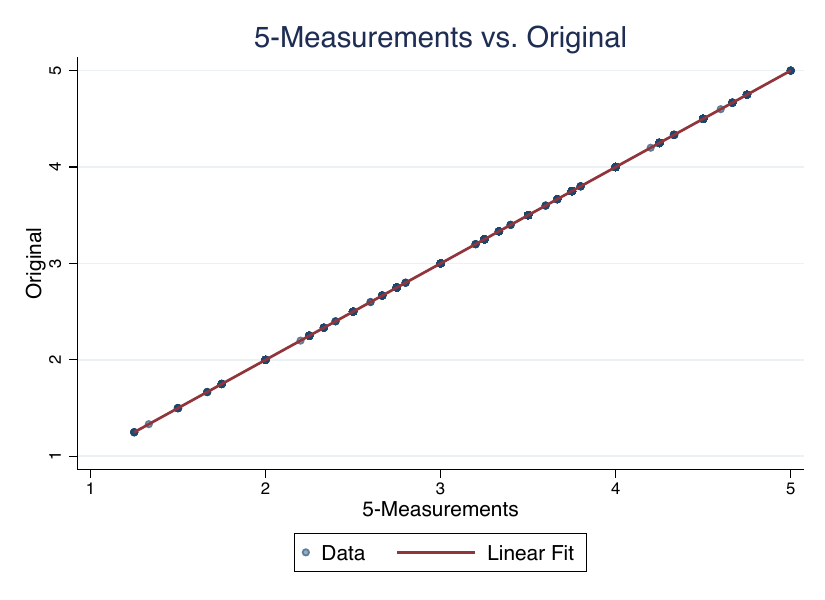}
    \begin{minipage}{1\linewidth}
    \footnotesize
    \textit{\textbf{Notes:}} This figure compares the original measure used in \cite{hussam_2022} with the one constructed by the author, for different number of measureements. 
    \end{minipage}
\end{figure}
\clearpage

\begin{figure}
    \centering
    \caption{Optimal Collection Plans by Budget Levels - Costly Measurements}
    \label{fig:optimal_n_t_c1}
    \includegraphics[width=0.44\linewidth]{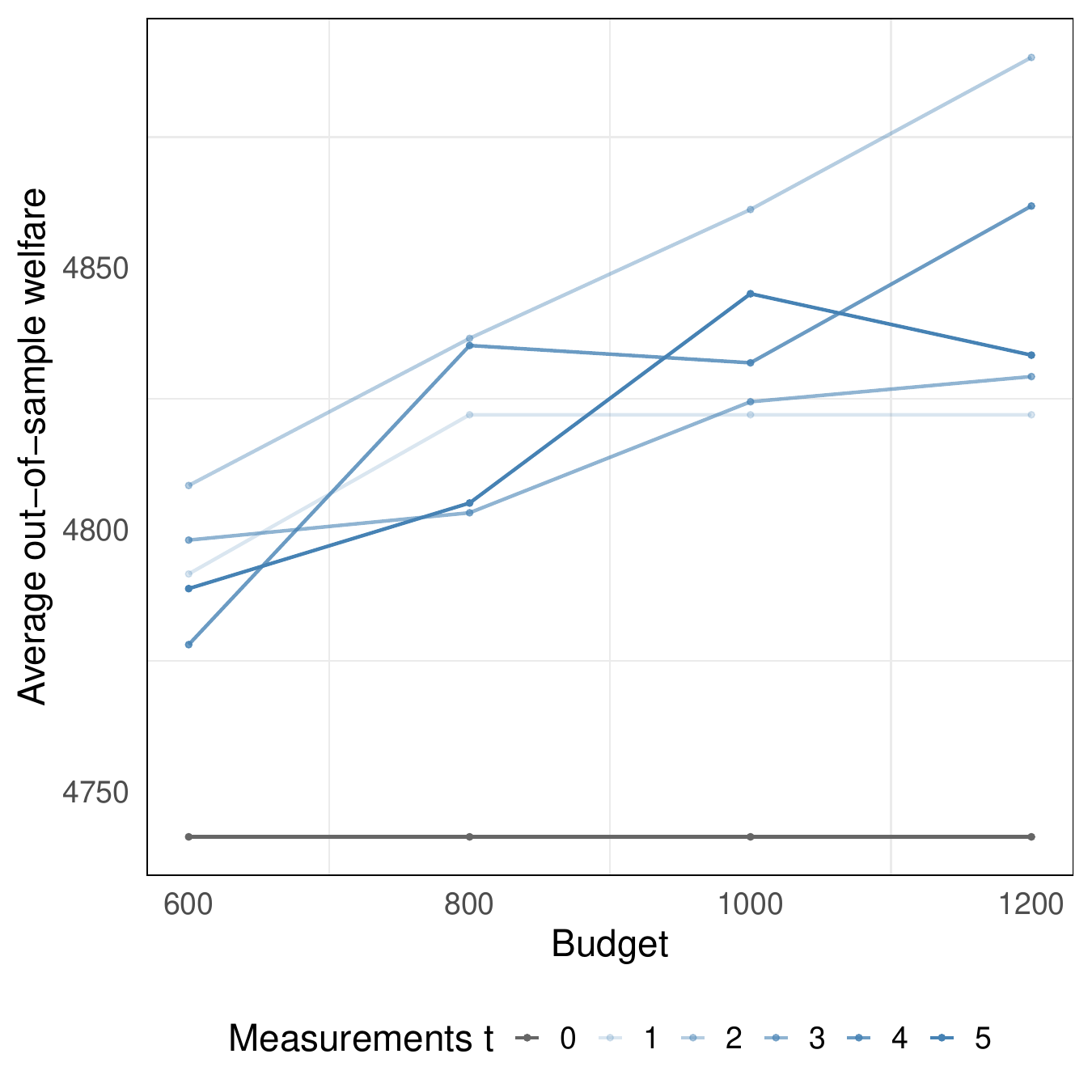}
    \includegraphics[width=0.44\linewidth]{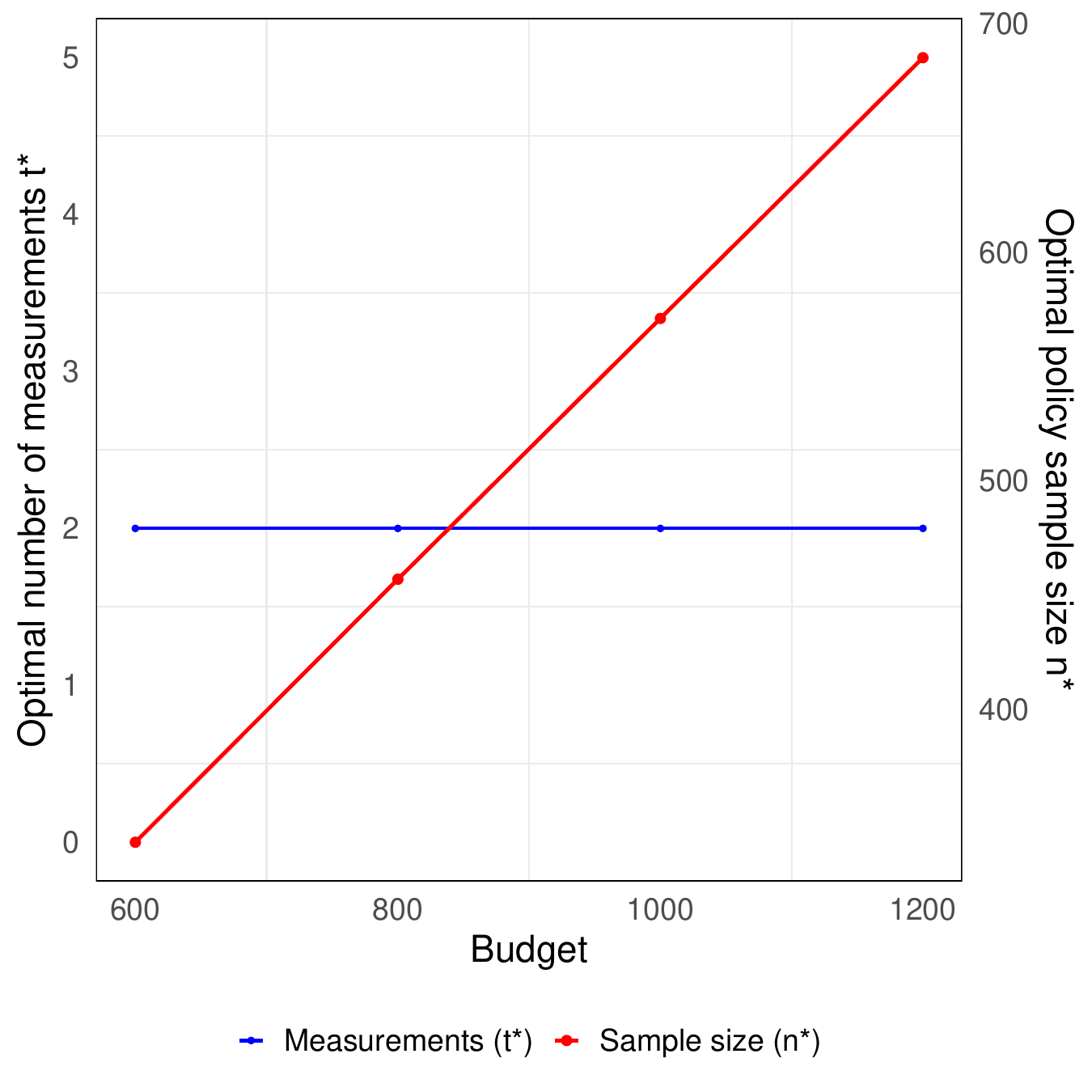}
    \begin{minipage}{1\linewidth}
    \footnotesize
    \textit{\textbf{Notes:}} This figure illustrates the performance of feasible data collection plans across different budget levels. The left panel reports the average out-of-sample welfare generated by designs with different numbers of measurements $t \in \{0,1,...,5\}$. 
    In this graph, we set $c_n = 0.25$ and $c_t=0.75$. 
    \end{minipage}
\end{figure}

\begin{figure}
    \centering
    \caption{Optimal Collection Plans by Budget Levels - Equal Marginal Cost}
    \label{fig:optimal_n_t_c2}
    \includegraphics[width=0.44\linewidth]{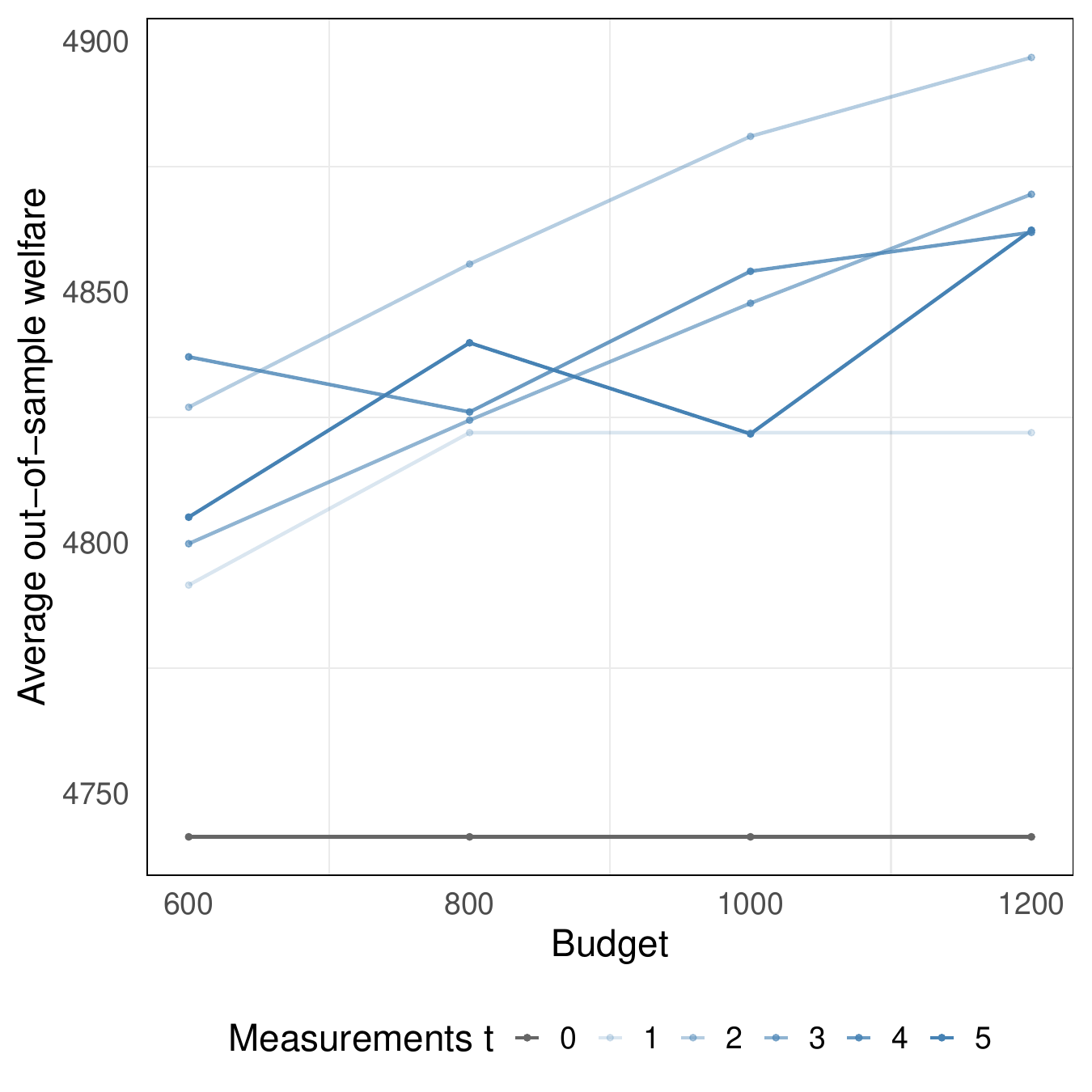}
    \includegraphics[width=0.44\linewidth]{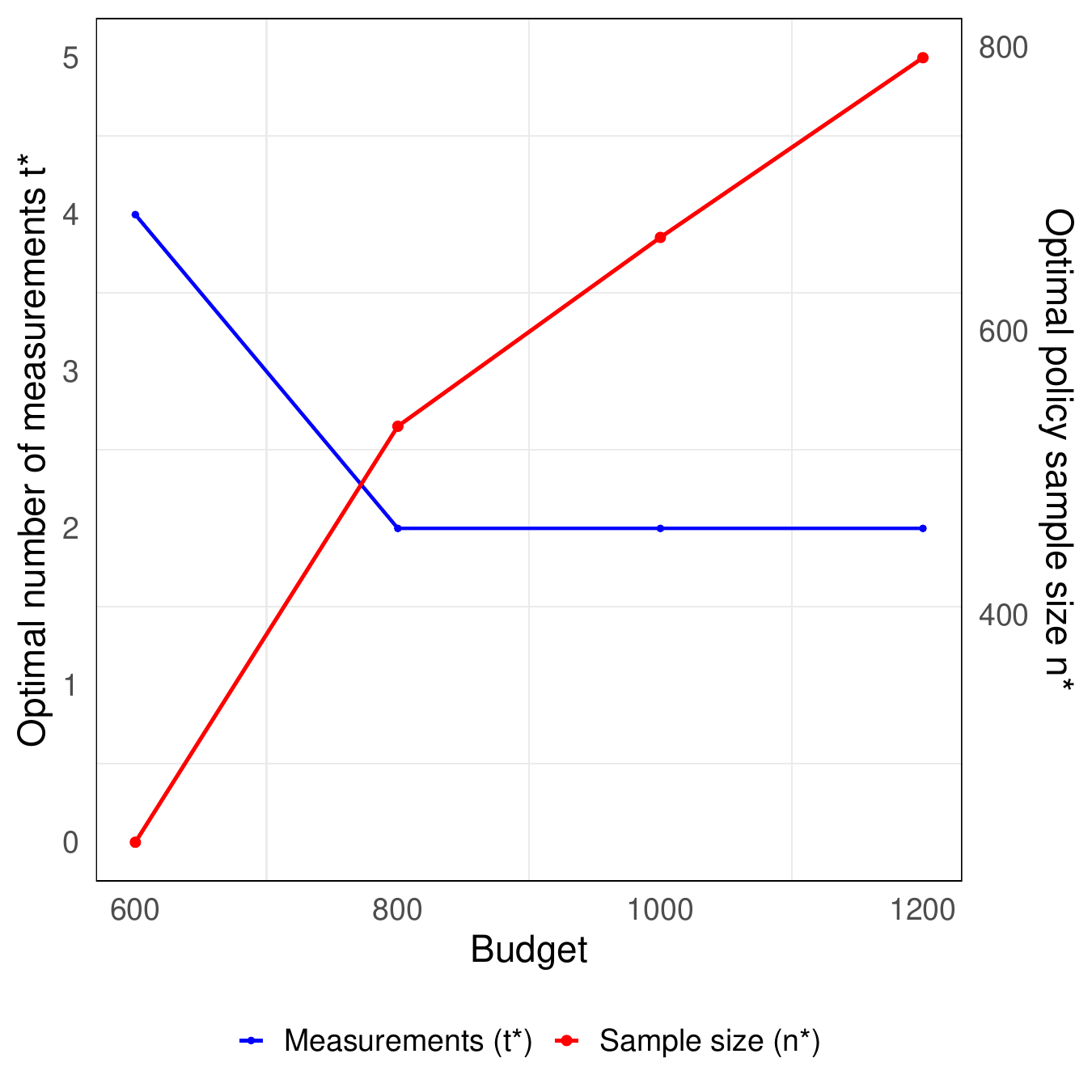}
    \begin{minipage}{1\linewidth}
    \footnotesize
    \textit{\textbf{Notes:}} This figure illustrates the performance of feasible data collection plans across different budget levels. The left panel reports the average out-of-sample welfare generated by designs with different numbers of measurements $t \in \{0,1,...,5\}$. 
    In this graph, we set $c_n = 0.5$ and $c_t=0.5$. 
    \end{minipage}
\end{figure}
\clearpage

\section{Formal Proofs}\label{app:formal_proofs}
\setcounter{equation}{0}
\renewcommand{\theequation}{B\arabic{equation}}

\begin{proof}[\textbf{Proof of Theorem \ref{thm:cb_upper}}]
Fix one $P \in \mathcal{P}$. Rewrite regret for Covariate-Based rules as: 
\begin{align}
    R(\hat{G}_{x}) 
    &= \mathbb{E}_{P^n}[W(G_{\theta}^*(X_i,A_i)) - W(\hat{G}_\theta(X_i))] \\
    &= \underbrace{W(G_{\theta}^*(X_i,A_i)) - W(G_{\theta}^*(X_i))}_{A} 
    + \underbrace{\mathbb{E}_{P^n}[W(G_{\theta}^*(X_i))- W(\hat{G}_\theta(X_i))]}_{B}
\end{align}

\textbf{Bounding component A.} Rewrite component $A$ as:
\begin{align}
    A & = W(G_\theta^*(X_i,A_i)) - W(G_\theta^*(X_i)) \\
    &\ \text{by definition of $W(G_z)$ and $G_z$:} \\
    & = \mathbb{E}_P[\tau_i \cdot G_{\theta}^*(X_i,A_i)]-  \mathbb{E}_P[\tau_i \cdot G_{\theta}^*(X_i)] \\
    & = \mathbb{E}_P[\tau_i \cdot (G_{\theta}^*(X_i,A_i)-G_{\theta}^*(X_i))] \\
    &\ \text{adding and subtracting $\mathbb{E}_P[\tau_i \cdot\mathbf{1}\{\tau(X_i)\geq 0\}]$:} \\
    & = \mathbb{E}_P[\tau_i \cdot (G_{\theta}^*(X_i,A_i) - \mathbf{1}\{\tau(X_i)\geq 0\})] + \mathbb{E}_P[\tau_i \cdot (\mathbf{1}\{\tau(X_i)\geq 0\} - G_{\theta}^*(X_i))] \\
    &\ \text{since}\ W(G^{FB}(X_i,A_i))\geq W(G^*_\theta(X_i,A_i)) \\
    &\ \text{implies}\ \mathbb{E}_P[\tau_i G^*_\theta(X_i,A_i)] \leq \mathbb{E}_P[\tau_i \mathbf{1}\{\tau(X_i,A_i)\geq 0\}], \\
    &\ \text{and $G^*_\theta(X_i)=\mathbf{1}\{s_{\theta^*}(X_i)\geq 0\}$:} \\
    & \leq \mathbb{E}_P[\tau_i \cdot (\mathbf{1}\{\tau(X_i,A_i)\geq 0\}-\mathbf{1}\{\tau(X_i)\geq 0\})] +  \\
    & \quad + \mathbb{E}_P[\tau_i \cdot (\mathbf{1}\{\tau(X_i)\geq 0\}-\mathbf{1}\{s_{\theta^*}(X_i)\geq 0\})] \\
    & := \mathbb{E}_{X,A}[\mathbb{E}_{P|X,A}[\tau_i \cdot (\mathbf{1}\{\tau(X_i,A_i)\geq 0\}-\mathbf{1}\{\tau(X_i)\geq 0\})]] + \Delta(s,\Theta_x) \\
    & = \mathbb{E}_{X,A}[\tau(X_i,A_i)\cdot (\mathbf{1}\{\tau(X_i,A_i)\geq 0\}-\mathbf{1}\{\tau(X_i)\geq 0\})] + \Delta(s,\Theta_x) \\
    &\ \text{because}\ \forall\ u,v \in \mathbb{R}, u\cdot(\mathbf{1}\{u\geq 0\}-\mathbf{1}\{v\geq 0\}) \leq |u-v|, \\ 
    & \leq \mathbb{E}_{X,A}[|\tau(X_i,A_i) - \tau(X_i)|] + \Delta(s,\Theta_x) \\
    &\ \text{by the law of iterated expectations and Jensen's inequality applied conditionally on $X_i$:} \\
    & \leq \mathbb{E}_X\left[\sqrt{\mathbb{E}_{A} [(\tau(X_i,A_i) - \tau(X_i))^2\ |\ X_i]}\right] + \Delta(s,\Theta_x)
    := \bar{\sigma}_{\tau|x} + \Delta(s,\Theta_x)
\end{align}

\textbf{Bounding component B.}
Component $B$ captures the welfare loss arising from maximizing the sample analog of the population welfare. Under Assumptions \ref{ass:kt18} and \ref{ass:score_funct}, Theorem 2.1 in \cite{kitagawa_who_2018} applies directly:
\begin{equation}
    \mathbb{E}_{P^n}[W(G_\theta^*(X_i))- W(\hat{G}_\theta(X_i))] 
    \leq C_1\frac{M}{k} \sqrt{\frac{v_x^\theta}{n}}
\end{equation}
where $C_1$ is a universal constant.

Therefore,
\begin{equation}
\mathbb{E}_{P^n}[W(G_\theta^*(X_i,A_i)) - W(\hat{G}_\theta(X_i))] 
\leq C_1\frac{M}{k} \sqrt{\frac{v_x^\theta}{n}} 
+ \bar{\sigma}_{\tau|x} + \Delta(s,\Theta_x)
\end{equation}

Finally, by definition of $\mathcal{P}(\sigma_0)$,
\begin{equation}
\sup_{P \in \mathcal{P}(\sigma_0)}\mathbb{E}_{P^n}[W(G_\theta^*(X_i,A_i)) - W(\hat{G}_\theta(X_i))] 
\leq C_1\frac{M}{k} \sqrt{\frac{v_x^\theta}{n}} 
+ \sigma_0 + \Delta(s,\Theta_x)
\end{equation}

Now, notice that:
\begin{align}
    \Delta(s,\Theta_x) & := \mathbb{E}_P[\tau_i \cdot (\mathbf{1}\{\tau(X_i)\geq 0\}-\mathbf{1}\{s_{\theta^*}(X_i)\geq 0\})] \\
    & \leq \mathbb{E}_{X}[\mathbb{E}_{P|X}[\tau_i \cdot (\mathbf{1}\{\tau(X_i)\geq 0\}-\mathbf{1}\{s_{\theta^*}(X_i)\geq 0\})|X_i]] \\
    & = \mathbb{E}_X[\tau(X_i)\cdot (\mathbf{1}\{\tau(X_i)\geq 0\}-\mathbf{1}\{s_{\theta^*}(X_i)\geq 0\})] \\
    & \leq \mathbb{E}_X[|\tau(X_i)|\cdot |\mathbf{1}\{\tau(X_i)\geq 0\}-\mathbf{1}\{s_{\theta^*}(X_i)\geq 0\}|] \\ 
    & \leq M \cdot \mathbb{P}_X(\mathbf{1}\{\tau(X_i)\geq 0\}\neq \mathbf{1}\{s_{\theta^*}(X_i)\geq 0\})
\end{align}
Finally, under Assumption \ref{ass:score_funct}.2, $\mathbb{P}_X(\mathbf{1}\{\tau(X_i)\geq 0\}\neq \mathbf{1}\{s_{\theta^*}(X_i)\geq 0\})=0$ because $\theta^*=\tilde{\theta}$ by the definition of first best. Therefore, if $\mathcal{G}_x^\theta$ satisfies Assumption \ref{ass:score_funct}.2, $\Delta(s,\Theta_x)=0$.

\end{proof} 

\begin{proof}[\textbf{Proof of Theorem \ref{thm:cb_lower}}]
Define the minimax risk \begin{equation} \mathcal{R}_n := \inf_{\{\hat{G}_\theta(X_i)\}} \sup_{P\in\mathcal{P}(\sigma_0)} R(\hat{G}_\theta(X_i)). \end{equation} We establish two separate lower bounds on $\mathcal{R}_n$: one arising from approximation error and one from estimation error, and then combine them. 

\textbf{Step 1: Approximation-error lower bound.} Fix $\sigma_0>0$ and consider the following data-generating process $P_\sigma$. Let $X_i\sim \mathcal{N}(\mu_x, \sigma_x^2)$, i.i.d.\ across $i$. 
Let $A_i \perp X_i$, and  $A_i \in \{-\sigma_0,+\sigma_0\}$ with probability $1/2$ each. Define 
\begin{equation} 
Y_i(0):=0, \qquad Y_i(1):=\tau(X_i,A_i):=A_i
\end{equation} 
Then, $Y_i(d)\in[-\sigma_0,\sigma_0]$, so Assumption \ref{ass:kt18}.1 holds for $\sigma_0\leq M/2$. Let $D_i\sim\mathrm{Bernoulli}(p)$ with $p\in(k,1-k)$, independent of $(X_i,A_i,Y_i(0),Y_i(1))$, so Assumptions \ref{ass:kt18}.2 and \ref{ass:kt18}.3 hold. 
Because $A_i$ is independent of $X_i$ and has mean zero, \begin{equation} 
m(X_i) := \mathbb{E}_P[\tau(X_i,A_i)\mid X_i] = \mathbb{E}_P[A_i\mid X_i] = 0 
\end{equation} 
Moreover, 
\begin{equation} 
\bar{\sigma}_{\tau|x}(P_\sigma) = \mathbb{E}_X \left[ \sqrt{\mathbb{V}_A(\tau(X_i,A_i)\mid X_i)} \right] = \sqrt{\mathbb{V}(A_i)} = \sigma_0
\end{equation} 
Hence $P_\sigma\in\mathcal{P}(\sigma_0)$. The oracle rule that observes $(X_i,A_i)$ is 
\begin{equation} 
G^*(X_i,A_i) = \mathbf{1}\{A_i>0\}
\end{equation} 
Its welfare is 
\begin{equation} 
W(G^*(X_i,A_i)) = \mathbb{E}[A_i \mathbf{1}\{A_i>0\}] = \frac{1}{2}\sigma_0 
\end{equation} 

For any $G_\theta (X_i)\in\mathcal{G}^\theta_x$, where $\mathcal{G}^\theta_x$ satisfies Assumption \ref{ass:score_funct},
\begin{equation} 
W(G_\theta(X_i)) = \mathbb{E}[A_i G_\theta(X_i)]
\end{equation} 
Since $\mathbb{E}[A_i\mid X_i]=0$, 
\begin{equation} 
W(G_\theta (X_i))=0 \quad \text{for all } G^\theta (X_i )\in\mathcal{G}^\theta_x 
\end{equation} 
Thus, $W(G^*_\theta(X_i))=0$ and, for any CB rule learned from data in the set $\{\hat{G}_\theta (X_i) \}$, 
\begin{equation} \mathbb{E}_{P^n}[W(\hat G_\theta(X_i))]=0
\end{equation} 
Therefore, 
\begin{equation} 
R_{P_\sigma}(\hat{G}_\theta (X_i)) = W(G_\theta^*(X_i,A_i)) = \frac{1}{2}\sigma_0 
\end{equation} 
Taking the infimum over $\{\hat{G}_\theta (X_i) \}$, 
\begin{equation}\label{eq:approx_lb_clean} 
    \mathcal{R}_n \ge \frac{1}{2}\sigma_0
\end{equation} 

\textbf{Step 2: Estimation-error lower bound.} We now invoke Theorem 2.2 of \citet{kitagawa_who_2018}. They construct a finite subclass $\mathcal{P}^*\subset\mathcal{P}$ with bounded outcomes, overlap $p\in(k,1-k)$, and covariates taking values in a set shattered by $\mathcal{G}^\theta_x$, such that for any sequence $\{\hat{G}_\theta (X_i) \}$, 
\begin{equation}\label{eq:kt_lower_clean} 
\sup_{P\in \mathcal{P}^*} \mathbb{E}_{P^n} \big[ W(G_\theta^*(X_i))-W(\hat{G}_\theta (X_i)) \big] \ge C_K\frac{M}{k}\sqrt{\frac{v_x^\theta}{n}} 
\end{equation} 
for some universal constant $C_K>0$. On $\mathcal{P}^*$, the treatment effect depends only on $X_i$, so $\bar{\sigma}_{\tau|x}(P)=0\le\sigma_0$, and hence $\mathcal{P}^*\subset\mathcal{P}(\sigma_0)$. 
Moreover, $G_\theta^*(X_i,A_i)=G^*_\theta (X_i)$ on $\mathcal{P}^*$, so 
\begin{equation} 
R(\hat{G}_\theta (X_i)) = \mathbb{E}_{P^n} \big[ W(G_\theta^*(X_i))-W(\hat{G}_\theta (X_i)) \big]
\end{equation} 
Thus, \eqref{eq:kt_lower_clean} implies 
\begin{equation}\label{eq:est_lb_clean} 
\mathcal{R}_n \ge C_K\frac{M}{k}\sqrt{\frac{v_x^\theta}{n}}
\end{equation} 

\textbf{Step 3: Combine the two bounds.}
From \eqref{eq:approx_lb_clean}, we have exhibited a single DGP $P_\sigma\in\mathcal P(\sigma_0)$ such that
\begin{equation}
\inf_{\{\hat{G}_\theta (X_i)\}}
\mathbb E_{(P_\sigma)^n}\!\left[ W(G^*_{\theta}(X_i,A_i)) - W(\hat{G}_\theta (X_i)) \right]
\ \ge\ \frac{1}{2}\sigma_0
\end{equation}
Therefore,
\begin{equation}\label{eq:cb_lb1_max}
\mathcal R_n
=
\inf_{\{\hat{G}_\theta (X_i)\}}
\sup_{P\in\mathcal P(\sigma_0)}
\mathbb E_{P^n}\!\left[ W(G^*_{\theta}(X_i,A_i)) - W(\hat{G}_\theta (X_i)) \right]
\ \ge\ \frac{1}{2}\sigma_0
\end{equation}
Similarly, \eqref{eq:est_lb_clean} implies
\begin{equation}\label{eq:cb_lb2_max}
\mathcal R_n
\ \ge\ C_K \frac{M}{k}\sqrt{\frac{v_x^\theta}{n}}
\end{equation}
Combining \eqref{eq:cb_lb1_max} and \eqref{eq:cb_lb2_max}, we conclude
\begin{equation}\label{eq:cb_max_combined}
\mathcal R_n
\ \ge\
\max\left\{
\frac{1}{2}\sigma_0, 
C_K \frac{M}{k}\sqrt{\frac{v_x^\theta}{n}}
\right\}
\end{equation}
Using the elementary inequality $\max\{u,v\}\ge (u+v)/2$ for all $u,v\ge 0$, we obtain
\begin{align}
\mathcal R_n
&\ge
\frac12\left(
\frac{1}{2}\sigma_0
+
C_K \frac{M}{k}\sqrt{\frac{v_x^\theta}{n}}
\right) \\
&=
\frac{1}{4}\sigma_0
+
\frac{C_K}{2}\frac{M}{k}\sqrt{\frac{v_x^\theta}{n}}
\end{align}
Setting $C_2:=C_K/2$ and $C_3:=1/4$ yields \eqref{eq:cb_minimax_lower}.
\end{proof}

\begin{proof}[\textbf{Proof of Theorem \ref{thm:ha_upper}}]
    Fix one $P \in \mathcal{P}$. Decompose regret into:
    \begin{equation}
        R(\hat{G}^{\theta}_{x,\hat{a}}) = \underbrace{W(G^*_{\theta}(X_i,A_i)) - W(G^*_{\theta}(X_i, \hat{A}_i))}_{I} + \underbrace{\mathbb{E}_{P^n}[W(G^*_{\theta}(X_i, \hat{A}_i)) - W(\hat{G}_{\theta}(X_i,\hat{A}_i))]}_{II}
    \end{equation}
    
    \textit{\textbf{Bounding Term I.}} Rewrite term I as:
    \begin{align}
        I & = W(G^*_{\theta}(X_i,A_i)) - W(G^*_{\theta}(X_i, \hat{A}_i)) \\
        & \text{by definition of $W(G^\theta_z)$ and $G^\theta_z$:} \\
        & = \mathbb{E}_P[\tau_i G_\theta^*(X_i,A_i)] - \mathbb{E}_{P}[\tau_iG_{\theta}^*(X_i,\hat{A}_i)] \\
        & \text{by definition of $G_\theta^*(X_i,A_i)$:} \\
        & = \sup_{\theta\in\Theta}\{\mathbb{E}_P[\tau_i G_{\theta}(X_i,A_i)]\} - \sup_{\theta \in \Theta}\{\mathbb{E}_P[\tau_i G_{\theta}(X_i,\hat{A}_i)]\} \\
        & \leq \sup_{\theta\in\Theta}\{\mathbb{E}_P[\tau_i \cdot(G_{\theta}(X_i,A_i) - G_{\theta}(X_i,\hat{A}_i))]\} \\
        & \leq \sup_{\theta\in\Theta}\{\mathbb{E}_P[|\tau_i|\cdot \mathbf{1}\{G_{\theta}(X_i,A_i) \neq G_{\theta}(X_i,\hat{A}_i)\}]\} \\
        & \leq M\cdot\sup_{\theta\in\Theta} \{ \mathbb{P}_P(G_\theta(X_i,A_i) \neq G_{\theta}(X_i,\hat{A}_i))\}
    \end{align}

    Now, notice that, if the two indicators disagree, the sign of $s_\theta(\cdot)$ must flip between $A_i$ and $\hat{A}_i$, which requires $|s_\theta(X_i,A_i)|$ to be no larger than the change $|s_\theta(X_i,A_i)-s_\theta(X_i,\hat{A}_i)|$. 
    Formally,
    \begin{equation}
        \{G_{\theta}(X_i,A_i) \neq G_{\theta}(X_i,\hat{A}_i)\} \subseteq \{|s_{\theta}(X_i,A_i)| \leq |s_{\theta}(X_i,A_i)-s_{\theta}(X_i,\hat{A}_i)|\}
    \end{equation}
    And, by Assumption \ref{ass:score_funct}.4 (Lipschitz score function):
    \begin{equation}
        \sup_{\theta \in \Theta} |s_{\theta}(X_i,A_i)-s_{\theta}(X_i,\hat{A}_i)| \leq L_s |\varepsilon_i|
    \end{equation}
    Therefore,
    \begin{align}
        \sup_{\theta\in\Theta} \{ \mathbb{P}_P(G_\theta(X_i,A_i) \neq G_{\theta}(X_i,\hat{A}_i))\} & \leq \sup_{\theta \in \Theta}\mathbb{P}_P(|s_\theta(X_i,A_i)|\leq  L_s |\varepsilon_i|) \\ 
        & = \sup_{\theta \in \Theta}\mathbb{E}_{X,\varepsilon}[\mathbb{P}_{P|X,\varepsilon}(|s_\theta(X_i,A_i)| \leq L_s |\varepsilon_i|\ |\ X_i,\varepsilon_i)] \\
        & \text{by Assumption \ref{ass:proxy}, } \varepsilon_i \perp A_i |X_i: \\
        & = \sup_{\theta \in \Theta}\mathbb{E}_{X,\varepsilon}[\mathbb{P}_{P|X}(|s_\theta(X_i,A_i)| \leq L_s |\varepsilon_i|\ |\ X_i)] \\
        & \leq \mathbb{E}_{X,\varepsilon}[\sup_{\theta \in \Theta}\mathbb{P}_{P|X}(|s_\theta(X_i,A_i)| \leq L_s |\varepsilon_i|\ |\ X_i)] \\
        &  \text{ by Assumption \ref{ass:score_funct}.3}: \\
        & \leq \mathbb{E}_X[\mathbb{E}_{\varepsilon}[\kappa L_s|\varepsilon_i|\ |\ X_i]] \\
        & \leq \kappa L_s\mathbb{E}_X[\mathbb{E}_{\varepsilon}[|\varepsilon_i|\ |\ X_i]] \\
        & \ \text{by the LIE and Jensen's applied conditionally on $X_i$.}: \\
        & \leq \kappa L_s\sqrt{\mathbb{E}_X\left[\mathbb{E}_{\varepsilon}[\varepsilon_i^2\ |\ X_i]\right]} \\
        & \leq \kappa L_s \sqrt{\bar{\sigma}_{\varepsilon}^2 + \bar{b}^2}
    \end{align}
    where $\bar{\sigma}^2_{\varepsilon}:=\mathbb{E}_X[\mathbb{V}(\varepsilon_i|X_i)]$ and $\bar{b}^2:= \mathbb{E}_X[\mathbb{E}[\varepsilon_i|X_i]^2]$. 
    
    Therefore, we can conclude:
    \begin{equation}
        I = W(G^*_{\theta}(X_i,A_i)) - W(G^*_{\theta}(X_i, \hat{A}_i)) \leq M \kappa L_s \sqrt{\bar{\sigma}_{\varepsilon}^2 + \bar{b}^2}
    \end{equation}
    
    \textit{\textbf{Bounding term II.}} Under Assumptions \ref{ass:kt18} and \ref{ass:score_funct}, the conditions for Theorem 2.1 in Kitagawa and Tetenov (2018) hold since treatment is randomized within $X_i$ and the propensity score $e(X_i)$ is known by Ass. \ref{ass:kt18}.2. Therefore, 
    \begin{equation}
        \mathbb{E}_{P^n}[W(G^*_{\theta}(X_i, \hat{A}_i))- W(\hat{G}_{\theta}(X_i,\hat{A}_i))] \leq C \frac{M}{k}\sqrt{\frac{v_{x,\hat{a}}^\theta}{n}}
    \end{equation}
    
    \textbf{Final bound.} Combining the upper bounds on component $I$ and $II$,
    \begin{align}
        \mathbb{E}_{P^n}[W(G^*_{\theta}(X_i,A_i))- W(\hat{G}_{\theta}(X_i,\hat{A}_i))] & \leq C \frac{M}{k}\sqrt{\frac{v_{x,\hat{a}}^\theta}{n}} +  M \kappa L_s\sqrt{\bar{\sigma}_{\varepsilon}^2 + \bar{b}^2} \\
        & = C \frac{M}{k}\sqrt{\frac{v_{x,\hat{a}}^\theta}{n}} +  M \kappa L_s\operatorname{rMSE}(\hat{A}_i)
    \end{align}
    By definition of $\mathcal{P}(\rho)$,
    \begin{equation}
        \sup_{P \in \mathcal{P}(\rho)}\mathbb{E}_{P^n}[W(G^*_{\theta}(X_i,A_i))- W(\hat{G}_{\theta}(X_i,\hat{A}_i))] \leq C \frac{M}{k}\sqrt{\frac{v_{x,\hat{a}}^\theta}{n}} +  M \kappa L_s\rho.
    \end{equation}
\end{proof}

\begin{proof}[\textbf{Proof of Theorem \ref{thm:ha_lower}}]
Define the minimax risk
\begin{equation}
\mathcal R_n^{\hat a}
:=
\inf_{\{\hat{G}_{\theta}(X_i,\hat{A}_i)\}}
\sup_{P\in\mathcal P(\rho)}
\mathbb E_{P^n}\!\left[W(G^*_{\theta}(X_i,A_i))-W(\hat{G}_{\theta}(X_i,\hat{A}_i))\right]
\end{equation}
We establish two separate lower bounds on $\mathcal R_n^{\hat a}$: one due to proxy information loss, and one due to estimating policies in a finite sample.

\textbf{Proxy Information Loss.}
Fix $\rho>0$ and consider the following DGP $P_\rho$.

Let $X_i\sim_{\text{i.i.d.}} \mathcal{N}(\mu_x,\sigma_x^2)$.
Let $A_i\sim\mathrm{Unif}[-1/\kappa, 1/\kappa]$ be independent of $X_i$, so that for all
$t\in[0,1/\kappa]$,
\begin{equation}
\mathbb P(|A_i|<t\mid X_i)=\mathbb P(|A_i|<t)=\kappa t .
\end{equation}
Let $\varepsilon_i\in\{-\rho,+\rho\}$ with probability $1/2$ each, independent of $(X_i,A_i)$, and define
$\hat A_i:=A_i+\varepsilon_i$. Then $\varepsilon_i\perp A_i\mid X_i$ and
$\mathbb E[\varepsilon_i^2]=\rho^2$, so Assumption \ref{ass:proxy} holds.

Define bounded potential outcomes by
\begin{equation}
Y_i(0):=-\frac{M}{2} \mathrm{sign}(A_i),
\qquad
Y_i(1):=+\frac{M}{2} \mathrm{sign}(A_i)
\end{equation}
with any convention $\mathrm{sign}(0)\in\{-1,+1\}$. Then $|Y_i(d)|\le M/2$, hence Assumption \ref{ass:kt18}.1 holds,
and $\tau_i:=Y_i(1)-Y_i(0)=M \mathrm{sign}(A_i)$.
Let $D_i\sim\mathrm{Bernoulli}(p)$ with $p\in(k,1-k)$ independent of $(X_i,A_i,Y_i(0),Y_i(1))$, so
Assumptions \ref{ass:kt18}.2--3 hold. Therefore $P_\rho\in\mathcal P(\rho)$.

The oracle that observes $A_i$ treats iff $A_i\ge 0$, i.e.\ uses $G_\theta(x,a)=\mathbf 1\{a\ge 0\}\in\mathcal G^\theta_{x,a}$. Note that this class satisfies Assumption \ref{ass:score_funct} since it has finite VC dimension (Ass. \ref{ass:score_funct}.1), it satisfies the margin condition (Ass. \ref{ass:score_funct}.3) with constant $\kappa$, and it is Lipschitz continuous (Ass. \ref{ass:score_funct}.4) with constant $L_s=1$.
Under this rule, the realized outcome equals $+M/2$, hence
\begin{equation}
W(G^*_{\theta}(X_i,A_i))=\frac{M}{2}
\end{equation}

For any policy $G(Z_i)$,
\begin{align}
Y_i(1)G(Z_i)+Y_i(0)(1-G(Z_i)) & = \frac{M}{2}\operatorname{sign}(A_i)G(Z_i) - \frac{M}{2}\operatorname{sign}(A_i)(1-G(Z_i))\\
& = \frac{M}{2}\mathrm{sign}(A_i) (2G(Z_i)-1)
\end{align}
Therefore, for any policy $G_\theta(X_i,\hat{A}_i)$,
\begin{align}
W(G_\theta(X_i,\hat{A}_i))&=\mathbb E_P\left[\frac{M}{2} \mathrm{sign}(A_i)(2G_\theta(X_i,\hat{A}_i)-1)\right] \\
&\ \text{by developing the expectation:} \\
& = \frac{M}{2}\left( \mathbb{P}_P(A_i \geq 0) \cdot \left[ \mathbb{P}_P(G_{\theta}(X_i,\hat{A}_i) = \mathbf{1}\{A_i\geq 0\}) -  \mathbb{P}_P(G_{\theta}(X_i,\hat{A}_i) \neq \mathbf{1}\{A_i\geq 0\}) \right]  \right) - \\
& - \frac{M}{2}\left( \mathbb{P}_P(A_i < 0) \cdot \left[ \mathbb{P}_P(G_{\theta}(X_i,\hat{A}_i) \neq \mathbf{1}\{A_i\geq 0\}) -  \mathbb{P}_P(G_{\theta}(X_i,\hat{A}_i) = \mathbf{1}\{A_i\geq 0\}) \right]  \right) \\
& = \frac{M}{2}\left(\mathbb{P}_P(G_\theta(X_i,\hat{A}_i)= \mathbf{1}\{A_i\geq 0\}) - \mathbb{P}_P(G_\theta(X_i,\hat{A}_i) \neq \mathbf{1}\{A_i\geq 0\}) \right) \\
&\ \text{because the two events are complementary,} \\
& =\frac{M}{2}\left(1-2\mathbb P_P(G_\theta(X_i,\hat{A}_i)\neq \mathbf 1\{A_i\ge 0\})\right)
\end{align}

Therefore, the welfare-maximizing rule in $\mathcal G^\theta_{x,\hat a}$, denoted $G_\theta^* (X_i, \hat{A}_i)$,
is the Bayes classifier of the label $\mathbf 1\{A_i\ge 0\}$ given $(X_i,\hat A_i)$.

Consider the event $\mathcal E:=\{|\hat A_i|<\rho\}$. For any fixed $\hat a\in(-\rho,\rho)$, the two values of $A_i$
compatible with $\hat A_i=\hat a$ are $\hat a-\rho<0$ and $\hat a+\rho>0$. Since $\varepsilon_i$ is symmetric and
independent of $A_i$, it follows that
\begin{equation}
\mathbb P(A_i\ge 0\mid \hat A_i=\hat a)=\mathbb P(A_i<0\mid \hat A_i=\hat a)=\frac12
\qquad \forall \hat a\in(-\rho,\rho)
\end{equation}
so the Bayes conditional classification error equals $1/2$ on $\mathcal E$ and hence
\begin{equation}
\mathbb P\!\left(G_\theta^* (X_i, \hat{A}_i)\neq \mathbf 1\{A_i\ge 0\}\right)
\ \ge\ \frac12 \mathbb P(|\hat A_i|<\rho)
\end{equation}
Next compute $\mathbb P(|\hat A_i|<\rho)$. If $\varepsilon_i=-\rho$, then
$|\hat A_i|<\rho \iff A_i\in(0,2\rho)$; if $\varepsilon_i=+\rho$, then
$|\hat A_i|<\rho \iff A_i\in(-2\rho,0)$. Therefore,
\begin{equation}
\mathbb P(|\hat A_i|<\rho)
=
\frac12 \mathbb P(0<A_i<2\rho)+\frac12 \mathbb P(-2\rho<A_i<0)
\end{equation}
Because $A_i\sim\mathrm{Unif}[-1/\kappa,1/\kappa]$, for $\rho\le 1/(2\kappa)$,
\begin{equation}
\mathbb P(0<A_i<2\rho)=\mathbb P(-2\rho<A_i<0)=\frac{2\rho}{2/\kappa}=\kappa\rho
\end{equation}
hence $\mathbb P(|\hat A_i|<\rho)=\kappa\rho$.
Thus,
\begin{equation}
\mathbb P\!\left(G_\theta^* (X_i, \hat{A}_i)\neq \mathbf 1\{A_i\ge 0\}\right)
\ \ge\ \frac12 \kappa\rho
\end{equation}
Therefore,
\begin{equation}
W(G^*_{\theta}(X_i,A_i))-W(G_\theta^* (X_i, \hat{A}_i))
\ \ge\ \frac{M}{2}\kappa\rho
\end{equation}
Since $W(\hat{G}_{\theta}(X_i,\hat{A}_i))\le W(G_\theta^* (X_i, \hat{A}_i))$ for any estimator $\hat{G}_{\theta}(X_i,\hat{A}_i)$
taking values in $\mathcal G^\theta_{x,\hat a}$, we conclude
\begin{equation}\label{eq:info_lb_hat}
\inf_{\{\hat{G}_{\theta}(X_i,\hat{A}_i)\}}
\mathbb E_{(P_\rho)^n}\!\left[W(G^*_{\theta}(X_i,A_i))-W(\hat{G}_{\theta}(X_i,\hat{A}_i))\right]
\ \ge\ \frac{M}{2}\kappa\rho
\end{equation}
In particular, $\mathcal R_n^{\hat a}\ge \frac{M}{2}\kappa\rho$.

\textbf{Statistical error.}
Invoke Theorem 2.2 of \citet{kitagawa_who_2018}: there exists a finite subclass $\mathcal{P}^*$ satisfying
Assumption \ref{ass:kt18}.1--3 and such that the covariates take values in a set shattered by
$\mathcal G^\theta_{x,\hat a}$ (hence by VC-dimension $v^\theta_{x,\hat a}$), for which
\begin{equation}\label{eq:kt_lower_hat}
\sup_{P\in \mathcal{P}^*}\ \mathbb E_{P^n}\!\left[W(G_\theta^* (X_i, \hat{A}_i))-W(\hat{G}_{\theta}(X_i,\hat{A}_i))\right]
\ \ge\
C_K \frac{M}{k}\sqrt{\frac{v^\theta_{x,\hat a}}{n}}
\end{equation}
for a universal constant $C_K>0$ (with the dependence on $k$ as stated in KT18).

Choose the KT18 subclass $\mathcal{P}^*$ so that $\varepsilon_i\equiv 0$ almost surely (i.e.\ $\hat A_i=A_i$). Then
$\sqrt{\mathbb E[\varepsilon_i^2]}=0\le \rho$, so $\mathcal{P}^*\subset\mathcal P(\rho)$.
Moreover, on $\mathcal{P}^*$ we have $\hat A_i=A_i$, hence $\mathcal G^\theta_{x,\hat a}$ and $\mathcal G^\theta_{x,a}$ coincide
pointwise and therefore $G_\theta^* (X_i, \hat{A}_i)=G^*_{\theta}(X_i,A_i)$.
Thus \eqref{eq:kt_lower_hat} implies
\begin{equation}\label{eq:est_lb_hat}
\mathcal R_n^{\hat a}\ \ge\ C_K \frac{M}{k}\sqrt{\frac{v^\theta_{x,\hat a}}{n}}
\end{equation}

\textbf{Step 3: Combine the two bounds.}
From \eqref{eq:info_lb_hat}, we have exhibited a single DGP $P_\rho\in\mathcal P(\rho)$ such that
\begin{equation}
\inf_{\{\hat{G}_{\theta}(X_i,\hat{A}_i)\}}
\mathbb E_{(P_\rho)^n}\!\left[W(G^*_{\theta}(X_i,A_i))-W(\hat{G}_{\theta}(X_i,\hat{A}_i))\right]
\ \ge\ \frac{M}{2}\kappa\rho
\end{equation}
Therefore,
\begin{equation}\label{eq:lb1_max}
\mathcal R_n^{\hat a}
=
\inf_{\{\hat{G}_{\theta}(X_i,\hat{A}_i)\}}
\sup_{P\in\mathcal P(\rho)}
\mathbb E_{P^n}\!\left[W(G^*_{\theta}(X_i,A_i))-W(\hat{G}_{\theta}(X_i,\hat{A}_i))\right]
\ \ge\ \frac{M}{2}\kappa\rho
\end{equation}
Similarly, \eqref{eq:est_lb_hat} implies
\begin{equation}\label{eq:lb2_max}
\mathcal R_n^{\hat a}
\ \ge\ C_K \frac{M}{k}\sqrt{\frac{v^\theta_{x,\hat a}}{n}}
\end{equation}
Combining \eqref{eq:lb1_max} and \eqref{eq:lb2_max}, we conclude
\begin{equation}\label{eq:max_combined}
\mathcal R_n^{\hat a}
\ \ge\
\max\left\{
\frac{M}{2}\kappa\rho, 
C_K \frac{M}{k}\sqrt{\frac{v^\theta_{x,\hat a}}{n}}
\right\}
\end{equation}
Using the elementary inequality $\max\{u,v\}\ge (u+v)/2$ for all $u,v\ge 0$, we obtain
\begin{align}
\mathcal R_n^{\hat a}
&\ge
\frac12\left(
\frac{M}{2}\kappa\rho
+
C_K \frac{M}{k}\sqrt{\frac{v^\theta_{x,\hat a}}{n}}
\right) \\
&=
\frac{M}{4}\kappa\rho
+
\frac{C_K}{2}\frac{M}{k}\sqrt{\frac{v^\theta_{x,\hat a}}{n}}
\end{align}
Setting $C_4:=C_K/2$ and $C_5:=1/4$ yields \eqref{eq:ha_lower}.
\end{proof}

\begin{proof}[\textbf{Proof of Proposition \ref{prop:minimax_design}}]
By Theorem \ref{thm:cb_upper}, together with Assumption \ref{ass:prior_sigma}, the feasible Covariate-Based design satisfies
\begin{equation}
    V_{CB}(B_0)\le \overline{R}_{CB}(B_0).
\end{equation}
By Theorem \ref{thm:cb_lower}, together with the same extension argument used in Corollary \ref{cor:proxy_domination},
\begin{equation}
    \underline{R}_{CB}(B_0)\le V_{CB}(B_0).
\end{equation}
Similarly, Theorems \ref{thm:ha_upper} and \ref{thm:ha_lower} imply that, for every $t\in\mathcal{T}$,
\begin{equation}
    \underline{R}_{A}(t,B_0)\le V_A(t,B_0)\le \overline{R}_{A}(t,B_0).
\end{equation}

To prove part 1, suppose that, for every $t\in\mathcal{T}$,
\begin{equation}
    \overline{R}_{CB}(B_0) < \underline{R}_{A}(t,B_0).
\end{equation}
Then, for every $t\in\mathcal{T}$,
\begin{equation}
    V_{CB}(B_0)\le \overline{R}_{CB}(B_0) < \underline{R}_{A}(t,B_0)\le V_A(t,B_0),
\end{equation}
which implies
\begin{equation}
    V_{CB}(B_0) < V_A(t,B_0)\qquad \forall t\in\mathcal{T}.
\end{equation}

To prove part 2, suppose that
\begin{equation}
    \overline{R}_{A}(t^*,B_0)
    <
    \min \left\{\underline{R}_{CB}(B_0), \inf_{t\in\mathcal{T}\setminus\{t^*\}}\underline{R}_{A}(t,B_0)\right\}.
\end{equation}
Then, for every $t\neq t^*$,
\begin{equation}
    V_A(t^*,B_0)\le \overline{R}_{A}(t^*,B_0) < \underline{R}_{A}(t,B_0)\le V_A(t,B_0),
\end{equation}
which proves that $V_A(t^*,B_0) < V_A(t,B_0)$ for all $t\neq t^*$. Moreover,
\begin{equation}
    V_A(t^*,B_0)\le \overline{R}_{A}(t^*,B_0) < \underline{R}_{CB}(B_0)\le V_{CB}(B_0),
\end{equation}
Which prooves the dominance with respect to CB rules.
\end{proof}

\section{Additional Results}
\subsection{Examples' Proofs}\label{app:examples}
\begin{proof}[Proof of Example \ref{ex:repeated_measure}]\label{proof:ex:repeated_measure}
 We can write the measurement error as:
\begin{equation}
    \hat{A}_i(t)-A_i=\frac{1}{t}\sum_{j=1}^t U_{ij}.
\end{equation}

Define
\begin{equation}
    b_P(X_i):=\mathbb{E}_P[U_{ij}\mid X_i],
    \qquad
    \sigma^2_{U,P}(X_i):=\mathbb{V}_P(U_{ij}\mid X_i).
\end{equation}
Using conditional independence across $j$ given $X_i$,
\begin{align}
    \mathbb{E}_P\left[(\hat{A}_i(t)-A_i)^2\mid X_i\right]
    &=
    \mathbb{V}_P\left(\frac{1}{t}\sum_{j=1}^t U_{ij}\,\middle|\,X_i\right)
    +
    \left(
    \mathbb{E}_P\left[\frac{1}{t}\sum_{j=1}^t U_{ij}\,\middle|\,X_i\right]
    \right)^2 \\
    &=\frac{1}{t^2}\sum_{j=1}^t \mathbb{V}_P(U_{ij}\mid X_i) + b_P(X_i)^2 \\
    &=\frac{\sigma^2_{U,P}(X_i)}{t}+b_P(X_i)^2.
\end{align}
Taking expectations over $X_i$ yields
\begin{equation}
    \mathbb{E}_P\left[(\hat{A}_i(t)-A_i)^2\right]
    =\frac{1}{t}\,\mathbb{E}_P\left[\sigma^2_{U,P}(X_i)\right]
    +
    \mathbb{E}_P\left[b_P(X_i)^2\right].
\end{equation}
Therefore,
\begin{align}
    \mathrm{rMSE}(\hat{A}_i(t))
    &:=
    \sqrt{\mathbb{E}_P\left[(\hat{A}_i(t)-A_i)^2\right]} \\
    &=
    \sqrt{
    \frac{1}{t}\,\mathbb{E}_P\left[\sigma^2_{U,P}(X_i)\right]
    +
    \mathbb{E}_P\left[b_P(X_i)^2\right]
    } \\
    &\le
    \sqrt{\frac{1}{t}\,\mathbb{E}_P\left[\sigma^2_{U,P}(X_i)\right]}
    +
    \sqrt{\mathbb{E}_P\left[b_P(X_i)^2\right]} \\
    &\le
    \frac{m_0}{\sqrt{t}} + b_0.
\end{align}
\noindent where the first inequality follows from $\sqrt{u+v}\le \sqrt{u}+\sqrt{v}$ for $u,v\ge 0$, and the second from the uniform bounds in Example \ref{ex:repeated_measure}. Hence, in the repeated-measurement case, Assumption \ref{ass:time} is satisfied with the envelope
\begin{equation}
    h(t)=b_0+\frac{m_0}{\sqrt{t}}.
\end{equation}    
\end{proof}

\subsection{External Data-Dependent Proxy} \label{app:external_proxy}
\setcounter{assumption}{2}
\renewcommand{\theassumption}{2B}

\begin{assumption}[External data-dependent $\hat{A}_i$]\label{ass:ext_proxy} $ $
\begin{enumerate}
    \item \textbf{Estimate Representation} - Let $\hat{A}_i$ be written as $\hat A_i=\hat{f}_m(X_i)$.
    \item \textbf{External Estimator} - $\hat{f}_m:\mathcal X\to\hat{\mathcal A}$ is learned on an auxiliary sample
    $S_m := \{(Y_i(0),X_i)\}_{i=1}^m \perp S_n$ and then treated as fixed in the policy-learning sample $S_n$.
\end{enumerate}
\end{assumption}

\renewcommand{\theassumption}{\arabic{assumption}}
\setcounter{assumption}{3}
\renewcommand{\theassumption}{3B}

\begin{assumption}[Policy class restrictions]\label{ass:score_funct_ext} 
$ $
\begin{enumerate}
    \item \textbf{VC Class} - The policy class $\mathcal{G}^\theta_{z}$ has finite VC-dimension $v^\theta_z<\infty$.
    \item \textbf{Margin Condition} - There exists a constant $\kappa>0$ such that, for all $t\geq0$:
    \begin{equation}
        \sup_{\theta \in \Theta}\mathbb{P}(|s_\theta(X_i,A_i)|<t|X_i=x, S_m) \leq \kappa t \quad \forall\ x \in \mathcal{X}
    \end{equation} 
    \item \textbf{Lipschitz Continuity} - There exists a constant $L_s$ such that:
    \begin{equation}
        \sup_{\theta\in\Theta,\ (x,a) \in \mathcal{X}\times \mathcal{A}}|s_{\theta}(x,a)- s_{\theta}(x,a+\gamma)| \leq L_s |\gamma|
    \end{equation}
\end{enumerate}
\end{assumption}
\setcounter{assumption}{5}

\begin{proposition}[Regret bound for $\hat{a}$-Augmented rules when $\hat{a}$ is learned externally]\label{prop:ha_upper_ext}
Under Assumptions \ref{ass:kt18} and \ref{ass:ext_proxy}, the regret of any $\hat{a}$-CB policy class $\mathcal{G}^\theta_{x,\hat{a}}$ that satisfies Assumption \ref{ass:score_funct_ext} satisfies:
\begin{equation}
\sup_{P\in\mathcal P}
\mathbb E_{P^n}\!\left[W(G^*_{\theta}(X_i,A_i))-W(\hat{G}_{\theta}(X_i,\hat{A}_i))\right]
\le
C_1\frac{M}{k}\sqrt{\frac{v^\theta_{x,\hat a}}{n}}
+
M\kappa L_s \operatorname{rMSE}_m(\hat{A}_i)
\end{equation}
where $C_1$ is a universal constant, and
\begin{equation}
        \operatorname{rMSE}_m(\hat{A}_i) := \sqrt{{\mathbb{E}_P\left[ (\hat{A}_i-A_i)^2 | S_m \right]}}
\end{equation}
\end{proposition}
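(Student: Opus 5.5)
The plan is to replicate the proof of Theorem \ref{thm:ha_upper} conditionally on the auxiliary sample $S_m$. By Assumption \ref{ass:ext_proxy}.2, $S_m\perp S_n$, so once we condition on $S_m$ the map $\hat f_m$ is a fixed deterministic function. I will fix $P$ and $S_m$ and decompose regret as before:
\begin{equation*}
\underbrace{W(G^*_\theta(X_i,A_i))-W(G^*_\theta(X_i,\hat A_i))}_{I}
+\underbrace{\mathbb E_{P^n}\!\left[W(G^*_\theta(X_i,\hat A_i))-W(\hat G_\theta(X_i,\hat A_i))\,\middle|\,S_m\right]}_{II}.
\end{equation*}
Here every welfare and population quantity is understood conditionally on $S_m$.

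\textbf{Term II.} Conditional on $S_m$, the features $Z_i=(X_i,\hat f_m(X_i))$ are i.i.d. across $S_n$ and are a measurable function of $X_i$. Assumption \ref{ass:kt18}.2 then gives $(Y_i(0),Y_i(1),\hat A_i)\perp D_i\mid X_i$, and the propensity $e(X_i)$ is known. Hence Theorem 2.1 of \citet{kitagawa_who_2018} applies to the class $\mathcal G^\theta_{x,\hat a}$ with VC dimension $v^\theta_{x,\hat a}$ (Assumption \ref{ass:score_funct_ext}.1). This yields $C_1\frac{M}{k}\sqrt{v^\theta_{x,\hat a}/n}$, which is uniform in $S_m$, so integrating over $S_m$ costs nothing.

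\textbf{Term I.} I will follow the chain used for Theorem \ref{thm:ha_upper}. First,
\begin{equation*}
I\le M\sup_\theta \mathbb P(G_\theta(X_i,A_i)\ne G_\theta(X_i,\hat A_i)\mid S_m).
\end{equation*}
A disagreement forces $|s_\theta(X_i,A_i)|\le L_s|\hat A_i-A_i|$ by Assumption \ref{ass:score_funct_ext}.3. Then I condition on $X_i$.

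The main obstacle is the next step. The original proof used $\varepsilon_i\perp A_i\mid X_i$ so that it could plug in the random threshold $L_s|\varepsilon_i|$ inside the margin condition. Here $\varepsilon_i=\hat f_m(X_i)-A_i$ depends on $A_i$, so that independence is not available.

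What I will try first is to avoid the independence argument entirely. For each fixed $x$, use the event inclusion
\begin{equation*}
\{|s_\theta(x,A_i)|\le L_s|\hat f_m(x)-A_i|\}.
\end{equation*}
If this is not directly controllable, I will bound it by a layer-cake argument. Split on whether $|\varepsilon_i|\le u$. On that event, the margin condition (Assumption \ref{ass:score_funct_ext}.2, conditional on $X_i=x$ and $S_m$) gives probability at most $\kappa L_s u$. Off that event, Markov's inequality gives $\mathbb P(|\varepsilon_i|>u\mid S_m)\le \operatorname{rMSE}_m^2/u^2$. Optimizing $u$ gives a bound of order $(\kappa L_s)^{2/3}\operatorname{rMSE}_m^{2/3}$, which is weaker than the stated rate.

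To recover the linear rate $M\kappa L_s\operatorname{rMSE}_m$, I will argue that Assumption \ref{ass:ext_proxy} should be read analogously to Assumption \ref{ass:proxy}: conditional on $(X_i,S_m)$, the error $\hat A_i-A_i$ is independent of $A_i$, which is implicit in treating $\hat f_m$ as fixed noise. Under that reading, the original chain goes through verbatim:
\begin{equation*}
\le \kappa L_s\,\mathbb E[|\varepsilon_i|\mid S_m]\le \kappa L_s\operatorname{rMSE}_m(\hat A_i).
\end{equation*}
The last step uses Jensen's inequality. Combining the bounds on Terms I and II and taking the supremum over $P$ then delivers the stated inequality conditional on $S_m$.
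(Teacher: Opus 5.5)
Your treatment of Term II matches the paper's and is fine. You also identified exactly the right weak step in Term I, but your way of resolving it does not work. The ``reading'' you propose, $\hat A_i-A_i\perp A_i$ conditional on $(X_i,S_m)$, is not contained in Assumption \ref{ass:ext_proxy}, and it is in fact incoherent. Given $(X_i,S_m)$, $\hat A_i=\hat f_m(X_i)$ is a constant, so $\hat A_i-A_i$ is an injective (affine) function of $A_i$. A random variable that is independent of an injective function of itself is a.s.\ constant. Your reading therefore forces $A_i$ to equal a.s.\ some function $a(X_i)$. In that case Assumption \ref{ass:score_funct_ext}.2 can hold only if $|s_\theta(x,a(x))|\ge 1/\kappa$ for all $\theta,x$. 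That is a degenerate case, and it also excludes the paper's own construction $A_i=m(X_i)+U_i$ in Proposition \ref{prop:ha_lower_ext}. The paper's proof makes the same move silently: it bounds the conditional probability of $\{|s_\theta(X_i,A_i)|\le L_s|A_i-\hat A_i|\}$ by $\kappa L_s|A_i-\hat A_i|$. That applies the margin condition at a threshold depending on $A_i$, which is legitimate only under an independence like the one used for Theorem \ref{thm:ha_upper}. So copying the paper's chain does not close the gap either.

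The missing idea is to exploit that $\hat A_i$ is $X_i$-measurable, rather than decoupling $|s_\theta|$ from $|\varepsilon_i|$ as your truncation argument does; the decoupling is why you lose the rate. Fix $x,\theta$, and let $c=\hat f_m(x)$, $g(a)=s_\theta(x,a)$, $\Delta=|A_i-c|$, and $\delta=1/(4\kappa)$. On a disagreement, $g(A_i)$ and $g(c)$ lie on opposite sides of $0$, so $|g(c)|\le |g(A_i)-g(c)|\le L_s\Delta$.

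If $|g(c)|\ge\delta$, Markov bounds the disagreement probability by $L_s\mathbb E[\Delta\mid x]/\delta=4\kappa L_s\mathbb E[\Delta\mid x]$.

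If $|g(c)|<\delta$, then $\{L_s\Delta<\delta\}\subseteq\{|g(A_i)|<2\delta\}$. By the margin condition that event has conditional probability at most $2\kappa\delta=1/2$. Hence $L_s\mathbb E[\Delta\mid x]\ge \delta/2$, i.e.\ $8\kappa L_s\mathbb E[\Delta\mid x]\ge 1$, so the bound holds trivially.

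Integrating over $X_i$ and applying Jensen gives $I\le 8M\kappa L_s\operatorname{rMSE}_m(\hat A_i)$, with no independence assumption.

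The constant $1$ in the statement is not attainable in general. Take the following configuration:
$X_i$ degenerate; $A_i\sim\mathrm{Unif}[-1/\kappa,0]$; $\hat f_m\equiv 0$; $s_\theta(x,a)=a-\theta$ with $\Theta=\{0,-2/\kappa\}$; $Y_i(1)=-M/2$ and $Y_i(0)=M/2$.
This class has margin constant $\kappa$, $L_s=1$, and VC dimension $1$. The $A$-oracle uses $\mathbf 1\{A_i\ge 0\}$, treats nobody, and earns $M/2$. At $\hat A_i=0$ both rules treat everybody and earn $-M/2$, so regret equals $M$. The right-hand side is $C_1\frac{M}{k}\sqrt{1/n}+M/\sqrt3$, which is below $M$ for large $n$.

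Replacing your independence reading by the case split above makes your argument valid. What it proves is the proposition with the second term multiplied by a universal constant (e.g.\ $8$), which is harmless for the rate-sharpness conclusions.
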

The formal proof is reported in Appendix \ref{app:formal_proofs_ext}.

\begin{proposition}[Minimax lower bound for $\hat{a}$-Augmented rules when $\hat{a}$ is learned externally]
\label{prop:ha_lower_ext}
Fix $\rho>0$. Let $\mathcal{P}(\rho)$ denote the class of data-generating processes $P$ satisfying Assumptions \ref{ass:kt18} and \ref{ass:ext_proxy}, and such that $\operatorname{rMSE}_m(\hat{A}_i)\le \rho\leq 1/(2\kappa)$.
Then, for any class $\mathcal G^\theta_{x,\hat{a}}$ that satisfies Assumption \ref{ass:score_funct_ext}, there exist universal constants $C_2>0$ and $C_3>0$ such that:
\begin{equation}\label{eq:ha_minimax_lower}
\inf_{\{\hat{G}_{\theta}(X_i,\hat{A}_i)\}}
\sup_{P\in \mathcal P(\rho)}
\mathbb E_{P^n}\!\left[W(G^*_{\theta}(X_i,A_i))-W(\hat{G}_{\theta}(X_i,\hat{A}_i))\right]
\ge
C_2 \frac{M}{k}\sqrt{\frac{v^\theta_{x,\hat a}}{n}}
+
C_4  M \kappa \rho
\end{equation}
\end{proposition}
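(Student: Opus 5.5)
The argument will mirror the proof of Theorem \ref{thm:ha_lower}, carried out conditionally on the auxiliary sample $S_m$. Since $\hat f_m$ is learned on $S_m\perp S_n$ and then held fixed, $\hat A_i=\hat f_m(X_i)$ is a deterministic function of $X_i$ given $S_m$. Every quantity in the regret is therefore an expectation over $S_n$ with $S_m$ fixed. I would lower bound the minimax risk $\mathcal R_n^{\hat a}$ separately by an information-loss term and a statistical term, and then combine them through $\max\{u,v\}\ge (u+v)/2$, exactly as in Step 3 of the earlier proofs.

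\textbf{Information-loss term.} The construction in the proof of Theorem \ref{thm:ha_lower} cannot be copied verbatim: there $\hat A_i=A_i+\varepsilon_i$ with $\varepsilon_i$ independent noise, whereas now $\hat A_i$ must be a function of $X_i$. I would instead put the noise into the latent factor relative to the covariates. Take $X_i$ with $A_i=h(X_i)+\eta_i$, where $h$ is known and $\eta_i\in\{-\rho,+\rho\}$ is equiprobable and independent of $X_i$. Choose $h(X_i)$ so that $A_i$ still has a density giving $\mathbb P(|A_i|<t\mid X_i,S_m)\le\kappa t$; for instance, let $h(X_i)$ be uniform on $[-1/\kappa,1/\kappa]$, possibly after convolving with a tiny smoothing term so the conditional margin condition holds. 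Let the external estimator output $\hat f_m(x)=h(x)$, or take $S_m$ degenerate. Then $\operatorname{rMSE}_m(\hat A_i)=\rho$, and $\hat A_i$ plays the role of $A_i+\varepsilon_i$ with $\varepsilon_i=-\eta_i$. Set $\tau_i=M\,\mathrm{sign}(A_i)$ as before. The welfare identity $W(G)=\frac M2(1-2\mathbb P(G\neq\mathbf 1\{A_i\ge0\}))$ and the Bayes-error computation on $\{|\hat A_i|<\rho\}$ then go through unchanged, giving a loss of at least $\frac M2\kappa\rho$ for every rule in $\mathcal G^\theta_{x,\hat a}$, and hence for every estimator.

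\textbf{Main obstacle.} The delicate point is verifying Assumption \ref{ass:score_funct_ext}.2: the oracle score $s(x,a)=a$ must satisfy the conditional margin condition \emph{given $X_i=x$}. In the construction above, $A_i$ given $X_i$ is two-point, so the condition fails. My first attempt would be to let $X_i$ carry only a coarse signal, with $A_i=h(X_i)+\eta_i+\xi_i$ where $\xi_i\sim\mathrm{Unif}[-1/\kappa,1/\kappa]$ is independent of $X_i$ and $\eta_i$. This restores a conditional density bounded by $\kappa/2$. I would then redo the Bayes-error bound on the event that $\xi_i+\eta_i$ crosses zero ambiguously. That event has probability of order $\kappa\rho$, up to a constant absorbed into $C_4$, while the rMSE is increased by the variance of $\xi_i$. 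To keep the rMSE at most $\rho$, I would in that case let $\hat f_m$ depend on an auxiliary covariate component that reveals $\xi_i$ exactly, so that only the $\eta_i$ noise remains in the proxy.

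\textbf{Statistical term.} Take the Kitagawa--Tetenov subclass $\mathcal P^*$ from Theorem 2.2 of \citet{kitagawa_who_2018}, with covariates placed on a set shattered by $\mathcal G^\theta_{x,\hat a}$. Choose $A_i$ to be a component of $X_i$ and $\hat f_m$ to return it exactly, so that $\operatorname{rMSE}_m=0\le\rho$ and $G^*_\theta(X_i,\hat A_i)=G^*_\theta(X_i,A_i)$. This gives $\mathcal R_n^{\hat a}\ge C_K\frac Mk\sqrt{v^\theta_{x,\hat a}/n}$, and combining the two terms yields the stated constants.
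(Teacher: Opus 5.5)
\textbf{The gap is in the step you flag as the main obstacle, and your repair does not close it.} If an auxiliary covariate component reveals $\xi_i$, then $\xi_i$ is $X_i$-measurable. Conditional on $X_i=x$, the latent factor is then again the two-point variable $h(x)+\xi(x)\pm\rho$, and Assumption \ref{ass:score_funct_ext}.2 fails exactly as before. If $\xi_i$ is not revealed, then $\operatorname{rMSE}_m(\hat A_i)^2\ge\mathbb V(\eta_i+\xi_i)=\rho^2+1/(3\kappa^2)>\rho^2$. Either way your DGP leaves $\mathcal P(\rho)$. The statistical step has the same tension: if $A_i$ is a component of $X_i$, it is degenerate given $X_i$, and the conditional margin condition then forces $|s_\theta(x,a(x))|\ge 1/\kappa$ for every $\theta$ at every support point, which you have not checked.

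Your diagnosis is correct, and it applies to the paper's proof as well. Your first construction is essentially the paper's route. The paper takes $A_i=m(X_i)+U_i$ with $U_i\sim\mathrm{Unif}[-r,r]$ and $\hat A_i=m(X_i)$, obtains misclassification probability $\kappa r/4$, and sets $C_4=\sqrt3/8$. There $A_i\mid X_i=x$ is uniform with half-width $r\le\sqrt3\rho\le\sqrt3/(2\kappa)<1/\kappa$. Hence $\mathbb P(|A_i|<t\mid X_i=x)=t/r>\kappa t$ whenever $m(x)=0$ and $t\le r$. The margin bound holds only for the marginal law of $A_i$, and the paper never checks the conditional version.

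No other construction can close this gap, because under the conditional margin condition the $\kappa\rho$ term cannot be attained. Fix $S_m$ and $x$, let $\theta^*$ index the oracle rule, and set $S:=s_{\theta^*}(x,A_i)$, $\sigma:=s_{\theta^*}(x,\hat f_m(x))$, and $D:=\{\mathbf 1\{S\ge0\}\neq\mathbf 1\{\sigma\ge0\}\}$. On $D$ we have $|S-\sigma|\ge|\sigma|$.
\begin{itemize}
\item If $|\sigma|\ge1/(4\kappa)$, then $\mathbb P(D\mid x)\le16\kappa^2\,\mathbb E[(S-\sigma)^2\mid x]$.
\item If $|\sigma|<1/(4\kappa)$, the margin condition at $t=1/(2\kappa)$ gives $\mathbb P(|S|\ge1/(2\kappa)\mid x)\ge1/2$. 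Hence $\mathbb E[(S-\sigma)^2\mid x]\ge1/(32\kappa^2)\ge\mathbb P(D\mid x)/(32\kappa^2)$.
\end{itemize}
Combining this with Lipschitz continuity and with $W(G^*_\theta(X_i,\hat A_i))\ge W(G_{\theta^*}(X_i,\hat A_i))$ yields
\[
W(G^*_\theta(X_i,A_i))-W(G^*_\theta(X_i,\hat A_i))\le M\,\mathbb P(D)\le 32\,M\kappa^2L_s^2\rho^2 .
\]
Take the threshold score $s(x,a)=a$ used in your construction, so $L_s=1$, and suppose $\kappa\rho<C_4/32$. Adding the Kitagawa--Tetenov bound on the statistical term puts the minimax regret over $\mathcal P(\rho)$ strictly below $C_4M\kappa\rho$ for all large $n$. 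So, with Assumption \ref{ass:score_funct_ext}.2 imposed on $\mathcal P(\rho)$, the claimed bound fails for small $\kappa\rho$.

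The $\kappa\rho$ information-loss term is obtainable only if the margin requirement on the lower-bound DGP is imposed unconditionally, for example by requiring the marginal density of $A_i$ to be at most $\kappa/2$. That is what the paper's construction actually satisfies. Under that reading, your first construction already delivers the term as it stands, with misclassification $\kappa\rho/2$ exactly as in Theorem \ref{thm:ha_lower}, and the $\xi_i$ detour should be dropped.
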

The formal proof is reported in Appendix \ref{app:formal_proofs_ext}.

\subsubsection{Formal Proofs} \label{app:formal_proofs_ext}
\begin{proof}[\textbf{Proof of Proposition \ref{prop:ha_upper_ext}}]
Decompose regret as
\begin{equation}
R(\hat{G}_{\theta}(X_i,\hat{A}_i))
=
\underbrace{W(G^*_{\theta}(X_i,A_i))-W(G_\theta^* (X_i, \hat{A}_i))}_{I}
+
\underbrace{\mathbb E_{P^n}\!\left[W(G_\theta^* (X_i, \hat{A}_i))-W(\hat{G}_{\theta}(X_i,\hat{A}_i))\right]}_{II}
\end{equation}

\textit{\textbf{Bounding $I$.}} Rewrite term $I$ as:
\begin{align}
    I 
    &= W(G^*_{\theta}(X_i,A_i)) - W(G^*_{\theta}(X_i,\hat{A}_i)) \\
    & \text{by definition of $W(G^\theta_z)$ and $G^\theta_z$:} \\
    &= \mathbb{E}_P[\tau_i G_\theta^*(X_i,A_i)] - \mathbb{E}_{P}[\tau_iG_{\theta}^*(X_i,\hat{A}_i)] \\
    & \text{by definition of $G_\theta^*(X_i,A_i)$:} \\
    &= \sup_{\theta\in\Theta}\{\mathbb{E}_P[\tau_i G_{\theta}(X_i,A_i)]\} - \sup_{\theta \in \Theta}\{\mathbb{E}_P[\tau_i G_{\theta}(X_i,\hat{A}_i)]\} \\
    & \leq \sup_{\theta\in\Theta}\{\mathbb{E}_P[\tau_i \cdot(G_{\theta}(X_i,A_i) - G_{\theta}(X_i,\hat{A}_i))]\} \\
    & \leq \sup_{\theta\in\Theta}\{\mathbb{E}_P[|\tau_i|\cdot \mathbf{1}\{G_{\theta}(X_i,A_i) \neq G_{\theta}(X_i,\hat{A}_i)\}]\} \\
    & \leq M\cdot\sup_{\theta\in\Theta} \{ \mathbb{P}_P(G_\theta(X_i,A_i) \neq G_{\theta}(X_i,\hat{A}_i))\}
\end{align}

Now, notice that:
\begin{equation}
    \{G_{\theta}(X_i,A_i) \neq G_{\theta}(X_i,\hat{A}_i)\} \subseteq \{|s_{\theta}(X_i,A_i)| \leq |s_{\theta}(X_i,A_i)-s_{\theta}(X_i,\hat{A}_i)|\}
\end{equation}
And, by Assumption \ref{ass:score_funct_ext}.3 (Lipschitz score function):
\begin{equation}
    \sup_{\theta \in \Theta} |s_{\theta}(X_i,A_i)-s_{\theta}(X_i,\hat{A}_i)| \leq L_s |A_i-\hat{A}_i|
\end{equation}
Therefore,
\begin{align}
    \sup_{\theta\in\Theta} \{ \mathbb{P}_P(G_\theta(X_i,A_i) & \neq G_{\theta}(X_i,\hat{A}_i))\}
     \leq \sup_{\theta \in \Theta}\mathbb{P}_P(|s_\theta(X_i,A_i)|\leq  L_s |A_i-\hat{A}_i|) \\ 
    & = \sup_{\theta \in \Theta}\mathbb{E}_{X,A|S_m}\!\left[\mathbf{1}\{|s_\theta(X_i,A_i)| \leq L_s |A_i-\hat{A}_i|\}\mid S_m\right] \\
    & = \sup_{\theta \in \Theta}\mathbb{E}_{X,A|S_m}\!\left[\mathbb{P}_{P|X,S_m}\!\left(|s_\theta(X_i,A_i)| \leq L_s |A_i-\hat{A}_i|\mid X_i, S_m\right)\right] \\
    & \leq \sup_{\theta \in \Theta}\mathbb{E}_{X,A|S_m}\!\left[\mathbb{P}_{P|X,S_m}\!\left(|s_\theta(X_i,A_i)| \leq L_s |A_i-\hat{A}_i|\mid X_i, S_m\right)\right]\\
    & \text{by Assumption \ref{ass:score_funct_ext}.2:} \\
    & \leq \mathbb{E}_{X,A|S_m}\!\left[\kappa L_s |A_i-\hat{A}_i|\mid S_m \right]\\
    & = \kappa L_s\mathbb{E}_{X,A|S_m}[|A_i-\hat{A}_i| \mid S_m]\\
    & \text{by }|A_i-\hat{A}_i| = \sqrt{(A_i-\hat{A}_i)^2}\text{ and Jensen's inequality:} \\
    & \leq \kappa L_s \sqrt{\mathbb{E}_{X,A|S_m}[(A_i-\hat{A}_i)^2\mid S_m]} \\
    & = \kappa L_s\operatorname{rMSE}_m(\hat{A}_i)
\end{align}

Therefore, we can conclude:
\begin{equation}
    I = W(G^*_{\theta}(X_i,A_i)) - W(G^*_{\theta}(X_i,\hat{A}_i))
    \leq M \kappa L_s \operatorname{rMSE}_m(\hat{A}_i)
\end{equation}

\textit{\textbf{Bounding $II$.}}
Conditional on $S_m$ (hence on $\hat A_i=\hat f_m(X_i)$), the sample $\{(Y_i,X_i,\hat A_i,D_i)\}_{i=1}^n$ is i.i.d.
and Assumption \ref{ass:score_funct_ext}.1 holds with VC dimension $v_{x,\hat a}^\theta$.
Therefore, conditional on $S_m$, Theorem 2.1 of \citet{kitagawa_who_2018} implies
\begin{equation}
\mathbb E_{P^n}\!\left[W(G_\theta^* (X_i, \hat{A}_i))-W(\hat{G}_{\theta}(X_i,\hat{A}_i)) \mid S_m \right]
\le
C \frac{M}{k}\sqrt{\frac{v^\theta_{x,\hat a}}{n}}
\end{equation}
for a universal constant $C>0$. Since $S_n \perp S_m$,
\begin{equation}
\mathbb E_{P^n}\!\left[W(G_\theta^* (X_i, \hat{A}_i))-W(\hat{G}_{\theta}(X_i,\hat{A}_i))\right]
\le
C \frac{M}{k}\sqrt{\frac{v^\theta_{x,\hat a}}{n}}
\end{equation}

\textit{\textbf{Combining the two bounds.}}
Combining the upper bounds on $I$ and $II$ yields
\begin{equation}
\sup_{P\in\mathcal P}
\mathbb E_{P^n}\!\left[W(G^*_{\theta}(X_i,A_i))-W(\hat{G}_{\theta}(X_i,\hat{A}_i))\right]
\le
C\frac{M}{k}\sqrt{\frac{v^\theta_{x,\hat a}}{n}}
+
M\kappa L_s \operatorname{rMSE}_m(\hat{A}_i)
\end{equation}
which proves the claim.
\end{proof}

\begin{proof}[\textbf{Proof of Proposition \ref{prop:ha_lower_ext}}]
Let $X_i$ be such that there exists a function $m:\mathcal{X}\rightarrow[-\frac{1}{\kappa},\frac{1}{\kappa}]$ such that:
\begin{equation}
m(X_i) \sim \mathrm{Unif}\!\left[-\frac{1}{\kappa},\frac{1}{\kappa}\right]
\end{equation}
Define latent heterogeneity as
\begin{equation}
A_i := m(X_i) + U_i,
\qquad
U_i \sim \mathrm{Unif}[-r,r],
\qquad
U_i \perp m(X_i)
\end{equation}
where $r>0$ will be chosen below.
Define potential outcomes by
\begin{equation}
Y_i(0):=-\frac{M}{2}\operatorname{sign}(A_i),
\qquad
Y_i(1):=+\frac{M}{2}\operatorname{sign}(A_i)
\end{equation}
(with any convention $\operatorname{sign}(0)\in\{-1,+1\}$).
Then $|Y_i(d)|\le M/2$, so Assumption \ref{ass:kt18}.1 holds, and
\begin{equation}
\tau_i = Y_i(1)-Y_i(0)=M \operatorname{sign}(A_i)
\end{equation}
Let $D_i\sim\mathrm{Bernoulli}(p)$ independent of $(X_i,A_i,Y_i(0),Y_i(1))$
with $p\in(k,1-k)$, so Assumptions \ref{ass:kt18}.2--3 hold.

Suppose the proxy is constructed as $\hat A_i=\hat{f}_m(X_i)$ where $f$ is learned on an auxiliary sample of size $m$ independent of the policy-learning sample and then treated as fixed.
The population-optimal mapping (in mean squared error) satisfies
\begin{equation}
f^*(X_i)
\in
\arg\min_{f:\mathcal X\to\mathbb R}
\mathbb E_P[(A_i-f(X_i))^2]
\end{equation}
Since $\mathbb E[U_i]=0$ and $U_i\perp X_i$, the minimizer is
\begin{equation} \label{eq:optimal_f}
f^*(X_i)=\mathbb E[A_i\mid X_i]=m(X_i)
\end{equation}
Define $\hat A_i:=f^*(X_i)=m(X_i)$.
Then the proxy error equals
\begin{equation}
\varepsilon_i:=\hat A_i-A_i=-U_i
\end{equation}
so
\begin{equation}
\mathbb E[\varepsilon_i^2]=\mathbb E[U_i^2]=\frac{r^2}{3},
\qquad
\sqrt{\mathbb E[\varepsilon_i^2]}=\frac{r}{\sqrt{3}}
\end{equation}

Consider the threshold rule $g(a)=\mathbf 1\{a\ge 0\}$,
which belongs to $\mathcal G^\theta_{x,a}$ and satisfies
Assumption \ref{ass:score_funct} with margin constant $\kappa$
and Lipschitz constant $L_s=1$.

The oracle rule that observes $A_i$ treats iff $A_i\ge 0$ and attains
\begin{equation}
W(G^*_{\theta}(X_i,A_i))=\frac{M}{2}
\end{equation}
Given only $(X_i,\hat A_i)$, the Bayes-optimal feasible rule is
\begin{equation}
G_\theta^* (X_i, \hat{A}_i)
=
\mathbf 1\{\hat A_i\ge 0\}
=
\mathbf 1\{m(X_i)\ge 0\}
\end{equation}

We compute its misclassification probability.
Conditional on $m:=m(X_i)$,
\begin{equation}
\mathbb P(A_i<0\mid m)
=
\mathbb P(m+U_i<0\mid m)
=
\begin{cases}
\frac{r-m}{2r}, & 0\le m< r,\\
0, & m\ge r
\end{cases}
\end{equation}
and symmetrically for $m<0$.
Hence
\begin{equation}
\mathbb P\!\left(
G_\theta^* (X_i, \hat{A}_i)
\neq
\mathbf 1\{A_i\ge 0\}
\right)
=
\mathbb E\!\left[
\frac{r-|m(X_i)|}{2r}
\mathbf 1\{|m(X_i)|<r\}
\right]
\end{equation}

Since $m(X_i)\sim\mathrm{Unif}[-1/\kappa,1/\kappa]$,
its density on $[-1/\kappa,1/\kappa]$ equals $\kappa$.
For $r\le 1/\kappa$,
\begin{align}
\mathbb P\left( G_\theta^* (X_i, \hat{A}_i) \neq \mathbf 1\{A_i\ge 0\} \right)
&=
\int_{0}^{r}
\frac{r-u}{2r} \kappa du
=
\frac{\kappa}{2r}
\left[
ru-\frac{u^2}{2}
\right]_{0}^{r} \\
&=
\frac{\kappa r}{4}
\end{align}

Substituting $r=\sqrt{3}\sqrt{\mathbb E[\varepsilon_i^2]}$ gives
\begin{equation}
\mathbb P\left( G_\theta^* (X_i, \hat{A}_i) \neq \mathbf 1\{A_i\ge 0\} \right)
=
\frac{\kappa\sqrt{3}}{4}
\sqrt{\mathbb E[\varepsilon_i^2]}
\end{equation}

Therefore,
\begin{equation}
W(G(Z_i))
=
\frac{M}{2}
\left(
1-2 \mathbb P\!\left(
G(Z_i)\neq \mathbf 1\{A_i\ge 0\}
\right)
\right)
\end{equation}
so
\begin{equation}
W(G^*_{\theta}(X_i,A_i))-W(G_\theta^* (X_i, \hat{A}_i))
=
\frac{\kappa\sqrt{3}}{4}
M
\sqrt{\mathbb E[\varepsilon_i^2]}
\end{equation}

The remainder of the minimax lower-bound proof follows the same steps as in the measurement-error case:
(i) invoke the VC lower bound of \citet[Theorem 2.2]{kitagawa_who_2018} to obtain the statistical term
$C_K \frac{M}{k}\sqrt{v_{x,\hat a}/n}$ on a finite subclass in which $G^*_{\theta}(X_i,A_i)=G_\theta^* (X_i, \hat{A}_i)$,
and (ii) combine the proxy-information and statistical lower bounds to conclude
\begin{equation}
\inf_{\{\hat{G}_{\theta}(X_i,\hat{A}_i)\}}
\sup_{P\in\mathcal P(\rho)}
\mathbb E_{P^n}\!\left[W(G^*_{\theta}(X_i,A_i))-W(\hat{G}_{\theta}(X_i,\hat{A}_i))\right]
\ge 
C_K \frac{M}{k}\sqrt{\frac{v^\theta_{x,\hat a}}{n}} + C_4 M\kappa \rho
\end{equation}
where $C_4:= \sqrt{3}/8$.
\end{proof}

\end{document}